\documentclass[cmp]{svjourmod}

\usepackage[pdftex]{graphicx}
\usepackage{amssymb}
\usepackage{bm}
\usepackage{psfrag}
\usepackage{amsmath}
\usepackage{amsfonts}
\usepackage{mathrsfs}
\usepackage{cite}

\newcommand{\p}{\partial}

\newcommand{\dd}{{\rm d}}

\hyphenation{Lo-ren-tzian}

\begin{document}

\title{Area theorem and smoothness of compact Cauchy horizons}

\author{E. Minguzzi}
\institute{Dipartimento di Matematica e Informatica ``U. Dini'', Universit\`a degli
Studi di Firenze,  Via S. Marta 3,  I-50139 Firenze, Italy \\
\email{ettore.minguzzi@unifi.it} }
\authorrunning{E. Minguzzi}


\date{}

\maketitle

\begin{abstract}
\noindent
We obtain an improved version of the area theorem for not necessarily differentiable horizons which, in conjunction with a recent result on the completeness of generators, allows us to prove that under the null energy condition every compactly generated Cauchy horizon is smooth and compact. We explore the consequences of this result for  time machines, topology change, black holes and cosmic censorship. For instance, it is shown that compact Cauchy horizons cannot form in a non-empty spacetime which satisfies the stable dominant energy condition wherever there is some source content.
\end{abstract}

\setcounter{tocdepth}{2}
\tableofcontents

\section{Introduction}
Many classical results of mathematical relativity have been proved under differentiability assumptions on event or Cauchy horizons.
A lot of research has been recently devoted to the removal of these conditions on horizons, particularly in the study of horizon symmetries \cite{friedrich99}  or in the generalization of the area theorem \cite{chrusciel01}.
For instance, in a recent work \cite{minguzzi14} we proved the following result
\begin{theorem} \label{one}
Let  $H^{-}(S)$ be a compactly generated past Cauchy horizon  for some  partial Cauchy hypersurface $S$, then $H^{-}(S)$ has future complete generators (and dually).
\end{theorem}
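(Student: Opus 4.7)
The plan is to argue by contradiction: suppose some future--directed generator $\eta:[0,a)\to H^{-}(S)$ is future inextendible with finite affine parameter $a<\infty$. Since $H^{-}(S)$ is compactly generated, past some parameter value $\eta$ enters and remains in a compact set $K\subset H^{-}(S)$, so $\eta$ is totally future--imprisoned in $K$. The idea is to extract from this imprisonment a ``better'' null curve $\sigma$ in the $\omega$--limit set of $\eta$ and then argue that the good completeness properties of $\sigma$ propagate to $\eta$, contradicting $a<\infty$.

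The first step is to apply a limit curve theorem for causal curves imprisoned in a compact set to a sequence of tails of $\eta$. This produces an inextendible null limit curve $\sigma\subset K\cap H^{-}(S)$. Achronality of $H^{-}(S)$ forces $\sigma$ to be an achronal null geodesic, hence itself a generator of $H^{-}(S)$; standard dichotomies for inextendible achronal causal curves imprisoned in compact sets then leave two possibilities, namely $\sigma$ is either a closed null geodesic or an inextendible achronal null line of infinite affine length in both directions.

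Next I would set up a Fermi--type tubular neighbourhood of $\sigma$ (respectively a Poincar\'e cross--section in the closed case) and use the smooth dependence of null geodesics on initial data, combined with the achronality of $H^{-}(S)$, to derive uniform affine--length estimates. The curve $\eta$ must either perform infinitely many near--loops around a closed $\sigma$ or shadow arbitrarily long pieces of a non--closed $\sigma$; in either situation, achronality forbids chronological shortcuts between points of $\eta$ and $\sigma$, so each near--loop, or each unit of shadowing, contributes a uniformly positive affine increment to $\eta$. Summing these increments forces the total future affine length of $\eta$ to diverge, contradicting $a<\infty$.

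The main obstacle is this last quantitative step: the affine parameter is not a metric invariant of the image of a null curve, so a priori $\eta$ might traverse long portions of $\sigma$ while accruing only arbitrarily small affine increments. What rescues the argument is that $\eta$ and $\sigma$ both lie on the achronal horizon: any mechanism compressing the affine parameter along $\eta$ close to $\sigma$ would have to create chronological relations among horizon points that achronality prohibits. Turning this heuristic into a uniform lower bound on the ratio of affine parameters in a tubular neighbourhood of $\sigma$ is the technical heart of the proof, and is the step that genuinely uses compact generation rather than mere compactness of some piece of the horizon.
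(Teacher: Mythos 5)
You should first be aware that the paper contains no proof of Theorem \ref{one}: it is imported wholesale from \cite{minguzzi14} (with a simplified proof later given in \cite{krasnikov14}), so your attempt has to be judged as a free-standing argument. Your skeleton — future imprisonment in $K$, extraction of a limit null geodesic $\sigma$ by a limit-curve theorem, achronality forcing $\sigma$ to be an achronal null geodesic, then a quantitative affine-length estimate — is indeed the classical Hawking--Ellis route \cite[Lemma 8.5.5]{hawking73} that \cite{minguzzi14} repairs. But the proposal has a genuine gap exactly where you flag one: the assertion that ``each near-loop, or each unit of shadowing, contributes a uniformly positive affine increment to $\eta$'' is precisely the statement that is false for imprisoned null geodesics in general. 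There exist future-inextendible null geodesics totally imprisoned in a compact set with finite affine length (e.g.\ the geodesics spiralling onto the closed null geodesic of Misner space, or on the Clifton--Pohl torus): on each near-return the affine parametrization is rescaled by a holonomy-type factor, and nothing about imprisonment prevents the infinite product of these factors from converging to zero. The entire content of the theorem is the mechanism by which achronality of $H^{-}(S)$, together with the fact that the generators lie on the boundary of $D^{-}(S)$, forbids this contraction; you gesture at it (``achronality forbids chronological shortcuts'') but never construct the timelike variation that a contraction would produce, and that construction \emph{is} the proof. A correct argument must exhibit the multiplicative cocycle of normalization factors $-g(n,V)$ along successive near-returns and show its partial products are bounded away from zero, so that the affine length is bounded below by a divergent series.

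Two subsidiary problems. First, your dichotomy asserts that the non-closed alternative for $\sigma$ is an achronal null line ``of infinite affine length in both directions'': a lightlike line is by definition inextendible and achronal, but completeness is not part of that definition, and since $\sigma$ is itself a generator of $H^{-}(S)$, asserting its completeness at this stage assumes a special case of the very theorem being proved. Second, even granting a complete $\sigma$, the Fermi-tube comparison transfers affine length from $\sigma$ to $\eta$ only through exactly the uniform comparison of affine parametrizations that is missing; smooth dependence on initial data gives such a comparison over one fixed compact parameter window, not uniformly over infinitely many returns. In short: the right skeleton, but the load-bearing step is absent.
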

soon followed by a simplified proof by Krasnikov \cite{krasnikov14}.
This result was originally established by Hawking and Ellis \cite[Lemma 8.5.5]{hawking73} under tacit differentiability assumptions.  We mention this new version since it will play a key role in what follows.

 It has been observed that imposing strong differentiability properties, and possibly even analyticity, on the spacetime manifold, metric, or Cauchy hypersurfaces does not guarantee that the horizons will be differentiable. Indeed, an example by Budzy\'nski, Kondraki and Kr\'olak \cite{budzynski03} shows that non-differentiable compact Cauchy horizons may still form.

 In this work we wish to improve further our knowledge of general horizons, first proving a strong version of the area theorem and then showing that under the null convergence condition, which is a weak  positivity condition on the energy density,  compact Cauchy horizons are actually as regular as the metric.
 Thus, the differentiability of horizons follows from   physical conditions and has, in turn, physical consequences, most notably for topology change \cite{geroch67,chrusciel93,chrusciel94}. It is therefore reasonable to maintain that the differentiability of horizons has deep physical significance.

We recall the present status of knowledge on the differentiability properties of  horizons \cite{chrusciel98,beem98,chrusciel98b,chrusciel02}.

\begin{theorem} \label{jpf}
Let $H$ be a horizon.
 \begin{enumerate}
 \item  $H$ is differentiable  at a point $p$ if and only if $p$ belongs to just one generator.
\item  Let $\mathcal{D}$ be the subset of differentiability points of $H$. Then $H$ is $C^1$ on $\mathcal{D}$ endowed with  the induced topology.
 \item  $H$ is $C^1$ on an open set $O$ if and only if $O$ does not contain any endpoint.
 \item If $p$ is the endpoint of just one generator then $H$ is not differentiable in any neighborhood of $p$.
 \end{enumerate}
\end{theorem}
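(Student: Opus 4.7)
My plan is to reduce all four claims to Alexandrov-type regularity statements for semi-convex functions, using the standard local representation of a horizon $H$ as the graph of such a function $f$ on a spacelike slice (in coordinates adapted to a future-directed timelike vector field, as in \cite{chrusciel98,chrusciel02}). In this picture, generators of $H$ correspond to affine segments along which the supporting null hyperplanes of $\mathrm{graph}(f)$ are constant, and every supporting null hyperplane at a point $p\in H$ is induced by some generator through $p$.

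For part (1) I would combine this correspondence with the classical fact that a semi-convex function is differentiable at $q$ iff its superdifferential $\partial^{+} f(q)$ is a singleton; uniqueness of the supporting null hyperplane at $p$ is then equivalent to uniqueness of the generator through $p$. Part (2) follows from the standard Rademacher/Alexandrov theorem that a semi-convex function is $C^{1}$ on its set of differentiability endowed with the induced topology, because $\partial^{+} f$ is upper semicontinuous in the Hausdorff sense and reduces to $\nabla f$ on $\mathcal{D}$. For part (3), a multi-generator point is always the past endpoint of all but at most one of the generators through it, so an open $O$ free of endpoints consists only of one-generator points; then $O\subset\mathcal{D}$ and (2) yields $C^{1}$ regularity. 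The converse implication in (3) uses (1) together with (4).

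Part (4) is the real obstacle, because by (1) the point $p$ is itself a differentiability point, so non-differentiability must be produced in every neighborhood of $p$ without being present at $p$ itself. The strategy I would follow is a contradiction argument: assume that some neighborhood $U$ of $p$ in $H$ is contained in $\mathcal{D}$; by (2), $U$ is then a $C^{1}$ null hypersurface carrying a continuous null direction field, and the integral curve of that field through $p$ (which by achronality is a null pregeodesic contained in $H$) extends the original generator beyond its endpoint, contradicting the hypothesis that $p$ is an endpoint. Consequently every neighborhood of $p$ must contain a point outside $\mathcal{D}$, which by (1) is a multi-generator point and hence a point of non-differentiability. Making this argument rigorous — in particular, ensuring that the extended integral curve really acts as a generator and that the $C^{1}$ regularity on $U$ does not degenerate pathologically at $p$ — is the delicate technical step, and I expect it to require a limit-curve construction together with the structural results on endpoint sets in \cite{beem98,chrusciel98b}.
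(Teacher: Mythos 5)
Your proposal is correct and follows essentially the same route as the paper: parts (1)--(2) via the semi-convex graph representation and the correspondence between supporting null hyperplanes and generators (exactly the independent proof the paper develops in its section on differentiability properties of the horizon), and the endpoint characterization via the integral curve (Peano existence for the merely continuous null direction field, plus achronality forcing it to be a geodesic) on a $C^1$ region. The only difference is a cosmetic reordering: you prove (4) directly by the integral-curve contradiction and deduce the converse of (3) from (1) and (4), whereas the paper proves the converse of (3) directly by that same argument and deduces (4) from (3); the underlying reasoning is identical and non-circular in both arrangements.
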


  The first (and second) statement of Theorem \ref{jpf} will be given an independent proof based solely on the semi-convexity of the horizon in Sect.\ \ref{dif}. In order to make the paper self  contained we outline the proofs of the other statements, as they are very instructive. They can be skipped on first reading.

\begin{proof}[Outline]
 The second statement as given in \cite{beem98} by Beem and Kr\'olak is weaker since they assume that the horizon is differentiable on an open set. However, the proof does not depend on this assumption and is simple: if the future-directed lightlike tangents to the generators of   $H$ (semitangents) at $\mathcal{D}$, normalized with respect to an auxiliary Riemannian metric, did not converge to that of $q\in \mathcal{D}$, then by a limit curve argument there would be at least two generators passing through $q$ in contradiction with the differentiability of $H$ at $q$.

In one direction the third statement follows from the first two, indeed if $O$ does not contain any endpoint then $H$ is differentiable in $O$ by the first statement and hence $C^1$ in $O$ by the second statement. The other direction follows from this idea: let $p\in O$, then there is a lightlike vector field $n$ tangent to $H$ which is continuous on $O$, thus by Peano existence theorem \cite[Theor.\ 2.1]{hartman64} there is a $C^1$ curve $\gamma$ contained in $H$, with tangent $n$ which passes through $p$. Since $n$ is lightlike, $\gamma$ is lightlike, and since $H$ is locally achronal, $\gamma$ is achronal and hence a geodesic. Thus it coincides with a  segment of generator of $H$ and $p$ belongs to the interior of a generator, hence it is not an endpoint.

The fourth statement follows from the third: every open neighborhood $O$ of $p$ must contain points where $H$ is not differentiable otherwise $H$ would be $C^1$ in $p$ and hence there would be no endpoint in $O$, a contradiction since $p$ is an endpoint.  $\square$
\end{proof}

Beem and Kr\'olak also obtained a further result on the differentiability properties of compact Cauchy horizons\cite[Theor.\ 4.1]{beem98}
\begin{enumerate}
\item[{\em 5}.] Let $S$ be a partial Cauchy hypersurface and assume  the null convergence condition. Assume that $H^{-}(S)$ is compact  and contains an open set $G$ such that $H^{-}(S)\backslash G$ has vanishing Lebesgue ($\mathcal{L}^n$) measure. Then $H^{-}(S)$ has no endpoint and is $C^1$.
\end{enumerate}
In this statement they  assumed the existence of the open set $G$ because they applied a flow argument by Hawking \cite[Eq.\ (8.4)]{hawking73}, which holds only for differentiable horizons. This argument was used by Hawking in some proofs \cite{hawking73,hawking92} including that of the area theorem, which has been subsequently generalized in a work by Chru{\'s}ciel  Delay,  Galloway and  Howard \cite{chrusciel01} to include the case of non-differentiable horizons. These authors briefly considered whether their area theorem could be used to remove the condition on the existence of $G$, but their analysis was inconclusive in this respect. So the problem of the smoothness of compact Cauchy horizon remained open so far.

It should be mentioned that Beem and  Kr\'olak's strategy requires  the proof of the future completeness of the generators for non-differentiable horizons. As we mentioned, we proved this result in a recent work \cite{minguzzi14}.

We shall prove the smoothness of compact Cauchy horizons under the null convergence condition without using a flow argument but rather passing through  a stronger form of the area theorem. The proof will involve quite advanced results from analysis and geometric measure theory, including recent results on the divergence theorem and regularity results for solutions to quasi-linear elliptic PDEs.

In the last section we shall apply this result to a classical problem in general relativity: can a civilization induce a local change in the space topology or create a region of chronology violation (time machine)? We shall show that under the null convergence condition both processes are impossible in classical GR. We shall also prove the classical result according to which the area of event horizons is non-decreasing, and that it increases whenever there is a change in the topology of the horizon.

We now mention the negative results on the differentiability properties of horizons. A Cauchy horizon which is $C^1$ need not be $C^2$ (cf.\ \cite{beem98}). A compact Cauchy horizon with edge  can be non-differentiable on a dense set \cite{chrusciel98,budzynski99}. If $S$ is a compact Cauchy hypersurface then a compact horizon $H^{-}(S)$ need not be $C^1$ (cf.\ \cite{budzynski03}). As the authors of the last counterexample explain:
   \begin{quote}
We have not verified whether our example fulfills
any form of energy conditions, and it could still be the case that an energy condition
together with compactness would enforce smoothness.
 \end{quote}
Our main result implies that their example must indeed violate the null energy condition.

We end this section by introducing  some definitions and terminology.
A {\em spacetime} $(M,g)$ is a paracompact, time oriented Lorentzian manifold of dimension $n+1\ge 2$. The metric has signature $(-,+,\cdots,+)$.

We  assume that  $M$ is  $C^k$, $4 \le k \le \infty$, or even analytic, and so as it is $C^1$ it has a unique $C^\infty$ compatible structure (Whitney) \cite[Theor.\ 2.9]{hirsch76}.
Thus we could assume that $M$ is smooth without loss of generality. However, one should keep in mind that due to the transformation properties of functions and tensors under changes of coordinates, a smooth tensor field  over the manifold defined by the smooth atlas would be just $C^{k-1}$ with respect to the original $C^k$ atlas, and a smooth function  for the smooth atlas would be $C^k$ for the original atlas. Thus the word {\em smooth} when used with reference to a tensor field over a possibly non-smooth manifold must always be understood in this sense.  For clarity we shall most often state the differentiability degree of the mathematical objects that we introduce.

The metric will be assumed to be $C^3$ but it is likely that the degree can be lowered. We assume this degree of differentiability because  we shall use the $C^2$ differentiability of the exponential map near the origin. As noted above the problems that we wish to solve in the next sections are present even if we assume $M$ and $g$ smooth or analytic.  By definition of time orientation $(M,g)$ admits a global smooth timelike vector field $V$ which can be assumed normalized, $g(V,V)=-1$, without loss of generality.

The {\em chronology violating set} is defined by $\mathcal{C}=\{p: p\ll p\}$, namely it is the (open) subset of $M$ of events through which passes a closed timelike curve.
A {\em lightlike line} is an achronal inextendible causal curve, hence a lightlike geodesic without conjugate points.  A  future inextendible causal curve $\gamma: [a,b)\to M$ is {\em totally future imprisoned} (or simply future imprisoned) in a compact set $K$, if there is $t_0\ni [a,b)$ such that for $t>t_0$, $\gamma(t)\in K$.
A {\em partial Cauchy hypersurface} is an acausal edgeless (and hence closed) set.
The past {\em domain of dependence} $D^{-}(S)$ of a set $S$ is the set of points $q$ for which any future inextendible causal curve starting from $q$ reaches $S$.
Observe that for a partial Cauchy hypersurface $\textrm{edge} (H^{-}(S))=\textrm{edge} (S)=\emptyset$ (cf.\ \cite[Prop.\ 6.5.2]{hawking73}). Since every generator terminates at the edge of the horizon, the generators of $H^-(S)$  are future inextendible lightlike geodesics.
The {\em null convergence condition} is: $Ric(n,n)\ge 0$ for every lightlike vector $n$. It coincides with the null energy condition under the Einstein's equation with cosmological constant (which we do not impose).

We assume the reader to be familiar with basic results on mathematical relativity \cite{hawking73,beem96}, in particular for what concerns the geometry of null hypersurfaces and horizons \cite{kupeli87,galloway00}.

%
%

\subsection{Null hypersurfaces}

A $C^2$ null  hypersurface is a hypersurface with lightlike tangent space at  each point.
The next result is well known \cite{kupeli87,galloway00}. In the `only if' direction the achronality property follows from the existence of convex neighborhoods.

\begin{theorem} \label{ppp}
Every $C^2$  hypersurface $H$ is null if and only if it is locally achronal and ruled by null geodesics.
\end{theorem}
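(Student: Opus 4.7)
The plan is to prove the two implications separately, exploiting the local geometry of a null hypersurface through a convex normal neighborhood of a point.

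For the \emph{only if} direction, fix $p\in H$ and a convex normal neighborhood $U$ of $p$. Choose a $C^{2}$ defining function $u\colon U\to\mathbb{R}$ with $H\cap U=\{u=0\}$ and $du\neq 0$, and set $N^{a}=g^{ab}\p_{b}u$. The null hypothesis is equivalent to $g(N,N)=0$ along $H$, and then $N^{a}\p_{a}u=g(N,N)=0$ shows that $N$ is tangent to $H$. To obtain the ruling by null geodesics one computes, using symmetry of the Hessian of $u$,
\[
N^{b}\nabla_{b}N_{a}=N^{b}\nabla_{a}\nabla_{b}u=\tfrac{1}{2}\nabla_{a}g(N,N).
\]
Since $g(N,N)$ vanishes on $H$, its differential at points of $H$ annihilates $T_{q}H$ and is therefore proportional to the conormal $N_{a}$. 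Hence $N$ is pre-geodesic and its integral curves, which remain in $H$ because $N\in TH$, are after reparametrisation the desired null generators.

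For the local achronality, at each $q\in H\cap U$ the tangent hyperplane $T_{q}H$ is null, hence tangent to the light cone at $q$ along its unique null direction. Pulling back by $\exp_{q}$ identifies a neighborhood of $q$ with one of the origin in $T_{q}M$, in which $I^{+}(q)$ becomes the open future cone and $H$ a hypersurface tangent to $T_{q}H$ at $0$. Since $T_{q}H$ meets the closed light cone only along the ray $\mathbb{R}N(q)$, a continuity argument allows one to shrink $U$ so that $I^{+}(q)\cap H\cap U=\emptyset$ for every $q\in H\cap U$, which yields the required achronality.

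For the \emph{if} direction, at each $p\in H$ the ruling provides a null geodesic in $H$ through $p$ whose tangent is a null vector in $T_{p}H$. The hyperplane $T_{p}H$ therefore contains a null direction, so it is either null or timelike. If it were timelike it would contain a future-directed timelike vector $T$, and by the $C^{2}$ regularity of $H$ one can produce a $C^{1}$ curve $\sigma\subset H$ with $\dot\sigma(0)=T$. In a convex normal neighborhood of $p$ the points $\sigma(s)$ for small $s>0$ lie in $I^{+}(p)$ by continuity of the exponential map, contradicting local achronality. Hence $T_{p}H$ is null at every $p$ and $H$ is a null hypersurface.

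The main obstacle is the local achronality step of the \emph{only if} direction, where one needs the tangency estimate to hold in a single neighborhood uniformly in the base point $q\in H\cap U$. Making this precise relies on joint continuity in $q$ of the exponential map and of the null generator field $N$, and can be implemented either by a compactness argument after shrinking $U$, or by monitoring the monotonicity of $u$ along timelike curves issuing from points of $H$.
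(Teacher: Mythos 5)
The paper does not actually prove this statement: it records it as well known, citing Kupeli and Galloway, with only the one-line remark that in the ``only if'' direction achronality follows from the existence of convex neighborhoods. So the comparison is between your argument and the standard one. Your pregeodesic computation $N^{b}\nabla_{b}N_{a}=\tfrac12\nabla_{a}g(N,N)\propto N_{a}$ on $H$, and your ``if'' direction (a hyperplane containing a null vector is null or timelike, and a timelike tangent would let a $C^{1}$ curve in $H$ enter $I^{+}(p)$, contradicting achronality), are both correct and standard.

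The genuine gap is in the achronality step, and it is not the one you diagnose. The argument you present as primary --- $H$ is tangent at $q$ to the null hyperplane $T_{q}H$, which meets the closed cone only along one ray, hence by continuity $H$ avoids $I^{+}(q)$ near $q$ --- fails even for a single fixed $q$, so no compactness or uniformity-in-$q$ refinement can rescue it. In two-dimensional Minkowski space the curve $t=x+x^{2}$ is tangent at the origin to the null line $t=x$, yet $(x+x^{2},x)\in I^{+}(0)$ for all small $x>0$. Both the light cone and the hypersurface deviate from their common tangent plane to second order along the tangent null ray, so whether $H$ enters $I^{+}(q)$ is a comparison of second fundamental forms that pointwise nullness at $q$ does not control; what makes the theorem true is that $T_{q'}H$ is null at \emph{every} $q'$. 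The correct route is the one you relegate to a closing remark: along any future-directed timelike curve $\gamma$ in $U$ one has $\tfrac{d}{ds}(u\circ\gamma)=g(\nabla u,\dot\gamma)$, and at every point where $\gamma$ meets $H$ this is nonzero (a nonzero null vector is never orthogonal to a timelike vector) with a sign independent of the crossing point, since $\nabla u|_{H}$ is a continuous nonvanishing null field and hence has locally constant time-orientation. A continuous function all of whose zeros are crossings in the same direction has at most one zero, so $\gamma$ meets $H\cap U$ at most once, which is local achronality. That monotonicity argument should be the main line of the proof, not a fallback.
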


The previous result allows one to generalize the notion of null hypersurface to the non-differentiable case as done in \cite{galloway00}.
\begin{definition}
A $C^0$ {\em future null hypersurface (past horizon\footnote{Actually in \cite{chrusciel02} these sets are called {\em future horizons}, however our choice seems appropriate. Past Cauchy horizons are future null hypersurfaces, and a black hole horizon is actually the boundary (horizon) of a past set.})} $H$ is a  locally achronal topological embedded hypersurface,  such that for every $p\in H$ and for every neighborhood $U\ni p$ in which $H$ is achronal, there exists a point $q\in J^+(p,U)\cap H$, $q\ne p$.
\end{definition}
Clearly these sets are geodesically ruled because, with reference to the definition, $q\in E^{+}(p,U)$ thus, as $U$ can be chosen   convex, there is a lightlike geodesic segment connecting $p$ to $q$. This geodesic segment stays in $H$ otherwise the achronality of $H\cap U$ on $U$ would be contradicted \cite{galloway00}. If $H$ is globally achronal then the geodesic maximizes the Lorentzian distance between any pair of its points.

In this work {\em horizon} will be a synonymous for $C^0$ {\em null hypersurface}. Observe that without further mention our horizons will be edgeless.
Due to some terminological simplifications, we shall mostly consider horizons which are   future null hypersurfaces, although all results have a time-dual version.

The following extension property will be useful so we give a detailed proof.

\begin{proposition} \label{jsp}
Let $H$ be a $C^2$ null hypersurface and let $n$ be a $C^1$ field of future-directed  lightlike vectors tangent to $H$, then on a neighborhood of any $p\in H$ we can find a $C^1$ extension of $n$ to  a future-directed lightlike vector field (denoted in the same way) which is geodesic up to parametrizations ($\nabla_n n=\kappa n$).
\end{proposition}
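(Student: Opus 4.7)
The plan is to reduce the problem to an affinely parametrized geodesic flow transverse to $H$, and then absorb the non-affine parametrization of the original $n$ into a single positive scalar factor. I first pick a small $C^2$ spacelike hypersurface $\Sigma$ through $p$; since $n(p)$ is null and $\Sigma$ is spacelike, $\Sigma$ is automatically transverse to both $n(p)$ and $H$. On a neighborhood of $p$ in $\Sigma$ I would extend the restriction $n|_{H\cap\Sigma}$ to a $C^1$ future-directed null vector field $\tilde n$ on $\Sigma$. Concretely, the globally defined unit timelike field $V$ lets one write every future null vector uniquely as $\alpha(V+e)$ with $\alpha>0$ and $e$ spacelike unit orthogonal to $V$; the given $n|_{H\cap\Sigma}$ furnishes $C^1$ data $(\alpha_0,e_0)$ on the $C^2$ codimension-one submanifold $H\cap\Sigma\subset\Sigma$, and these data admit a routine $C^1$ extension to all of $\Sigma$ near $p$.

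Next I flow via the exponential map: set $\Phi(q,t)=\exp_q(t\tilde n(q))$ for $q\in\Sigma$ near $p$ and $t$ small. Since $g$ is $C^3$ the exponential is $C^2$, so $\Phi$ is $C^1$; because $T_p\Sigma\oplus\mathbb{R}\tilde n(p)=T_pM$ the differential $D\Phi_{(p,0)}$ is an isomorphism, and the inverse function theorem makes $\Phi$ a $C^1$ diffeomorphism onto an open neighborhood $U\ni p$ in $M$. Defining the auxiliary vector field $N$ on $U$ by $N(\Phi(q,t))=\partial_t\Phi(q,t)$ yields a $C^1$ future-directed lightlike field---$C^1$ because $D\exp$ is $C^1$ and $\tilde n$ is $C^1$---whose integral curves are by construction affinely parametrized geodesics, so $\nabla_N N=0$ throughout $U$.

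The last step is to align $N$ with the given $n$ on $H$. The generators of $H$ emanating from $H\cap\Sigma$ are precisely the null geodesics defining $\Phi$ there (Theorem~\ref{ppp}), so $\Phi$ sweeps out $H\cap U$ and $N$ is tangent to $H$ along $H\cap U$. The two future null vectors $N$ and $n$ are therefore collinear on $H\cap U$, giving $n=\lambda N$ for a positive $C^1$ function $\lambda$ on $H\cap U$ with $\lambda\equiv 1$ along $H\cap\Sigma$; I would extend $\lambda$ to a $C^1$ function $\tilde\lambda$ on $U$ by a standard extension from a $C^2$ hypersurface and take the sought extension to be $n:=\tilde\lambda N$. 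This is $C^1$, future-directed, lightlike, agrees with the original $n$ on $H$, and satisfies $\nabla_n n=\tilde\lambda\,N(\tilde\lambda)\,N+\tilde\lambda^2\nabla_N N=N(\tilde\lambda)\,n$, so $\kappa=N(\tilde\lambda)$. The only delicate point is regularity bookkeeping: the $C^3$ hypothesis on $g$ is invoked precisely to ensure $\exp\in C^2$, hence that $\Phi$ is a $C^1$ diffeomorphism with $C^1$ velocity field $N$. Once $N$ is produced, everything that follows is algebraic, since all deviation from affine parametrization is absorbed into the single $C^1$ multiplier $\tilde\lambda$.
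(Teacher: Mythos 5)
Your proof is correct, but it follows a genuinely different construction from the paper's. The paper slices $H$ by a coordinate hyperplane to obtain a codimension-two spacelike submanifold $\Sigma_0$, translates its graph to get a one-parameter family $\Sigma_\epsilon$, and exponentiates the null normal bundles $E_{\Sigma_\epsilon}$ to produce a foliation of a neighborhood of $p$ by $C^2$ null hypersurfaces $N_\epsilon$ with $N_0=H$ locally; the extension is then the field of future-directed null tangents to the leaves, normalized with respect to an auxiliary Riemannian metric chosen so that $n$ is unit on $H$, and the pregeodesic property is inherited from Theorem~\ref{ppp} applied leafwise. You instead issue a single congruence of affinely parametrized null geodesics from a codimension-one spacelike transversal $\Sigma$, with initial velocity a $C^1$ null extension of $n|_{H\cap\Sigma}$, obtaining a field $N$ with $\nabla_N N=0$, and then match with $n$ on $H$ by a scalar factor $\tilde\lambda$ extended off $H$. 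Both routes hinge on the same two ingredients --- the $C^2$ regularity of $\exp$ (hence the $C^3$ metric) and the uniqueness of the null direction tangent to a $C^2$ null hypersurface --- but yours is somewhat more explicit: it yields the formula $\kappa=N(\tilde\lambda)$ and avoids having to argue that the $N_\epsilon$ genuinely foliate a neighborhood, at the cost of two auxiliary extension steps (of the null field across $\Sigma$, and of $\lambda$ off $H$), both routine since $H\cap\Sigma$ and $H$ are $C^2$. The paper's version packages the matching into the choice of the Riemannian metric and produces an extension whose integral curves rule null hypersurfaces even off $H$, which is aesthetically closer to the foliation picture but is not needed for the statement.
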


\begin{proof}
We can find a convex neighborhood $C$ of $p$ endowed with coordinates such that $\p_\mu$ is an orthonormal basis at $p$, $\p_0$ is future-directed timelike, $n(p)=\p_0+\p_n$, and $\{\p_i, \ i=1,\cdots n-1\}$ are tangent to $H$ at $p$. The spacelike submanifold $\Sigma:=H\cap \{q:x^n(q)=0\}$ is the graph of a $C^2$ function $x^0=f(x^1,\cdots, x^{n-1})$ and by Theor.\ \ref{ppp}, $H\cap C=\exp^C E_\Sigma$, where $E_\Sigma\subset \cup_{x\in \Sigma} T_xM$ is the vector subbundle, containing $n(p)$, which consists of the lightlike tangent vectors orthogonal to $\Sigma$. For small $\epsilon$ the graphs of the functions $f+\epsilon$  define spacelike codimension 2 submanifolds $\Sigma_\epsilon\subset \{q:x^n(q)=0\}$ to which correspond an orthogonal lightlike vector subbundle  $E_{\Sigma_\epsilon}$. Then, taking a smaller $C$ if necessary, $N_\epsilon=\exp^C E_{\Sigma_\epsilon}$ provide a foliation of $C$ by null lightlike hypersurfaces. Let $\gamma$ be a Riemannian metric for which $n$ is normalized on $C\cap H$, then the lightlike future-directed tangents to the hypersurfaces $H_\epsilon$, normalized with respect to $\gamma$, provide the searched extension of $n$. By Theor.\ \ref{ppp} the integral curves of $n$ are geodesics up to parametrization. $\square$
\end{proof}

\begin{remark}
A closed lightlike geodesic does not admit a global lightlike and {\em geodesic} tangent vector field unless it is complete \cite[Sect.\ 6.4]{hawking73}.
Similarly, a smooth compact horizon might not admit a global lightlike and {\em geodesic} tangent vector field  since its generating geodesics can accumulate on themselves \cite{kupeli87}. As we wish to include the compact Cauchy horizons in our  analysis we  shall not assume that $n$ is geodesic.
\end{remark}

\subsection{The Raychaudhuri equation} \label{ral}
Let us recall the geometrical meaning of the  Raychaudhuri equation  for lightlike geodesics \cite{galloway00}. Over a $C^2$ future null hypersurface  $H$ we consider the  vector bundle $V=TH/\!\!\sim$ obtained regarding as equivalent two vectors $X, Y\in T_pH$ such that $Y-X\propto n$. Clearly, this bundle has $n-1$-dimensional fibers. Let us denote with an overline $\overline{X}$ the equivalence class of $\sim$ containing $X$. At each $p\in H$, we introduce  a positive definite metric $h(\overline{X},\overline{Y}):=g(X,Y)$,
an endomorphism (shape operator, null Weingarten map) $b\colon V_p \to V_p$,  $\overline{X}\mapsto b(\overline{X}):= \nabla_{\overline{X}} n:=\overline{\nabla_X n}$, a second endomorphism $\overline{R}\colon V_p \to V_p$, $\overline{R}(\overline{X}):=\overline{R(X,n)n}$, where as usual
\[
R(X,Y)Z=\nabla_X\nabla_YZ-\nabla_Y\nabla_X Z-\nabla_{[X,Y]} Z,
\]
and a third endomorphism $\overline{C}\colon V_p \to V_p$, $\overline{C}(\overline{X}):=\overline{C(X,n)n}$, where $C$ is the Weyl tensor.
The definition of $b$ is well posed because $g(n,\nabla_X n)=0$ and $\nabla_n n\propto n$. The definition of $\overline{R}$ is well posed since $R(n,n)n=0$ and $g(n,R(\cdot,n)n)=0$ which implies that for every $X\in T_pM$, $R(X,n)n \in T_pH$. The endomorphisms $b, \overline{R}, \overline{C}$ are all self-adjoint with respect to $h$.

A little algebra shows that
\begin{align}
\textrm{tr} \overline{R}&=\textrm{Ric}(n,n),\\
\overline{C}&= \overline{R}-\frac{1}{n-1} \,\textrm{tr} \overline{R} \,Id ,
\end{align}
namely $\overline{C}$ is the  trace-free part of $\overline{R}$. Both endomorphisms $\overline{R}$ and $\overline{C}$ depend on $n$ at the considered point $p$ but not on the whole tangent geodesic congruence to $H$. We say that the {\em null genericity condition} is satisfied at $p\in H$ if there $\overline{R}\ne 0$. Due to \cite[Prop.\ 2.11]{beem96} this condition is equivalent to the  classical tensor condition
\[
n^\gamma n^\delta n_{[\alpha} R_{\beta] \gamma \delta [\eta} n_{\mu]}\ne 0.
\]
The  derivative $\nabla_n$, which we also denote with a prime $'$, induces a  derivative $\overline{X}':=\overline{X'}$ on sections of $V$, and hence, as usual, a derivative on endomorphisms as follows $E'(\overline{X}):=(E(\overline{X}))'-E(\overline{X}')$.
Along a generator of $H$ the null Weingarten map satisfies the Riccati equation \cite{galloway00}
\begin{equation} \label{nkv}
b'=-\overline{R}-b^2+ \kappa\, b,
\end{equation}
where $\kappa$ is defined by $\nabla_n n=\kappa\, n$.
Let us define
\begin{align*}
\theta:&=\textrm{tr} \,b, \\
\overline\sigma:&=b-\frac{1}{n-1}\, \theta\, {Id},
\end{align*}
so that $\overline{\sigma}$ is the trace-free part of $b$.  They are called {\em expansion} and {\em shear}, respectively.  It is useful to observe that if $n$ is replaced by $\Omega n$, where $\Omega$ is any $C^1$ function on $H$, then $b$ gets replaced by $\Omega b$ (thus $\theta$ by $\Omega \theta$, and $\overline{\sigma}$ by $\Omega \overline{\sigma}$), and $\overline{R}$ and $\overline{C}$ by $\Omega^2 \overline{R}$ and  $\Omega^2\overline{C}$, respectively.

Let us denote for short $\sigma^2:=\textrm{tr} \overline{\sigma}^2$. A trivial consequence of this definition is $\sigma^2\ge 0$ with equality if and only if $\overline{\sigma}=0$.

Taking the trace and the trace-free parts of  (\ref{nkv})  we obtain
\begin{align}
\frac{d }{d s}\, \theta&=-\textrm{Ric}(n,n)-\sigma^2-\frac{1}{n-1} \, \theta^2+\kappa\, \theta, \qquad (\textrm{Raychaudhuri}) \label{ray} \\
\frac{d }{d s} \, \overline{\sigma}&= -\overline{C}-(\overline{\sigma}^2-\frac{1}{n-1} \, \textrm{tr} \overline{\sigma}^2 \, {Id})-\frac{2}{n-1}\, \theta\, \overline{\sigma}+\kappa \,\overline{\sigma}, \label{she}
\end{align}
where $n=d/d s$ and the   term in parenthesis is the trace-free part of $\overline{\sigma}^2$.

\begin{remark}
If $A$ is a traceless $2 \times 2$ matrix then $A^2=\frac{1}{2}\,(\textrm{tr} A^2) Id$. Thus in the physical four dimensional spacetime case ($n=3$), the term in parenthesis in Eq.\ (\ref{she}) vanishes. Curiously this observation, present e.g.\ in \cite{schneider99},  is missed in several standard references on mathematical relativity, e.g.\ \cite{hawking73}.
\end{remark}

For a tensorial formulation of the above equations see \cite[Sect.\ 4.2]{hawking73} \cite{poisson04,kar07}.
In these last references the authors extend $n$ to a lightlike field in a neighborhood of $H$, introduce first a (projection) tensor $h^\mu_\nu=\delta^\mu_\nu+n^\mu m_\nu +m^\mu n_\nu$, where $m$ is a lightlike vector field such that $g(m,n)=-1$ on $H$, and then define
\[
\theta=g^{\mu \nu} h_\mu^\alpha h_\nu^\beta n_{\alpha ; \beta}=  h^{\alpha \beta} n_{\alpha ; \beta}= n^{ \alpha}_{;\alpha}+m^\beta n_{\beta; \alpha} n^\alpha=n^{ \alpha}_{;\alpha}-\kappa.
\]
Of course, this definition is equivalent to that  given above.

\section{The area theorem}
The area theorem appeared in Hawking and Ellis book \cite[Prop. 9.2.7 p.\ 318 and Eq.\ (8.4)]{hawking73} under tacit differentiability assumptions on the horizon.
A first proof of the area theorem without differentiability assumptions was obtained by Chru\'sciel, Delay, Galloway and Howard in \cite{chrusciel01}, where they were able to compare the areas of  two spacelike sections of the horizon in which one section stays in the future of the other section.  Unfortunately, the relationship between the area increase and the integral of the expansion  is not clarified, and the domain of integration being enclosed by  two spacelike hypersurfaces is somewhat restricted.

 In this section we wish to establish an area theorem suitable for our purposes. We shall provide a self contained and comparatively short proof of a reasonably strong version of the area theorem. This will be possible thanks to the following improvements:
\begin{itemize}
\renewcommand{\labelitemi}{$\circ$}
\item We will recognize that each global timelike vector field induces a smooth structure on the horizon. This structure can be used to integrate over the horizon  as it is usually done on open sets of $\mathbb{R}^n$. Ultimately, this approach simplifies considerably the analysis of these hypersurfaces as the  results on which we shall be interested will turn out to be independent of the smooth structure placed on the horizon.
 \item We will be able to express all the spacetime quantities of our interest (e.g. the expansion) through variables living on the horizon.  This approach will suggest immediately their generalization to the non-differentiable case.
 \item We will take advantage of the  strongest results on the divergence theorem so far developed in analysis \cite{silhavy05,chen05,chen09,pfeffer12}. This classical theorem can be improved generalizing the domain of integration (and its boundary) or generalizing the vector field. The generalization of the domain was accomplished
     by the Italian school (Caccioppoli, De Giorgi) through the introduction of domains of bounded variation or  with finite perimeter. For what concerns the vector field, it was proved
     that it does not need to be defined everywhere as long as it is Lipschitz (Federer) or even of  bounded variation. These generalizations can be applied jointly.
\end{itemize}
Our proof will differ from that of \cite{chrusciel01} since we shall use the divergence theorem while these authors  study  the sign of the Jacobian of a flow induced by the generators. However, we shall use some geometrical ideas contained in \cite{galloway00,chrusciel01} for what concerns the semi-convexity of horizons,  and the relationship between the sign of the expansion and the achronality of the horizon.

With a sufficiently strong version of the area theorem we will be able to prove      the smoothness of compact Cauchy horizons. Here the main idea is to prove that $\theta=0$ making use of the area theorem, regard this equality as a second order quasi-linear elliptic PDE (in weak sense) for the local graph function $h$ describing the horizon, and then use some well known results on the regularity of solutions to quasi-linear elliptic PDEs to infer the smoothness of the horizon.

We need to introduce some mathematical results that we shall use later on.

\subsection{Mathematical preliminaries: lower-$C^2$ functions}

Let us recall the definition of {\em lower}-$C^k$ function due to Rockafellar \cite{rockafellar82} \cite[Def.\ 10.29]{rockafellar09}.
\begin{definition}
A function $f: \Omega\to \mathbb{R}$, $\Omega\subset \mathbb{R}^n$, is lower-$C^k$, $1 \le k \le \infty$, if for every  $\bar{x}\in \Omega$ there is  some open neighborhood $O\ni \bar{x}$ and
 a representation
\begin{equation} \label{nna}
f(x) = \max_{s\in S} F(x,s), \qquad \forall x\in O,
\end{equation}
where $S$ is a compact topological space and $F\colon O \times S \to \mathbb{R}$ is a function which has partial derivatives up to order $k$ with respect to
$x$ and which along with all these derivatives is continuous not
just in $x$, but jointly in $(x,s) \in O\times S$.
\end{definition}

Clearly a lower-$C^{k+1}$ function is lower-$C^{k}$ but it turns out that
 the notions of  lower-$C^2$ and lower-$C^\infty$ function are equivalent \cite{rockafellar82}. Actually, one could introduce a notion of lower-$C^{1,1}$ function but this would be equivalent to lower-$C^2$ (see \cite{vial83,urruty85,daniilidis05}).
If $\tilde{f}-f$ is $C^k$ and $f$ is lower-$C^k$ then $\tilde{f}$ is also lower-$C^k$, it is sufficient to take $\tilde{F}(x,s)=F(x,s)+\tilde{f}(x)-f(x)$. We shall be  interested in lower-$C^2$ functions.
Convex functions are special types of lower-$C^2$ functions for which $F(x,s)$ can be chosen to be affine for every $s$ (see \cite[Theor.\ 5]{rockafellar82} or \cite[Theor.\ 10.33]{rockafellar09}). So a function which differs from a convex function by a $C^2$ function is also lower-$C^2$. Rockafellar has also proved \cite[Theor.\ 6]{rockafellar82}:

\begin{theorem} \label{axd}
For a locally Lipschitz function $f: \Omega\to \mathbb{R}$, $\Omega\subset \mathbb{R}^n$, the following properties are equivalent:
\begin{itemize}
\item[(i)] $f$ is lower-$C^2$,
\item[(ii)] for every $\bar{x}\in \Omega$ there is a convex neighborhood $O\ni \bar x$ on which $f$ has a representation, $f=g-h$, where $g$ is convex and $h$ is $C^2$,
\end{itemize}
If these cases apply, $h$ in point (ii) can be chosen quadratic and convex, even of the form $\frac{1}{2}\rho \vert x-\bar{x}\vert^2$ for some $\rho(\bar{x})$.
\end{theorem}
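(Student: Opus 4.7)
The plan is to prove each direction in turn. The refinement that $h$ can be taken to be the quadratic $\tfrac12\rho|x-\bar x|^2$ falls out of the forward direction directly, and it transfers to any decomposition in (ii) through the chain (ii)$\Rightarrow$(i)$\Rightarrow$(ii) with the quadratic form.

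For (i)$\Rightarrow$(ii), the idea is that adding a sufficiently large multiple of $|x-\bar x|^2$ converts a lower-$C^2$ function into a convex one. Starting from a representation $f(x)=\max_{s\in S}F(x,s)$ on an open neighborhood of $\bar x$, I would pick a convex compact neighborhood $K$ of $\bar x$ inside that neighborhood. Joint continuity of $D^2_x F$ on the compact set $K\times S$ delivers a uniform lower bound $D^2_x F(x,s)\ge -\rho I$ for some $\rho(\bar x)>0$, so $x\mapsto F(x,s)+\tfrac12\rho|x-\bar x|^2$ is convex on $K$ for each $s$. Since a pointwise maximum of a family of convex functions is convex, $g(x):=f(x)+\tfrac12\rho|x-\bar x|^2$ is convex on $K$, and $f=g-h$ with $h(x)=\tfrac12\rho|x-\bar x|^2$ is the desired decomposition in the refined quadratic-convex form.

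For (ii)$\Rightarrow$(i), I would realize the convex piece as a supremum of supporting affine functions over a compact parameter set and absorb the $C^2$ piece into the defining family. On a compact convex $K\ni\bar x$ contained in the domain of the decomposition $f=g-h$, the convex function $g$ is locally Lipschitz with some constant $L$, so $\partial g(x_0)\subset\bar B(0,L)$ for every $x_0\in K$. The parameter set
\[
S:=\{(x_0,p):x_0\in K,\ p\in \partial g(x_0)\}\subset K\times \bar B(0,L)
\]
is compact, being closed in the compact product by the closed-graph property of the subdifferential of a continuous convex function. Define
\[
F\bigl((x_0,p),x\bigr):=g(x_0)+p\cdot(x-x_0)-h(x);
\]
this is jointly continuous on $S\times K$, twice continuously differentiable in $x$ with jointly continuous $x$-derivatives, and by the subgradient inequality $g(x)\ge g(x_0)+p\cdot(x-x_0)$ for every $(x_0,p)\in S$, with equality attained at $x_0=x$. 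Hence $f(x)=\max_{s\in S}F(s,x)$ on $K$, which is the required lower-$C^2$ representation.

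The arguments rest on standard convex-analytic facts in the converse direction (local Lipschitzness of convex functions, closed-graph property and boundedness of the subdifferential on compact subsets of the domain) and on a pure compactness estimate in the forward direction. I do not anticipate any genuine obstacle; the real content of the theorem is the clean dictionary between the supremum-of-$C^2$ picture and the convex-plus-smooth decomposition, which becomes apparent once the quadratic term is used to shuttle between positive semidefiniteness of second derivatives and convexity.
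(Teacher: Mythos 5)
Your proof is correct, but it does more than the paper does: the paper does not actually prove the equivalence (i)$\Leftrightarrow$(ii) itself --- it cites Rockafellar for it --- and only supplies a one-line argument for the final claim, namely that from an arbitrary decomposition $f=g-h$ one can pass to a quadratic $h$ by noting that $\frac{1}{2}\rho\vert x-\bar{x}\vert^2-h$ is convex for $\rho$ large and absorbing it into the convex part. Your treatment of the quadratic refinement is slightly different (you extract $h=\frac{1}{2}\rho\vert x-\bar{x}\vert^2$ directly from the forward implication, using the uniform lower bound $D^2_xF\ge-\rho I$ on the compact set $K\times S$, and then transfer it through the chain (ii)$\Rightarrow$(i)$\Rightarrow$(ii)); both routes are sound and rest on the same compactness estimate. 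Your converse direction --- representing $g$ as the maximum of its affine supports parametrized over the compact set $S=\{(x_0,p):x_0\in K,\ p\in\partial g(x_0)\}$ and subtracting $h$ inside the family --- is exactly the mechanism behind the paper's separate remark that convex functions are lower-$C^2$ with affine $F(\cdot,s)$, and the points that need checking (nonemptiness and local boundedness of $\partial g$ on $K$, closedness of the graph of the subdifferential, attainment of the maximum at $x_0=x$) are all correctly in place. In short, the proposal is a complete and self-contained proof of a statement the paper outsources, and where the two overlap (the quadratic claim) the underlying idea is the same.
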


The last claim follows from (ii) observing that locally a $C^2$ function can be made convex by adding to it a quadratic homogeneous convex function, indeed any $C^2$ function has Hessian locally bounded from below which can be made positive definite with this operation.

Rockafellar observes \cite[Cor.\ 2]{rockafellar82} that these functions are almost everywhere twice differentiable by (ii) and Alexandrov's  theorem \cite{evans92}, namely for almost every $\bar{x}\in \Omega$, there is a quadratic form $q$ such that
\[
f(x)=q(x)+o(\vert x-\bar{x}\vert^2).
\]
Another property inherited from convex functions is that of being Lipschitz and  strict differentiable wherever they are differentiable \cite[Sect.\ 25]{rockafellar70}\cite{rockafellar82} (Peano's strong (strict) differentiability coincides with the single valuedness of Clarke's generalized gradient \cite{clarke75}).

The lower-$C^2$ property   appeared under a  variety of names in the literature, including ``weak convexity'' \cite{vial83}, ``convexity up to square'', ``generalized convexity'' and  ``semi-convexity''  \cite{andersson98,chrusciel01,cannarsa04}.
A related notion is that of proximal subgradient \cite{clarke95}.

\begin{definition}
A vector $v\in \mathbb{R}^n$ is called a {\em proximal subgradient}
of a continuous function $f \colon \Omega \to \mathbb{R}$ at $\bar{x}\in \Omega$,  if there is some neighborhood $O(\bar{x})\ni \bar{x}$ and some constant  $\rho(\bar{x})\ge 0$ such that for every  $x\in O$
\begin{equation} \label{cgf}
f(x)\ge f(\bar{x})+v \cdot (x-\bar{x})-\frac{1}{2}\rho \vert x-\bar{x}\vert^2.
\end{equation}
Whenever this inequality is satisfied $v$ is also called $\rho$-{\em proximal subgradient}.
\end{definition}
Thus the existence of a  proximal subgradient at $\bar{x}$ corresponds to the existence of a local quadratic support to $f$ at $\bar{x}$. Clearly, if $f$ is differentiable at $\bar{x}$ then its proximal subgradient coincides with $\nabla f$.
A proximal subgradient can also be characterized as follows \cite[Theor.\ 8.46]{rockafellar09}
\begin{proposition} \label{axo}
A vector $v$ is a proximal subgradient of $f$ at $\bar{x}$  if and only if on some
neighborhood of $\bar{x}$ there is a $C^2$ function $h\le f$ with $h(\bar{x}) = f(\bar{x})$ and $h'(\bar{x}) = v$.
\end{proposition}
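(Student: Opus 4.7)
The plan is to prove both implications directly from the definitions, using the fact that a quadratic function is itself $C^2$ for one direction, and that any $C^2$ function admits a local quadratic lower bound (via a bound on its Hessian) for the other.

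First I would handle the easy direction ($\Leftarrow$). Suppose $h$ is $C^2$ on some neighborhood $O$ of $\bar{x}$ with $h\le f$, $h(\bar{x})=f(\bar{x})$, and $h'(\bar{x})=v$. Since $h$ is $C^2$, its Hessian is continuous and hence bounded on some closed ball around $\bar{x}$; shrinking $O$ if necessary, pick $\rho\ge 0$ large enough that the eigenvalues of the Hessian of $h$ are $\ge -\rho$ throughout $O$. By Taylor's theorem with integral remainder,
\[
h(x)\ge h(\bar{x})+h'(\bar{x})\cdot(x-\bar{x})-\tfrac{1}{2}\rho\,\vert x-\bar{x}\vert^{2}, \qquad x\in O.
\]
Combining with $f\ge h$ and the equalities at $\bar{x}$ gives the proximal subgradient inequality (\ref{cgf}) with the same $\rho$. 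Hence $v$ is a proximal subgradient of $f$ at $\bar{x}$.

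For the converse direction ($\Rightarrow$), suppose $v$ is a proximal subgradient of $f$ at $\bar{x}$, so that there exist $O\ni\bar{x}$ and $\rho\ge 0$ with
\[
f(x)\ge f(\bar{x})+v\cdot(x-\bar{x})-\tfrac{1}{2}\rho\,\vert x-\bar{x}\vert^{2}, \qquad x\in O.
\]
Define on $\mathbb{R}^{n}$ the explicit quadratic function
\[
h(x):=f(\bar{x})+v\cdot(x-\bar{x})-\tfrac{1}{2}\rho\,\vert x-\bar{x}\vert^{2}.
\]
Then $h$ is $C^\infty$ (hence certainly $C^2$), and by construction $h(\bar{x})=f(\bar{x})$, $h'(\bar{x})=v$, and $h\le f$ on $O$. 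This is precisely the function required.

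There is really no obstacle here: the content of the proposition is the equivalence between \emph{quadratic} local support and \emph{$C^2$} local support. The forward direction reduces to boundedness of the Hessian of a $C^2$ function on a small neighborhood, and the converse is just the tautological observation that a quadratic polynomial is itself $C^2$. Consequently, the same proof shows that one may always take the supporting $C^2$ function $h$ to be quadratic, matching the form used throughout the rest of the paper (in particular in Theorem~\ref{axd} and in the definition of $\rho$-proximal subgradient).
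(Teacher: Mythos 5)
Your proof is correct. The paper itself does not prove this proposition; it simply quotes it from Rockafellar--Wets \cite[Theor.\ 8.46]{rockafellar09}, so there is no internal argument to compare against. Your two-line argument is the standard one and is complete: the quadratic minorant is itself the required $C^2$ support function in one direction, and in the other direction Taylor's theorem with a lower bound $-\rho I$ on the Hessian of $h$ over a small convex ball (which you should make explicit, since the segment from $\bar{x}$ to $x$ must lie in the neighborhood) yields the $\rho$-proximal inequality. Your closing observation --- that the supporting function can always be taken quadratic --- is exactly why the proposition meshes with the definition of $\rho$-proximal subgradient and with Theorem~\ref{axd} as used elsewhere in the paper.
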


A differentiable function can have  proximal subgradient at each point without being lower-$C^2$, e.g.\ $f=x^2 \sin (1/x)$. Observe that in this example the {\em semi-convexity constant} $\rho$ varies from point to point.

\begin{definition}
A function $f\colon \Omega \to \mathbb{R}$ is called $\rho$-lower-$C^2$ (or $\rho$-semi-convex) if there are continuous functions $c\colon S \to \mathbb{R}$, $b\colon S \to \mathbb{R}^n$, defined on some compact set $S$ such that  $f(x)=\max_{s\in S} \{-\frac{1}{2}\rho \vert x\vert^2+b(s) \cdot x+c(s)\}$.
\end{definition}

Vial \cite{vial83} and Clarke et al. \cite[Theor.\ 5.1]{clarke95} proved the following equivalence (see also \cite[Prop.\ 1.1.3]{cannarsa04}, \cite[Lemma 3.2]{andersson98})
\begin{proposition} \label{alo}
Let $\Omega\subset \mathbb{R}^n$ be open, convex and bounded and let $f\colon \Omega \to \mathbb{R}$ be Lipschitz. For any given $\rho>0$ the following properties are equivalent:
\begin{itemize}
\item[(a)] $f$ is $\rho$-lower-$C^2$,
\item[(b)] $f$  admits a $\rho$-proximal subgradient at each point, that is Eq.\ (\ref{cgf}) holds, where $\rho$ does not depend on the point $\bar{x}$,
\item[(c)] $f=g-\frac{1}{2}\, \rho \vert x\vert^2$ where $g$ is convex,
\item[(d)] for all $x,y\in \Omega$ and $\lambda \in [0,1]$
\[
(1-\lambda)f(x)+\lambda f(y)-f((1-\lambda)x+\lambda y)\ge -\rho\, \frac{\lambda(1-\lambda)}{2} \, \vert y-x\vert^2.
\]
\end{itemize}
In this case any proximal subgradient  of $f$ satisfies (\ref{cgf}) with the constant $\rho$.
\end{proposition}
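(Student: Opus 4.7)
The plan is to reduce the entire proposition to statements about the auxiliary function $g(x):=f(x)+\tfrac{1}{2}\rho|x|^2$. The identity
\[
|(1-\lambda)x+\lambda y|^2-(1-\lambda)|x|^2-\lambda|y|^2=-\lambda(1-\lambda)|y-x|^2
\]
shows that (d) is precisely the convexity inequality for $g$, so $(c)\Leftrightarrow(d)$ is immediate, and it suffices to close the cycle $(a)\Rightarrow(b)\Rightarrow(c)\Rightarrow(a)$.

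For $(a)\Rightarrow(b)$, at any $\bar x\in\Omega$ compactness of $S$ and joint continuity produce $\bar s\in S$ attaining $f(\bar x)=-\tfrac12\rho|\bar x|^2+b(\bar s)\cdot\bar x+c(\bar s)$; using $|x|^2-|\bar x|^2=2\bar x\cdot(x-\bar x)+|x-\bar x|^2$, the vector $v:=b(\bar s)-\rho\bar x$ satisfies (\ref{cgf}) with the uniform constant $\rho$ globally on $\Omega$. For $(c)\Rightarrow(a)$, since $g$ is Lipschitz and convex on the bounded convex domain, it equals the supremum of its affine minorants $x\mapsto b\cdot x+c$; the Lipschitz bound confines the admissible slopes $b$ to a fixed Euclidean ball and the intercepts $c$ to the bounded range of $g$, producing a compact parameter space $S\subset\mathbb{R}^{n+1}$ over which the sup representation holds, and subtracting $\tfrac12\rho|x|^2$ gives (\ref{nna}).

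The core of the argument is $(b)\Rightarrow(c)$. Rewriting (\ref{cgf}) in terms of $g$ yields the local affine support $g(x)\ge g(\bar x)+(v+\rho\bar x)\cdot(x-\bar x)$ on a neighborhood of each $\bar x\in\Omega$. To upgrade these local supports to global convexity of $g$ I would argue by contradiction: if $g$ failed to be convex on some segment $[x_0,x_1]\subset\Omega$, the deficit $\phi(t):=g((1-t)x_0+tx_1)-(1-t)g(x_0)-tg(x_1)$ would vanish at the endpoints yet be positive somewhere. At any maximizer $t^\ast\in(0,1)$, restricting the local affine support of $g$ to the segment forces $\phi(t)\ge\phi(t^\ast)+\beta(t-t^\ast)$ locally; combined with the maximality of $\phi(t^\ast)$ this pins $\beta=0$ and makes $\phi$ locally constant near $t^\ast$, so the set of maximizers in $(0,1)$ is both open and closed, hence equals $(0,1)$, contradicting $\phi(0^+)=0$ by continuity. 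This local-to-global convexification is the most delicate step.

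Finally, for the last claim, let $v$ be any proximal subgradient of $f$ at $\bar x$. Proposition \ref{axo} provides a $C^2$ function $h\le f$ on a neighborhood of $\bar x$ with $h(\bar x)=f(\bar x)$ and $\nabla h(\bar x)=v$, so $h+\tfrac12\rho|\cdot|^2\le g$ near $\bar x$ with equality at $\bar x$ and gradient $v+\rho\bar x$. Along any ray from $\bar x$ the monotone difference quotients of the convex $g$ dominate those of this smooth minorant, so the one-sided directional derivative of $g$ at $\bar x$ in every direction $u$ is at least $(v+\rho\bar x)\cdot u$; this places $v+\rho\bar x$ in the convex subdifferential $\partial g(\bar x)$, yielding the global inequality $g(x)\ge g(\bar x)+(v+\rho\bar x)\cdot(x-\bar x)$ on $\Omega$, which rewritten in terms of $f$ is exactly (\ref{cgf}) with the uniform constant $\rho$. $\square$
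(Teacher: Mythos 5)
Your proof is correct, but note that the paper does not actually prove this proposition: it is quoted from Vial and from Clarke--Stern--Wolenski (with further references to Cannarsa--Sinestrari and Andersson--Galloway--Howard), so there is no in-paper argument to compare against. Your route is the standard one in those references --- reduce everything to the auxiliary function $g=f+\tfrac12\rho\vert\cdot\vert^2$, observe that (c) and (d) are literally the convexity of $g$ via the parallelogram-type identity, and close the cycle $(a)\Rightarrow(b)\Rightarrow(c)\Rightarrow(a)$. All the individual steps check out: the attainment of the max in $(a)\Rightarrow(b)$ gives a subgradient valid globally on $\Omega$; the compactness of the parameter set in $(c)\Rightarrow(a)$ is correctly extracted from the global Lipschitz bound on $g$ over the bounded $\Omega$ (and the max is attained at the affine support based at $x$ itself); the local-to-global convexification in $(b)\Rightarrow(c)$, via the open-and-closed argument for the maximizer set of the deficit $\phi$ on a segment, is a complete proof of the standard lemma that a continuous function with a local affine support at every point is convex; and the final claim follows correctly from Proposition~\ref{axo} together with the monotonicity of difference quotients of the convex $g$, which upgrades any local quadratic support to the global inequality with the uniform constant $\rho$. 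The only thing your write-up buys beyond the bare statement is that it makes explicit that both the $\rho$-proximal inequality and the subgradient inequality for $g$ hold on all of $\Omega$, not just locally, which is slightly stronger than what is literally claimed.
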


For some authors the {\em semi-convex} functions are those selected by this theorem (Cannarsa and Sinestrari \cite[Chap.\ 1]{cannarsa04}). However, it must be stressed that the family of functions which locally satisfy the above proposition for some $\rho>0$, and which we call {\em semi-convex functions with locally bounded semi-convexity constant}, denoting them with $BSC_{loc}(\mathbb{R}^n)$ (reference \cite{cannarsa04} uses $SCL_{loc}(\mathbb{R}^n)$ for this family), is smaller than the family of Rockafellar's semi-convex functions $SC_{loc}(\mathbb{R}^n)$  for which the semi-convexity constant may be unbounded on compact sets as in the above example $f=x^2 \sin (1/x)$. Finally, we mention the inclusion $C^{1,1}_{loc}(\mathbb{R}^n) \subset BSC_{loc}(\mathbb{R}^n)$, cf.\ \cite{vial83}.

\subsection{Horizons are lower-$C^2$-embedded smooth manifolds} \label{kdp}

Let $V$ be a  global  future-directed normalized timelike smooth vector field.
Let $H$ be a horizon and let $p\in H$. Let $S$ be a smooth spacelike manifold transverse to $V$ at $p$, defined just in a convex neighborhood of $p$ in which $H$ is achronal.
Let $C$ be an open cylinder generated by the flow of $V$ with transverse section $S$. Let  $s\colon  S \to \mathbb{R}^{n}$ be a chart on $S$ which introduces coordinates $\{x^i\}$, $i=1,\cdots,n$ on $S$ (also denoted with ${\bf x}$).
Every $q\in C$, reads $q=\varphi_{x^0}(s^{-1}(x^i))$, where $\varphi$ is the smooth flow of $V$. This map establishes a smooth chart on $C$. The achronality of $H$ implies that  $H$ is locally a  graph, $x^0=h({\bf x})$ where $h$ is Lipschitz. We can assume that the level sets of $x^0$ are all spacelike by taking the  height of the cylinder  sufficiently small. Thus $x^0$ is a local time function. Finally, for technical reasons, we shall introduce the coordinates on $S$ in such a way that $g^{ij}$, $i,j=1,\cdots, n$, is positive definite. By continuity we can accomplish this condition taking $S$ orthogonal to $V$ at $p$, choosing coordinates so that $g_{ij}(p)=\delta_{ij}$, and taking the cylinder sufficiently small.

The horizon $H$ can be covered by a locally finite family of cylinders $C_i$ constructed as above, and whenever two cylinders intersect, the flow of $\alpha V$, for some function $\alpha>0$, establishes a smooth diffeomorphisms between open subsets of $S_i$ and $S_j$. In other words if we locally parametrize $H$ with the coordinates ${\bf x}$ on each cylinder we get a smooth atlas for $H$ which depends on the initial choice for $V$. This smooth structure coincides with that of a quotient manifold obtained from a tubular neighborhood of $H$ under the flow of $V$.

Suppose that $(\tilde{x}^0, \tilde{{\bf x}})$ is the local chart induced by a second vector field $\tilde{V}$, and suppose that the local charts related to $V$ and $\tilde{V}$ overlap. Then in the intersection we have a smooth dependence $\tilde{x}^0(x^0,{\bf x})$, $\tilde {\bf x}(x^0,{\bf x})$. We can locally parametrize $H$ with the coordinates ${\bf x}$ or with the coordinates $\tilde {\bf x}$. The change of coordinates is  $\tilde {\bf x}(h({\bf x}),{\bf x})$ which is Lipschitz. This local argument shows that changing $V$ changes the smooth atlas assigned to $H$, where the two atlases obtained in this way are just Lipschitz compatible.
  Thus, although $H$ is a Lipschitz manifold, as a set it actually admits a smooth atlas which depends on the choice of $V$. When it comes to work with $H$ it is convenient to regard it as a Lipschitz embedding of a smooth manifold $\tilde{H}$
\[
\psi\colon \tilde{H}\to M, \qquad H=\psi(\tilde{H}).
\]
We shall denote quantities living in $\tilde H$ with a tilde.
Of course, we shall be mainly concerned with results which do not really depend on the smooth structure that we have placed on $H$. It is worth to recall that the notions of Hausdorff volumes behave well under Lipschitz changes of  chart (this is the content of the change of variable formulas \cite{evans92}).

In local coordinates $\psi: {\bf x} \mapsto (h({\bf x}),{\bf x})$, thus $\psi$ has the same differentiability properties of $h$. As proved in \cite{galloway00,chrusciel01} $h$ is $SC_{loc}$ (semi-convex). We shall actually prove that it belongs to $BSC_{loc}\subsetneq SC_{loc}$.

\begin{proposition} \label{mdp}
Locally the horizon is the graph  of a  function $h\in BSC_{loc}$, thus the embedding $\psi$ is $BSC_{loc}$ hence lower-$C^2$.
\end{proposition}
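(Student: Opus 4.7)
The plan is to strengthen the standard semi-convexity argument of \cite{galloway00,chrusciel01} so as to control the semi-convexity constant uniformly on a neighborhood of each point. By Proposition~\ref{alo}, it suffices to exhibit, on a neighborhood $U$ of any $\bar{{\bf x}}$ in the coordinate domain, a single constant $\rho>0$ such that $h$ admits a $\rho$-proximal subgradient at every point of $U$; and by Proposition~\ref{axo}, such a subgradient is produced by any $C^2$ local support of $h$ from below, with $\rho$ determined by the Hessian of the support.

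I would construct the support via past light cones. Fix $\bar{p}=(h(\bar{{\bf x}}),\bar{{\bf x}})\in H$, pick a future-directed generator $\eta$ of $H$ through $\bar p$, and set $r=\eta(t_0)\in H$ for a small but fixed $t_0>0$. Because the metric is $C^3$, the exponential map $\exp_r$ is $C^2$, so the past light cone $\partial I^-(r)$ is a $C^2$ null hypersurface on a neighborhood of $\bar p$ (we stay a definite affine distance $t_0$ from the vertex). In the cylinder coordinates it is a graph $x^0=c({\bf x})$ with $c\in C^2$ and $c(\bar{{\bf x}})=h(\bar{{\bf x}})$. The local achronality of $H$ forces $H\cap I^-(r)=\emptyset$ on this neighborhood, for otherwise a point of $H$ would be chronologically below $r\in H$; in graph form this reads $h({\bf x})\ge c({\bf x})$ near $\bar{{\bf x}}$. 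Hence $c$ is a $C^2$ local support of $h$ from below, and the proximal subgradient bound (\ref{cgf}) holds with any $\rho(\bar{{\bf x}})$ dominating the negative part of the Hessian of $c$ at $\bar{{\bf x}}$.

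To obtain uniformity, pick a relatively compact neighborhood $U$ of $\bar{{\bf x}}$. As ${\bf y}$ ranges in $\overline U$, a compactness argument together with the future extendibility of generators of a future null hypersurface allows a uniform choice of $t_0>0$; the corresponding tips $r({\bf y})$ then lie in a precompact subset of $M$ at fixed affine distance $t_0$ from $(h({\bf y}),{\bf y})$ along a past null geodesic. Since the exponential map and the past null cone bundle depend jointly $C^2$ on base point, the Hessian of the support $c_{{\bf y}}$ at ${\bf y}$ is uniformly bounded on $\overline U$. This produces a uniform constant $\rho$, and Proposition~\ref{alo} yields $h\in BSC_{loc}$; reading the bound through the chart $\psi\colon{\bf x}\mapsto(h({\bf x}),{\bf x})$ gives the claim on the embedding.

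The delicate point is precisely the uniform Hessian bound. The principal curvatures of $\partial I^-(r)$ at $\bar p$ diverge like $1/t_0$ as the affine distance $t_0$ to the vertex tends to zero, so it is essential that $t_0$ stay bounded away from zero uniformly on $\overline U$. This is what the future extendibility of generators of a future null hypersurface provides, and is precisely why the geometry of horizons produces the strictly stronger $BSC_{loc}$ property rather than merely Rockafellar's $SC_{loc}$, for which the semi-convexity constant may be unbounded on compact sets (as in the example $x^2\sin(1/x)$ recalled earlier in the section).
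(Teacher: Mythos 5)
Your argument is correct and is essentially the paper's own proof: both obtain the uniform $\rho$-proximal subgradient by supporting $h$ from below with the $C^2$ graphs of past light cones whose vertices lie a definite distance up the generators, and both get the uniform Hessian bound from the joint $C^2$ dependence of the exponential map on the vertex together with a compactness argument. The only (immaterial) difference is that the paper takes the vertices on the compact set $\p H_r$ where the generators exit a coordinate ball, whereas you take them at a fixed affine parameter $t_0$ along normalized semitangents, which requires the same compactness considerations you sketch.
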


\begin{proof}
Let $p\in H\cap C$, so that in the local coordinates of the open cylinder  $C$, $p=(h(\bar{{\bf x}}), \bar{{\bf x}})$ for some $\bar {\bf x}$.  Without loss of generality we can assume $C$ to be contained in a convex normal neighborhood. Here with $\exp$ we denote the exponential map on it.
Let $\bar B(\bar {\bf x}, r)$ denote a closed Euclidean-coordinate ball centered at $\bar{\bf x}$, and let $H_r$ be the portion of horizon which is a graph over it.
The set $H_r$ and its boundary $\p H_r$ are compact. For $q\in \p H_r$, let us consider the hypersurface $G_q:=C\cap \exp_q(-N(q))$ where $N(q)\subset T_qM$ is the set of future-directed null vectors at $q$. Since $N(q)$ is a smooth manifold and $\exp_q$ is $C^2$ this is actually a null $C^2$ hypersurface (except in $q$) which is achronal in $C$. Since $C$ is contained in a convex neighborhood, those $q$ for which $G_q$ intersect the fiber of some point of $\bar B(\bar {\bf x}, r/2)$ form a compact subset $K\subset \p H_r$. The hypersurface $G_q$ is transverse to  $V$ everywhere thus it is expressible as a graph  ${\bf x} \mapsto (f_q({\bf x}),{\bf x})$ over $\bar B(\bar {\bf x}, r/2)$ where $f_q$ is $C^2$. The function $f_q$ is actually $C^2$ in $(q,{\bf x})$ because $\exp: (q,v) \mapsto (q,\exp_q v)$ is  $C^2$ in $(q,v)$ with its inverse. The Hessian $ Hf_q({\bf x})$ is bounded from below for $(q,{\bf x}) \in  K\times \bar B(\bar {\bf x},r/2)$ by $-\rho I$ for some $\rho >0$. Thus at any point ${\bf x}\in \bar B(\bar {\bf x},r/2)$ we can find a proximal subgradient given by the differential of $f_q$ for some $q\in J^+((h({\bf x}), {\bf x}))\cap K$ where the constant $\rho$ does not depend on the point chosen in $\bar B(\bar {\bf x},r/2)$.
Thus by Prop.\ \ref{alo}, $h\in BSC_{loc}$ and thus it is lower-$C^2$. $\square$
\end{proof}

\subsection{Differentiability properties of the horizon} \label{dif}

Since $H$ is a smooth lower-$C^2$ embedded manifold, its differentiability properties can be readily obtained from those for (semi-)convex functions \cite{rockafellar82,alberti92,alberti94,cannarsa04}.

We have shown that $H$ can be expressed locally as a graph $x^0=h({\bf x})$ where $h$ is semi-convex. A change of coordinates ${x'^0}=x^0-u({\bf x})$ shows that we can assume $h$ convex (this change redefines the local hypersurface $S$ which will be still transverse to $V$ losing its spacelike character. As a consequence, $x^0$ is no more a local time function. Fortunately, this last property will not be important).

For a convex function $h$ the {\em subdifferential} $\p h (\bar{\bf x})$ can be defined as the set of ${\bf v}\in \mathbb{R}^n$ such that
\begin{equation} \label{suk}
h({\bf x})-h(\bar{\bf x})\ge {\bf  v} \cdot ({\bf x}-\bar{\bf x}).
\end{equation}
This notion can be further generalized to arbitrary locally Lipschitz functions but we shall content ourselves with the convex case.
Every convex function is Lipschitz thus almost everywhere differentiable.
On every neighborhood of $\bar{\bf x}$ there will be a dense set of points ${\bf x}$ where the differential $D h({\bf x})$ exists, and for any sequence
of such points converging to $\bar{\bf x}$, the corresponding sequence of
differentials will be bounded and will have cluster points.
The set of these limits
\[
\p^* h(\bar{\bf x})=\{\lim_{  {\bf x}_k\to \bar{\bf x},} D h( {\bf x}_k), h \textrm{ is differentiable at } {\bf x}_k \},
\]
is called {\em reachable gradient}.
Clarke
proved that $\p h (\bar{\bf x})$ is the convex hull of all
such possible limits \cite[Theor.\ 3.3.6]{cannarsa04}:
\begin{equation} \label{ldp}
\p h (\bar{\bf x})=\textrm{co} (\p^* h(\bar{\bf x})),
\end{equation}
and that this set is closed and compact.

A {\em semitangent} of a past horizon at $p\in H$ is a future-directed lightlike vector $n \in T_pM$ tangent to a lightlike generator.
\begin{proposition} \label{kdy}
The subdifferential of $h$ is related to the semitangents of $H$ as follows:
\[
\p h (\bar{\bf x})=\textrm{\em co}\{{\bf v}\colon \textrm{\em Ker}(\dd x^0-{\bf v}\cdot d {\bf x}) \textrm{ \em contains a  semitangent at } \bar p=(h(\bar{\bf x}), \bar{\bf x}) \}.
\]
\end{proposition}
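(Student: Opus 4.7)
The plan is to combine Clarke's formula \eqref{ldp}, $\p h(\bar{\bf x}) = \text{co}(\p^{*} h(\bar{\bf x}))$, with a geometric identification of the reachable gradients. The goal is to show that $\p^{*} h(\bar{\bf x})$ consists exactly of those ${\bf v}$ for which $\text{Ker}(\dd x^{0}-{\bf v}\cdot \dd{\bf x}) = n^{\perp}$ for some semitangent $n$ at $\bar p$, and then to take convex hulls on both sides. The correspondence ${\bf v}\leftrightarrow n$ is bijective because $V=\p_{0}$ is timelike, so $\dd x^{0}(n)\ne 0$ for every future-directed null $n$; moreover, whenever $h$ is differentiable at ${\bf x}$, Theorem \ref{jpf}(1) forces $H$ to be differentiable at $p=(h({\bf x}),{\bf x})$ with tangent null hyperplane $T_{p}H = \text{Ker}(\dd x^{0}-Dh({\bf x})\cdot \dd{\bf x}) = n(p)^{\perp}$, where $n(p)$ is the unique semitangent at $p$.

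For the inclusion $\supset$, I would fix a semitangent $n_{0}$ at $\bar p$ with associated ${\bf v}_{0}$ and travel along the generator $\gamma$ with $\dot\gamma(0)=n_{0}$ into its interior. For every sufficiently small $s>0$ the point $\gamma(s)$ is an interior point of $\gamma$, so only one generator passes through it and hence by Theorem \ref{jpf}(1) it is a differentiability point of $H$, hence of $h$. Continuity of the geodesic flow gives $\dot\gamma(s)\to n_{0}$, so $Dh({\bf x}_{s})\to {\bf v}_{0}$ with ${\bf x}_{s}\to\bar{\bf x}$, where ${\bf x}_{s}$ is the projection of $\gamma(s)$ onto the ${\bf x}$-chart, and thus ${\bf v}_{0}\in\p^{*}h(\bar{\bf x})$.

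For the inclusion $\subset$, I would start from ${\bf v}=\lim_{k} Dh({\bf x}_{k})\in\p^{*}h(\bar{\bf x})$ along differentiability points ${\bf x}_{k}\to\bar{\bf x}$ and let $n_{k}$ be the corresponding semitangent at $p_{k}=(h({\bf x}_{k}),{\bf x}_{k})$, normalized with respect to an auxiliary Riemannian metric. Passing to a subsequence yields $n_{k}\to n$ at $\bar p$, and continuity of the kernel condition gives $\text{Ker}(\dd x^{0}-{\bf v}\cdot \dd{\bf x}) = n^{\perp}$. To show that $n$ is a semitangent, I would apply the limit curve theorem to the inextendible null geodesics $\gamma_{k}$ through $p_{k}$ with $\dot\gamma_{k}(0)=n_{k}$: a subsequence converges to a future-inextendible causal curve $\gamma$ through $\bar p$ with $\dot\gamma(0)=n$; closedness of $H$ places $\gamma$ in $H$, and local achronality of $H$ forces $\gamma$ to be a lightlike geodesic, hence a generator.

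The main obstacle is this final limit-curve step, because one must transfer the generator structure along the sequence $\gamma_{k}$ and then invoke local achronality to upgrade the causal limit curve to an honest null geodesic lying in $H$. This is a classical manoeuvre, essentially identical to the one outlined in the proof of Theorem \ref{jpf}(2), and is the step where the geometry of horizons (rather than pure convex analysis) truly enters. Once this is settled, the proposition follows immediately by taking convex hulls on both sides of the identification of $\p^{*}h(\bar{\bf x})$.
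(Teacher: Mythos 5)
Your $\subset$ inclusion is essentially the paper's: identify the differential at differentiability points with the unique null hyperplane containing the (unique) semitangent, pass to limits of such gradients via the limit curve theorem, and invoke Clarke's formula $\p h=\textrm{co}(\p^* h)$. That part is fine. The problem is your $\supset$ inclusion. You travel into the interior of the generator and assert that interior points are differentiability points ``by Theorem \ref{jpf}(1)''. But the nontrivial direction of that statement (one generator $\Rightarrow$ differentiable) is, in this paper, Theorem \ref{qdi}(1), and its proof is: Prop.\ \ref{kdy} $\Rightarrow$ Prop.\ \ref{aao} $\Rightarrow$ the semitangents span a line $\Rightarrow$ $\dim\p h=0$ $\Rightarrow$ $\p h$ is a singleton $\Rightarrow$ $h$ is differentiable. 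So within the paper's logical development your argument is circular: you are using a consequence of Prop.\ \ref{kdy} to prove Prop.\ \ref{kdy}. (The paper is explicit that the stronger identification you are after --- that $\p^* h(\bar{\bf x})$ consists \emph{exactly} of the semitangent covectors --- is only established later, in Sect.\ \ref{awn}, precisely ``since we have at our disposal Theorem \ref{qdi} (point 1)''.)

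The paper's $\supset$ direction avoids this entirely and is weaker but sufficient: given a semitangent $n$ at $\bar p$ with associated ${\bf v}$, pick $q\in J^+(\bar p)$ on the corresponding generator inside a convex neighborhood; the graph of $\exp_q$ applied to the past null cone at $q$ is a $C^2$ hypersurface lying (by achronality) on one side of $H$ and touching it at $\bar p$, so it furnishes a $C^2$ lower support function for $h$ at $\bar{\bf x}$ with derivative ${\bf v}$. Hence ${\bf v}$ is a proximal subgradient, and for a convex function every proximal subgradient belongs to the subdifferential; convexity of $\p h(\bar{\bf x})$ then absorbs the convex hull. This shows ${\bf v}\in\p h(\bar{\bf x})$ without ever knowing that ${\bf v}$ is a \emph{reachable} gradient. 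Your argument can be repaired, but not by citation to \ref{jpf}(1): either adopt the paper's proximal-subgradient route, or prove directly that interior points of generators are differentiability points by the two-sided squeeze (the future cone of an earlier point and the past cone of a later point on the generator are $C^2$ null hypersurfaces tangent along the generator, and $H$ is trapped between them), which is independent of Prop.\ \ref{kdy}. As written, the $\supset$ step is a genuine gap.
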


\begin{proof}
Let ${\bf x}$ be a point of differentiability of $h$ in a neighborhood of $\bar {\bf x}$, and let ${\bf v} \cdot \dd {\bf x}$ be its differential.
The hyperplane $P$ defined by $\textrm{Ker}(-\dd x^0+{\bf v}\cdot d {\bf x})$
approximates the graph of $h$ and hence the horizon $H$ in a neighborhood of $p=(h({\bf x}),{\bf x})$. Through the point $p$ passes a lightlike generator and any future-directed semitangent $n$ at $p$ (which could be normalized so that $g(V,n) =-1$, $V=\p_0$) belongs to this hyperplane. Thus $P$ is either timelike or lightlike, but it cannot be timelike because of the achronality of $H$. Thus $P$ is lightlike and hence must be the only lightlike hyperplane containing $n$. In conclusion, wherever $h$ is differentiable its differential  ${\bf v} \cdot \dd {\bf x}$ is univocally determined by the condition ``$\textrm{ Ker}(-\dd x^0+{\bf v}\cdot d {\bf x})$  contains a  semitangent'', in particular  up to a proportionality constant there is just one semitangent. Observe that if the semitangent $n$ is normalized with $g(V,n)=-1$ then $-\dd x^0+{\bf v}\cdot d {\bf x}=g(\cdot,n)$.

By Eq.\ (\ref{ldp}) $\p^* h (\bar{\bf x})$ consists of the limits ${\bf v}$  of these vectors. So let ${\bf v}_k\to {\bf v}$ be such that there are ${\bf x}_k\to {\bf x}$, such that $\textrm{ Ker}(\dd x^0-{\bf v}_k\cdot d {\bf x})$ contains a (unique) semitangent $n_k$, $g(V,n_k)=-1$, at $(h({\bf x}_k),{\bf x}_k)$. By the limit curve theorem \cite{minguzzi07c} (or by continuity of the exponential map) the generators converge to a generator passing from or starting at $\bar p=(h (\bar{\bf x}),\bar{\bf x})$, and by continuity its semitangent must belong to $\textrm{ Ker}(\dd x^0-{\bf v}\cdot d {\bf x})$. Thus
\[
\p^* h (\bar{\bf x})\subset \{{\bf v}\colon \textrm{\em Ker}(\dd x^0-{\bf v}\cdot d {\bf x}) \textrm{ \em contains a  semitangent at } \bar p=(h(\bar{\bf x}), \bar{\bf x}) \}.
\]

Conversely, any semitangent at $\bar{p}$ determines a null hyperplane which can be written as $\textrm{ Ker}(\dd x^0-{\bf v}\cdot d {\bf x})$ for some ${\bf v}$. Let $q\in J^+(\bar{p})$ be a point in a convex neighborhood of $p$ which belongs to the generator passing though $\bar{p}$ in the direction of the semitangent, then $\textrm{ Ker}(\dd x^0-{\bf v}\cdot d {\bf x})$ is tangent to the exponential map of the past light cone at $q$ whose graph provides a $C^2$ lower  support function for $h$ and hence shows that ${\bf v}$ is a proximal subgradient. But for a convex function any proximal subgradient belongs to the subdifferential (that is if ${\bf v}$ satisfies (\ref{cgf}) then it satisfies (\ref{suk}), see \cite[Prop.\ 3.6.2]{cannarsa04}) thus ${\bf v}$ belongs indeed to the subdifferential. $\square$
\end{proof}

Let $h$ be a (semi-)convex function defined on an open set of $\mathbb{R}^n$. For every integer $0\le k  \le n$ let us define
\[
\tilde \Sigma^k=\{{\bf x}\colon dim (\p h({\bf x}))= k\}, \qquad \Sigma^k=\psi(\tilde \Sigma^k) .
\]
Here ``dim'' refers to the dimension of the affine hull of $\p h({\bf x})$, namely the dimension of the smallest affine subspace containing it \cite[Def.\ 1.5]{alberti94}. So if $\p h({\bf x})$ is a singleton then its dimension is zero. We stress that $dim (\p h({\bf x}))$ is unrelated with the number of semitangents at $p=(h({\bf x}),{\bf x})$. For instance for a light cone issued at the origin of  $n+1$ Minkowski spacetime   $dim (\p h({\bf x}))=n$, while at the origin it has an infinite number of non-proportional semitangents.

\begin{proposition} \label{aao}
Let $p=(h({\bf x}),{\bf x})$, then the dimension of the affine space spanned by the semitangents to $H$ at $p$ equals dim$(\p h({\bf x}))+1$.
\end{proposition}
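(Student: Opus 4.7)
The plan is to exploit the bijection implicit in the proof of Proposition \ref{kdy} between normalized future-directed null vectors at $p$ and the covectors $-\dd x^0+{\bf v}\cdot \dd {\bf x}$ that parametrize $\p h(\bar{\bf x})$, and then to translate affine dimensions between $T_pM$ and $\mathbb{R}^n$. First I would reduce to the convex case via the change of coordinates $x'^0=x^0-u({\bf x})$ of Sect.\ \ref{dif}, noting that this alters $\p h(\bar{\bf x})$ only by the translation by $Du(\bar{\bf x})$ and therefore preserves the dimension of its affine hull, while semitangents are intrinsic objects in $T_pM$ independent of the chart.

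Next I would consider the linear isomorphism $\Phi\colon T_pM\to T_p^*M$, $n\mapsto g(\cdot,n)$. In a chart with $V=\p_0$ it restricts to an affine isomorphism between the hyperplane $\{n:g(V,n)=-1\}\subset T_pM$ and the hyperplane $\{-\dd x^0+{\bf v}\cdot \dd {\bf x}:{\bf v}\in\mathbb{R}^n\}\subset T_p^*M$, with ${\bf v}$ a linear coordinate on the latter. By Proposition \ref{kdy} and its proof, $\Phi$ sends the set $\mathcal{N}$ of $g(V,\cdot)$-normalized future-directed semitangents at $p$ bijectively onto the set $N$ whose convex hull is $\p h(\bar{\bf x})$; bijectivity uses that a lightlike hyperplane contains a unique future-directed null direction up to positive scaling. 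Since the affine hull of a set and of its convex hull share the same dimension, $\dim(\p h(\bar{\bf x}))$ equals the dimension of the affine hull of $\mathcal{N}$.

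It remains to compare this with the affine span of the full set of semitangents at $p$. That set is the pointed cone (apex removed) over $\mathcal{N}$, since every semitangent is a positive multiple of a unique element of $\mathcal{N}$. For such a cone $C$ the affine hull coincides with the linear span: for any $n\in C$ both $n$ and $2n$ lie in $C$, and their affine combinations $(1-t)n+t(2n)=(1+t)n$ sweep out the entire line through the origin in direction $n$; and the linear span of $C$ equals the linear span of the slice $\mathcal{N}$. Because $\mathcal{N}$ lies in the affine hyperplane $g(V,\cdot)=-1$, which avoids the origin, its linear span has dimension exactly one greater than the dimension of its affine hull. Combining the two equalities yields $\dim(\p h(\bar{\bf x}))+1$ as the dimension of the affine span of the semitangents.

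The only subtle point is precisely this translation between affine and linear dimensions --- which is where the ``$+1$'' in the statement originates --- together with verifying that the correspondence in Proposition \ref{kdy} really is a bijection on the normalized slice rather than merely a surjective parametrization; the remainder is essentially bookkeeping.
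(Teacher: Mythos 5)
Your proof is correct and follows essentially the same route as the paper's: the affine bijection $n\mapsto {\bf v}$ induced by the musical isomorphism on the normalized slice $g(V,n)=-1$, combined with Proposition \ref{kdy} and the fact that a set and its convex hull have the same affine hull, followed by the observation that passing to the full cone of semitangents raises the affine dimension by one. The only difference is that you spell out the cone-and-hyperplane argument behind the final ``$+1$'', which the paper asserts without detail.
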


If we consider just the semitangents $n\in T_pM$, $p\in H$, normalized so that $g(V,n)=-1$ then the affine space spanned by them is dim$(\p h({\bf x}))$. We shall denote by $\Sigma^k$ the subset of $H$ of points $p$ for which  the semitangents at $p$ span a $k+1$ dimensional space, and  $\Sigma:=\cup_{i\ge 1} \Sigma^i$.

\begin{proof}
The musical isomorphism $v \mapsto g(\cdot, v)$, sends a semitangent $n$ normalized with $g(n,V)=-1$ to a 1-form $-\dd x^0+{\bf v}\cdot d {\bf x}$, which can be further sent with an affine map to ${\bf v}\in \mathbb{R}^n$. The composition of these injective affine maps is affine, thus the dimension of  the affine space spanned by the semitangents to $H$ at $p$ coincides, with the dimension of the affine space spanned by the vectors ${\bf v}$ obtained in this way which, by proposition \ref{kdy} coincides with dim$(\p h({\bf x}))$. If we remove the normalization on the semitangents the dimension of the affine hull of the semitangents increases by one which proves the proposition. $\square$
\end{proof}

Every convex function has a function of bounded variation as weak derivative \cite[Theor.\ 3, Sect.\ 6.3]{evans92} thus it is worth to recall some properties of these functions. For the following notions see e.g.\ the book by Ambrosio, Fusco and Pallara \cite{ambrosio00}.
Let $V\subset\subset U$ mean `$V$ is compactly supported in $U$'.

\begin{definition}
A function $f\in L^1_{\textrm{loc}}(U)$  has {\em locally bounded variation} in $U$ if
\[
\sup\{\int_V f \,\textrm{div} \,\varphi\, d x: \varphi \in C_c^1(V,\mathbb{R}^n),\vert \varphi \vert<1\} <\infty
\]
 for each open set $V\subset\subset U$. A similar non-local version can be given where $L^1_{\textrm{loc}}(U)$ is replaced by $L^1(U)$, and  $V\subset\subset U$ is replaced by $V=U$.
 The space of functions of (locally) bounded variation is denoted $BV(U)$ (resp.\ $BV_{\textrm{loc}}(U)$).
\end{definition}

Let us recall that a real-valued Radon measure is the difference of two (positive) Radon measures, where a  Radon measure is a measure on the $\sigma$-algebra of Borel sets which is both inner regular and locally finite. In what follows by Radon measure we shall understand {\em real} Radon measure. With $\mathcal{L}^n$ or $dx$ we shall denote the Lebesgue measure on $\mathbb{R}^n$.

A structure theorem establishes that $f$ has locally bounded variation iff its distributional derivative is a finite (vector-valued) Radon measure. More precisely  \cite[Sect.\ 5.1]{evans92} \cite[Sect.\ 5.5]{pfeffer12}:

\begin{theorem} \label{str} Let $U\subset \mathbb{R}^n$ be an open set.
A  function $f \in L^1_{\textrm{loc}}(U)$ belongs to $BV_{\textrm{loc}}(U)$ iff there is a vector Radon measure $[Df]$ with polar decomposition\footnote{This notation means measure with density $\sigma$ with respect to $\Vert Df\Vert$, that is, $d [Df] =\sigma d \Vert Df\Vert$, where $\Vert Df\Vert$ is the total variation measure, and $\sigma\colon U\to \mathbb{R}^n$ is a $\Vert Df\Vert$ measurable function such that $\vert \sigma(x)\vert=1$ $\Vert Df\Vert$ a.e..} $[Df]:= \Vert Df\Vert \,\llcorner\, \sigma$  and
\[
\int_U f \, \textrm{div}\, { \varphi}\,\dd x=-\int_U { \varphi} \cdot \sigma d \Vert Df\Vert, \qquad \forall { \varphi} \in C^1_c(U,\mathbb{R}^n).
\]
\end{theorem}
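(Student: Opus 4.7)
The plan is to prove both directions of the equivalence, with the nontrivial content being the existence of the representing measure. The converse ($\Leftarrow$) is immediate: if $[Df]$ is as in the statement, then for any open $V\subset\subset U$ and any $\varphi\in C^1_c(V,\mathbb{R}^n)$ with $\vert\varphi\vert<1$,
\[
\left|\int_V f\,\textrm{div}\,\varphi\,\dd x\right|=\left|\int_V\varphi\cdot\sigma\,d\Vert Df\Vert\right|\le\Vert Df\Vert(V)<\infty,
\]
since $\vert\sigma\vert=1$ $\Vert Df\Vert$-a.e.\ and $\Vert Df\Vert$ is locally finite as a Radon measure. Taking the supremum over such $\varphi$ shows $f\in BV_{\textrm{loc}}(U)$.

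For the direct implication ($\Rightarrow$), the idea is to apply the Riesz--Markov representation theorem to the linear functional
\[
L\colon C^1_c(U,\mathbb{R}^n)\to\mathbb{R},\qquad L(\varphi):=-\int_U f\,\textrm{div}\,\varphi\,\dd x.
\]
The hypothesis $f\in BV_{\textrm{loc}}(U)$ yields, for every open $V\subset\subset U$, a constant $C_V<\infty$ such that $\vert L(\varphi)\vert\le C_V\Vert\varphi\Vert_\infty$ for all $\varphi\in C^1_c(V,\mathbb{R}^n)$ (apply the defining supremum after rescaling to the unit ball). A standard mollification argument then extends $L$ by continuity to a bounded linear functional on $C_c(U,\mathbb{R}^n)$ with the same local bounds. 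Applying the vector-valued Riesz representation theorem (componentwise, in duality with $C_c(U)$ equipped with the inductive-limit topology) produces a vector Radon measure $[Df]$ such that
\[
L(\varphi)=\int_U\varphi\cdot d[Df],\qquad\forall\,\varphi\in C_c(U,\mathbb{R}^n),
\]
which is the asserted identity.

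The polar decomposition $[Df]=\Vert Df\Vert\,\llcorner\,\sigma$ is then a general structural feature of vector Radon measures: each scalar component $[Df]_i$ is absolutely continuous with respect to the total variation $\Vert Df\Vert$, so Radon--Nikodym provides densities $\sigma_i\in L^1_{\textrm{loc}}(\Vert Df\Vert)$ which assemble into a $\Vert Df\Vert$-measurable vector field $\sigma$ with $\vert\sigma\vert=1$ $\Vert Df\Vert$-a.e., the equality following from the minimality characterization of the total variation. The main delicate step in the program is the extension of $L$ from $C^1_c$ to $C_c$: one must check that mollification of a test field $\varphi\in C_c(U,\mathbb{R}^n)$ supported in $V$ produces $C^1_c$ approximants whose supports remain in a slightly larger set compactly contained in $U$, so that the local bound $C_V$ (or, rather, $C_{V'}$ for a neighborhood $V'$ of $\overline V$) transfers without loss through the density argument and endows the extended functional with the continuity needed for Riesz--Markov.
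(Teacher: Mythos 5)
Your argument is correct, and it is the standard proof of this structure theorem: the paper itself gives no proof, simply citing Evans--Gariepy (Sect.~5.1) and Pfeffer (Sect.~5.5), where precisely your route is followed --- the easy converse via the defining supremum, and the forward direction by extending the functional $L(\varphi)=-\int_U f\,\mathrm{div}\,\varphi\,\dd x$ from $C^1_c$ to $C_c$ by mollification and invoking the Riesz representation theorem, with the polar decomposition $[Df]=\Vert Df\Vert\,\llcorner\,\sigma$, $\vert\sigma\vert=1$ a.e., coming from the representation (or, as you package it, from Radon--Nikodym against the total variation). No gaps; your flagging of the support/locality issue in the mollification step is exactly the one point that needs care.
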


This theorem shows that $[Df]$ is the distributional gradient of $f$, $\Vert Df\Vert$ is the total variation of $[Df]$,
and
$[D_i f]:=\Vert Df\Vert \,\llcorner\, \sigma^i$ are the distributional partial derivatives of $f$ (to see this set $\varphi^j=0$ for $j\ne i$).

Let $u\in [BV_{loc}(O)]^m$, $O\subset \mathbb{R}^n$. The  set $S_u$ of points where the approximate limit of $u$ does not exist is called the {\em approximate discontinuity set}. It is a Borel set with $\sigma$-finite Hausdorff  $\mathcal{H}^{n-1}$ measure (thus negligible $\mathcal{L}^n$ measure) (see \cite[Sect.\ 5.9]{evans92}).

There  is also a Borel set $J_u\subset S_u$, called {\em (approximate) Jump set} such that $\mathcal{H}^{n-1} (S_u \backslash  J_u)=0$ and where approximate right and left limits exist everywhere, that is for every $x\in J_u$ there is a unit vector $\nu (x)$ and two vectors $u^+ (x),\, u^- (x)\in\mathbb R^m$ such that, if we  denote with $B^\pm$ the half balls
\[
B^\pm (x,r,\nu) =\{y\in B(x,r):  \pm(y-x)\cdot \nu  > 0\}\, ,
\]
then for each choice of upper or lower sign
\[
\lim_{r \to 0^+} \frac{1}{\vert B^\pm (x,r,\nu) \vert} \int_{B^\pm (x,r,\nu)} \vert u(y)-u^\pm\vert\, dy=0 .
\]
These values $u^+$, $u^-$ and $\nu$ are uniquely determined at $x\in J_u$ up to permutations $(u^+,u^-,\nu) \to (u^-,u^+,-\nu)$.
The distributional derivative of a function of bounded variation is a Radon measure which admits the Lebsegue decomposition \cite{evans92}
\[
[Du]=[Du]^a+[Du]^s,
\]
where
\[
[Du]^a= \mathcal{L}^n \llcorner D u, \qquad D u\in [L^1_{loc}(O)]^n ,
\]
 is the absolutely continuous part, and $[Du]^s$ is the singular part, that is, there is a Borel set $B$ such that $\mathcal{L}^n(O\backslash B)=0 $, and such that $[Du]^s(B)=0$. Moreover, the singular part  decomposes further as the sum of Cantor and jump parts (cf.\ \cite{ambrosio00})
\begin{equation} \label{cff}
[Du]^s=[Du]^c+[Du]^j,
\end{equation}
that is $[Du]^c(A)=0$ for every Borel set $A$ such that $\mathcal{H}^{n-1}(A)<\infty$, and for any Borel set $B$
\begin{equation} \label{mod}
[Du]^j(B)=\int_{B\cap J_u} (u^+-u^-)\otimes \nu_u \,\dd \mathcal{H}^{n-1} .
\end{equation}
A function $u\in [BV_{loc}(O)]^m$ is called {\em special} if its distributional derivative does not have Cantor part $[Du]^c=0$. These functions form a subset denoted $[SBV_{loc}(O)]^m$.

We say that a Borel subset $B$ of $\mathbb{R}^n$ is countably  Lipschitz   $\mathcal{H}^d$-rectifiable if for $i=1,2,\cdots$, there exist $d$-dimensional Lipschitz submanifolds $M_i\subset \mathbb{R}^n$ such that $\mathcal{H}^d(B\backslash \cup_i M_i)=0$ (cf.\ \cite[Def.\ A.3.4]{cannarsa04}). Observe that $\cup_i M_i$ is not demanded to be a subset of $B$.

\begin{remark}
We could have defined a finer notion of  rectifiability, replacing Lipschitzness of $M_i$ with $C^2$ differentiability as in \cite{alberti92,alberti94}, however the Lipschitz version will be fine for our purposes.
\end{remark}

The following facts are well known  \cite[Theor.\ 4.1.2, Prop.\ 4.1.3, Cor.\ 4.1.13]{cannarsa04}.
\begin{theorem} For every semi-convex function $h$ defined on an open subset  $O$ of $\mathbb{R}^n$
\begin{enumerate}
\item $\tilde \Sigma^0$ consists of those points where  $h$ is  differentiable,
\item $h$ is strongly differentiable wherever it is differentiable,
\item $\tilde \Sigma^n$ is a countable set,
\item the set of non-differentiability points $\tilde \Sigma:=\cup_{i\ge 1} \tilde \Sigma^i$ coincides with the  set $S_{Dh}$ of $Dh$ regarded as a function of $[BV_{loc}(O)]^n$, and is a countably $\mathcal{H}^{n-1}$-rectifiable set. Moreover, at $\mathcal{H}^{n-1}$-a.e. $x\in J_{Dh}$  the vector $\nu (x)$ is orthogonal to the approximate tangent space to  $J_{Dh}$ at $x$.
The vector $\nu (x)$ can be chosen so that $x\mapsto \nu (x)$ is a Borel function.
\item    $\tilde \Sigma^k$ is countably $\mathcal{H}^{n-k}$-rectifiable, thus its Hausdorff dimension is at most $n-k$.
\end{enumerate}
\end{theorem}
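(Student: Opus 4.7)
The plan is to reduce to the convex case and then appeal to classical results of convex analysis and geometric measure theory. By Theorem \ref{axd}, one may locally write $h = g - \tfrac{1}{2}\rho |\mathbf{x}|^2$ with $g$ convex; adding a smooth quadratic modifies $\p h$ only by a smooth affine shift in the subgradient variable and preserves the affine dimension of $\p h(\mathbf{x})$ at each point. So I assume throughout that $h$ is convex.

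Parts (1) and (2) are classical results of Rockafellar \cite{rockafellar70}. The equivalence between differentiability at $\bar{\mathbf{x}}$ and single-valuedness of $\p h(\bar{\mathbf{x}})$ holds for any proper convex function, which gives (1). For (2), the gradient of a convex function, restricted to its (dense) domain of existence, is continuous at every point of differentiability (a consequence of upper semicontinuity of the subdifferential multimap together with singleton-valuedness), and this continuity upgrades ordinary to strong differentiability.

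For (3) I use the standard fact that for convex $h$ the relative interiors of $\p h(\mathbf{x}_1)$ and $\p h(\mathbf{x}_2)$ are disjoint whenever $\mathbf{x}_1 \ne \mathbf{x}_2$: if some $v$ lay in both, then $v_1 := v - \epsilon(\mathbf{x}_1 - \mathbf{x}_2)$ would still belong to $\p h(\mathbf{x}_1)$ for small $\epsilon > 0$, and monotonicity $(v - v_1) \cdot (\mathbf{x}_2 - \mathbf{x}_1) \ge 0$ would yield $-\epsilon|\mathbf{x}_1 - \mathbf{x}_2|^2 \ge 0$, absurd. On $\tilde\Sigma^n$ the subdifferentials have nonempty interior, so pairwise disjointness of these open interiors in $\mathbb{R}^n$ forces $\tilde\Sigma^n$ to be at most countable.

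The substance lies in (4) and (5). Since $h$ is convex and locally Lipschitz, $Dh \in [BV_{loc}(O)]^n$, its distributional derivative being the positive-semidefinite Hessian measure of $h$. By (1)--(2), every point of differentiability is already a Lebesgue point of $Dh$, so $\tilde\Sigma = S_{Dh}$. Countable $\mathcal{H}^{n-1}$-rectifiability of $S_{Dh}$, orthogonality of the jump normal $\nu$ to the approximate tangent space of $J_{Dh}$, and the existence of a Borel selection of $\nu$ then all follow from the Federer--De Giorgi--Ambrosio structure theorem for $BV$ functions. For (5) I would invoke Alberti's stratification theorem \cite{alberti94}: each $\tilde\Sigma^k$ is countably $\mathcal{H}^{n-k}$-rectifiable. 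The principal obstacle, should one wish to prove (5) from scratch rather than cite it, is the construction of countably many Lipschitz $(n{-}k)$-dimensional manifolds covering $\tilde\Sigma^k$ up to $\mathcal{H}^{n-k}$-null sets: Alberti achieves this via Legendre--Fenchel duality, trading ``$\dim \p h(\mathbf{x}) \ge k$ at $\mathbf{x}$'' for ``$h^*$ is affine on a $k$-face incident to $\mathbf{x}$'', followed by a delicate Lipschitz parametrization of these faces. I would cite \cite{alberti94} and the presentation in \cite[Chap.\ 4]{cannarsa04} rather than reproduce this construction.
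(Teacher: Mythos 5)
The paper does not prove this theorem at all: it is stated as well known and attributed to \cite[Theor.~4.1.2, Prop.~4.1.3, Cor.~4.1.13]{cannarsa04} and, for item 5, ultimately to Alberti \cite{alberti94}. Your reduction to the convex case and your citations for items 1, 2 and 5 therefore match the paper's own sources, and your arguments for items 1--3 are essentially sound. One caveat on item 3: the ``standard fact'' as you state it --- disjointness of the \emph{relative} interiors of $\partial h(\mathbf{x}_1)$ and $\partial h(\mathbf{x}_2)$ --- is false (for $h(x)=|x|$ on $\mathbb{R}$ the subdifferential is the singleton $\{1\}$ at every $x>0$, and the relative interior of a singleton is itself), and your perturbation $v_1=v-\epsilon(\mathbf{x}_1-\mathbf{x}_2)$ is only guaranteed to remain in $\partial h(\mathbf{x}_1)$ when that set has nonempty interior in $\mathbb{R}^n$. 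That is precisely the situation on $\tilde\Sigma^n$, so the deduction you need survives, but the general claim should be dropped.

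The genuine gap is in item 4. Strong differentiability gives you that every differentiability point is a point of approximate continuity of $Dh$, i.e.\ $S_{Dh}\subset\tilde\Sigma$; but the asserted identity $\tilde\Sigma=S_{Dh}$ also requires the converse, namely that wherever $\partial h(\bar{\mathbf{x}})$ fails to be a singleton the approximate limit of $Dh$ genuinely fails to exist. This is the nontrivial half, because approximate continuity tolerates arbitrary behaviour on sets of density zero: you must produce \emph{positive-density} sets on which $Dh$ is trapped in disjoint regions. Monotonicity, which you already invoke, supplies them: pick $v_1\ne v_2$ in $\partial h(\bar{\mathbf{x}})$, set $e=(v_2-v_1)/|v_2-v_1|$, and use $(Dh(\mathbf{y})-v_i)\cdot(\mathbf{y}-\bar{\mathbf{x}})\ge 0$ at differentiability points $\mathbf{y}$. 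On a sufficiently thin cone around $\bar{\mathbf{x}}+\mathbb{R}_{+}e$ this forces $Dh(\mathbf{y})\cdot e\ge v_2\cdot e-\delta$, and on the opposite cone $Dh(\mathbf{y})\cdot e\le v_1\cdot e+\delta$, with $\delta<|v_2-v_1|/2$ once the aperture is small (using the local Lipschitz bound on $Dh$ to control $Dh(\mathbf{y})\cdot(e-u)$ where $u$ is the unit vector from $\bar{\mathbf{x}}$ to $\mathbf{y}$); both cones have positive density at $\bar{\mathbf{x}}$ and the two ranges are disjoint, so no approximate limit exists. Without this step the Federer--Vol'pert structure theorem only tells you that $\tilde\Sigma$ \emph{contains} the rectifiable set $S_{Dh}$, which is not enough for the rectifiability of $\tilde\Sigma$ itself. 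Once the identity is established, the remaining assertions of item 4 (rectifiability, orthogonality of $\nu$ on $J_{Dh}$, Borel selection) do follow from the $BV$ structure theory as you say, and delegating item 5 to Alberti's stratification theorem is exactly what the paper does.
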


Let us translate this result into some statements for the horizon.
\begin{theorem} \label{qdi} For every horizon $H$ on a $n+1$ dimensional spacetime:
\begin{enumerate}
\item $H$ is differentiable precisely at the points that belong to just one generator (i.e.\ on $\Sigma^0$),
\item $H$ is $C^1$ on the set $\Sigma^0$ with the induced topology,
\item there is at most a countable number of points  with the property that every future-directed lightlike half-geodesics issued from them is contained in $H$,
\item    the set $\Sigma$ of non-differentiability points of $H$  is  countably $\mathcal{H}^{n-1}$-rectifiable, thus its Hausdorff dimension is at most $n-1$. More generally, the set of points for which the span of the semitangents has dimension $k+1$ is countably $\mathcal{H}^{n-k}$-rectifiable, thus its Hausdorff dimension is at most $n-k$.
\end{enumerate}
\end{theorem}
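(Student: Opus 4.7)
The plan is to transfer the statements of the preceding theorem on semi-convex functions, applied to the local graph function $h$ of $H$, to the horizon via the embedding $\psi\colon\bar{\bf x}\mapsto(h(\bar{\bf x}),\bar{\bf x})$ of Section~\ref{kdp}, the Lipschitz character of $\psi$, and the dimensional correspondence $\dim\partial h(\bar{\bf x})+1=\dim(\text{aff.\ span of semitangents at }p=\psi(\bar{\bf x}))$ of Proposition~\ref{aao}. Items~1 and~4 are then essentially bookkeeping. For item~1, $p$ lies on a unique generator iff up to positive scaling there is a single semitangent at $p$, iff the affine span of semitangents has dimension~$1$, iff $\dim\partial h(\bar{\bf x})=0$ by Proposition~\ref{aao}, iff $\bar{\bf x}\in\tilde\Sigma^0$, iff $h$ (hence $\psi$, hence $H$) is differentiable at $\bar{\bf x}$. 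For item~4, Proposition~\ref{aao} gives $p\in\Sigma^k$ iff $\bar{\bf x}\in\tilde\Sigma^k$; since $\psi$ is Lipschitz it sends countably $\mathcal{H}^{n-k}$-rectifiable sets to countably $\mathcal{H}^{n-k}$-rectifiable sets (without increasing Hausdorff dimension), so items~4 and~5 of the preceding theorem transfer immediately, with item~1 identifying $\Sigma$ with the non-differentiability set of $H$.

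For item~3 I would show that every point $p$ from which every future-directed lightlike half-geodesic is contained in $H$ lies in $\Sigma^n$, after which countability of the set in item~3 follows from that of $\tilde\Sigma^n$ (item~3 of the preceding theorem). Indeed, at such a $p$ every future-directed lightlike direction is a semitangent direction, so the normalized semitangents fill the full $(n-1)$-sphere of lightlike directions in $\{g(V,\cdot)=-1\}$; this sphere has affine span of dimension $n$, and including the scaling direction yields affine span of dimension $n+1$ for the unnormalized semitangents, which by Proposition~\ref{aao} forces $\dim\partial h(\bar{\bf x})=n$.

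Item~2 is the delicate one, since $\tilde\Sigma^0$ need not be open and one cannot directly invoke ``continuous derivative on an open set implies $C^1$''. My strategy would be to upgrade the pointwise strong differentiability of $h$ at each $\bar{\bf x}\in\tilde\Sigma^0$ (item~2 of the preceding theorem) to continuity of $\bar{\bf x}\mapsto Dh(\bar{\bf x})$ along $\tilde\Sigma^0$ with the subspace topology. Given ${\bf y}_k\to\bar{\bf x}$ in $\tilde\Sigma^0$ and any unit direction ${\bf u}$, the strong differentiability inequality at $\bar{\bf x}$, applied to the pair $({\bf y}_k,{\bf y}_k+t{\bf u})$ and subtracted from the classical differentiability expansion of $h$ at ${\bf y}_k$, yields $|(Dh({\bf y}_k)-Dh(\bar{\bf x}))\cdot{\bf u}|\le\epsilon$ for $k$ large and $t$ small, uniformly in ${\bf u}$; hence $Dh({\bf y}_k)\to Dh(\bar{\bf x})$. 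Since $D\psi=(Dh,I)$ on $\tilde\Sigma^0$, this gives $C^1$ regularity of $\psi$ on $\Sigma^0$ in the induced topology. The main obstacle is precisely this upgrade from pointwise strong differentiability to continuity of the derivative on a potentially non-open set; the remaining items are essentially translations via Proposition~\ref{aao} and the Lipschitz nature of $\psi$.
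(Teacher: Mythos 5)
Your proposal is correct and follows essentially the same route as the paper: items 1, 3 and 4 are read off from Proposition~\ref{aao} together with the corresponding statements for the semi-convex graph function $h$, transferred through the (bi-)Lipschitz embedding $\psi$. For item 2 the paper simply cites the fact that strong differentiability on a set implies $C^1$ regularity on that set with the induced topology (Nijenhuis, Cannarsa--Sinestrari), and your subtraction argument with the pairs $({\bf y}_k,{\bf y}_k+t{\bf u})$ is precisely the standard proof of that cited fact, so no genuinely new ingredient is involved.
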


\begin{proof}
The second statement follows from a well known property of strong differentiability: if a function is strongly differentiable on a set then it is $C^1$ there with the induced topology \cite{nijenhuis74,mikusinski78} \cite[Prop.\ 4.1.2]{cannarsa04}.
The other statements are immediate in light of Prop.\ \ref{aao}. $\square$
\end{proof}

The first two results coincide with points 1 and 2  mentioned in the Introduction. The last result appeared  in \cite{chrusciel02} along with other results on the fine differentiability properties of horizons.

\begin{theorem} \label{pdg}
If the horizon is differentiable on an open subset or, equivalently, if it has no endpoints there, then it is  $C^{1,1}_{loc}$ there.
\end{theorem}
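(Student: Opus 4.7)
The plan is to complement the $C^2$ \emph{lower} support functions supplied by Prop.\ \ref{mdp} with a family of $C^2$ \emph{upper} support functions for the local graph function $h$, having locally uniform quadratic moduli. This will make $h$ simultaneously $\rho$-semi-convex and $\rho'$-semi-concave on a neighborhood of each point, from which $h\in C^{1,1}_{\mathrm{loc}}$ follows by the standard decomposition into convex pieces.

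To begin, Theorem \ref{jpf}(3) reduces the hypothesis to ``$O$ contains no endpoints'', and Theorem \ref{qdi}(2) supplies a continuous field of future-directed semitangents on $O$. Fix $p_0\in O$ with cylinder projection $\bar{\bf x}_0$. Since $p_0$ is not an endpoint, the generator through $p_0$ extends past-ward inside $O$ for some parameter $s>0$; denote by $q_0^-\in H\cap O$ the resulting past point on this generator. By achronality $H\cap I^+(q_0^-)=\emptyset$, so in the cylinder coordinates the future null cone $\exp_{q_0^-}(N^+(q_0^-))$ is a $C^2$ graph ${\bf x}\mapsto (F_{q_0^-}({\bf x}),{\bf x})$ over a neighborhood $B$ of $\bar{\bf x}_0$ (exactly dual to the past-null-cone argument in the proof of Prop.\ \ref{mdp}), with $F_{q_0^-}(\bar{\bf x}_0)=h(\bar{\bf x}_0)$ and $h\le F_{q_0^-}$ on $B$. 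Thus $F_{q_0^-}$ is a $C^2$ upper support for $h$ at $\bar{\bf x}_0$.

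Next I would establish uniformity. Continuity of the semitangent field together with continuous dependence of geodesics on initial data make $p\mapsto q^-(p)$ (the past generator endpoint at fixed parameter $s$) continuous from a neighborhood of $p_0$ into a compact set $K^-\subset O$. The joint $C^2$-regularity of $(q,v)\mapsto (q,\exp_q v)$ then shows that $F_{q^-}({\bf x})$ is $C^2$ jointly in $(q^-,{\bf x})$ on $K^-\times B$, after shrinking $B$ if needed so that every such cone covers it, with spatial Hessian bounded above by some $\rho'I$ on this compact set. Consequently, at every $\bar{\bf x}'\in B$ the function $h$ admits a $\rho'$-proximal supergradient with the same constant $\rho'$, i.e.\ $h$ is $\rho'$-semi-concave on $B$ (the dual of Prop.\ \ref{alo}).

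Combining with the $\rho$-semi-convexity of Prop.\ \ref{mdp}, I would write $h=g_1-\tfrac12\rho|{\bf x}|^2=\tfrac12\rho'|{\bf x}|^2-g_2$ with $g_1,g_2$ convex, so $g_1+g_2=\tfrac12(\rho+\rho')|{\bf x}|^2$ and hence $0\le D^2 g_1\le(\rho+\rho')I$ as matrix-valued measures. A non-negative singular measure dominated by a Lebesgue-absolutely-continuous one must vanish, so $D^2 g_1\in L^\infty_{\mathrm{loc}}$; therefore $g_1\in C^{1,1}_{\mathrm{loc}}(B)$ and $h\in C^{1,1}_{\mathrm{loc}}(B)$. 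Running the argument at every point of $O$ concludes the proof. The main technical hurdle will be the uniform Hessian bound of the third paragraph: $B$, $K^-$, and $\rho'$ must be chosen jointly, which requires the openness of $O$ (so that backward flow by a fixed $s$ stays inside $O$), the compactness of $K^-$, and the joint $C^2$-regularity of $\exp$, precisely the ingredients that already underlie Prop.\ \ref{mdp}.
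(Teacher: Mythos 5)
Your proposal is correct and follows essentially the same route as the paper: the paper notes that, having no endpoints, $H\cap O$ is locally also a \emph{future} horizon, so the time-dual of Prop.\ \ref{mdp} gives $-h\in BSC_{loc}$, and then cites \cite[Cor.\ 3.3.8]{cannarsa04} for the fact that a function that is both semi-convex and semi-concave with locally bounded constants has locally Lipschitz gradient. You simply unpack both steps --- constructing the upper quadratic supports explicitly from future null cones based at past points of the generators, and proving the semi-convex-plus-semi-concave implication directly --- which is a presentational rather than a substantive difference.
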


\begin{proof}
We know that if the, say,  {\em past} horizon $H$ is differentiable on an open subset $O$ then it is $C^1$ there. In particular, as it has no endpoints on $O$, it is locally generated by lightlike geodesics and so it is locally a {\em future} horizon. The graph function $h$ belongs to $BSC_{loc}$ and $H\cap O$ when regarded as a future horizon has a local graph function which differs from $-h$ by a smooth function. Thus $-h \in BSC_{loc}$ which implies that $D h$ is locally Lipschitz \cite[Cor.\ 3.3.8]{cannarsa04}. $\square$
\end{proof}

An immediate consequence is the following.

\begin{corollary}
On the interior of the set of differentiability points, namely on $\textrm{Int}\,\Sigma^0=H\backslash \overline{\Sigma}$ the horizon is $C^{1,1}_{loc}$.
\end{corollary}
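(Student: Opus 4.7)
The plan is to reduce the corollary to a direct application of Theorem \ref{pdg}. The key observation is that $\mathrm{Int}\,\Sigma^0$ is an open subset of $H$ on which the horizon is differentiable at every point, and Theorem \ref{pdg} upgrades differentiability on any open subset of $H$ to $C^{1,1}_{loc}$ regularity. So the corollary is essentially a restatement of Theorem \ref{pdg} once we verify the set identity $\mathrm{Int}\,\Sigma^0 = H\setminus\overline{\Sigma}$.

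First I would justify the set identity. By Theorem \ref{qdi}(1) the set $\Sigma^0$ consists exactly of the differentiability points of $H$, while $\Sigma=\bigcup_{i\ge 1}\Sigma^i$ consists of the non-differentiability points; these two sets partition $H$, so $\Sigma^0 = H\setminus\Sigma$. Taking interiors (in $H$ with the induced topology) yields
\[
\mathrm{Int}\,\Sigma^0 \;=\; \mathrm{Int}\,(H\setminus\Sigma) \;=\; H\setminus\overline{\Sigma},
\]
which is the stated identification and is manifestly open in $H$.

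Next I would invoke Theorem \ref{pdg}. On the open set $O:=\mathrm{Int}\,\Sigma^0$ every point belongs to $\Sigma^0$, so $H$ is differentiable on $O$. Theorem \ref{pdg} then applies and gives $H\in C^{1,1}_{loc}(O)$, i.e.\ the local graph function $h$ on $O$ has locally Lipschitz first derivatives. This is exactly the claim of the corollary.

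There is no real obstacle: all the analytic content (the $BSC_{loc}$ character of both $h$ and, under the differentiability hypothesis, $-h$, and the ensuing local Lipschitzness of $Dh$ via \cite[Cor.\ 3.3.8]{cannarsa04}) has been absorbed into Theorem \ref{pdg}. The only conceptual step is recognizing that whenever an open neighborhood lies in $\Sigma^0$ it contains no endpoints, which is the hypothesis required by Theorem \ref{pdg} (and equivalent, by Theorem \ref{jpf}(3), to differentiability of $H$ on that neighborhood).
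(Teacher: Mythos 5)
Your proposal is correct and follows exactly the route the paper intends: the corollary is stated as an immediate consequence of Theorem \ref{pdg}, obtained by observing that $\mathrm{Int}\,\Sigma^0=H\setminus\overline{\Sigma}$ is an open set consisting of differentiability points and then applying that theorem. Your explicit verification of the set identity via Theorem \ref{qdi}(1) is a harmless elaboration of what the paper leaves implicit.
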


\begin{remark}
If $h^i$ is a local graph function then $D h^i\in [BV_{loc}]^n$ and  $[\textrm{Hess} \, h^i]$ is a ($n\times n$ matrix) Radon measure.
Observe that the graph function $h^i$ relative to the cylinder $C^i$ of the covering of $H$ differs,  on an open subset $\tilde O$  of $\tilde{H}$, from the graph function $h^j$ of a neighboring cylinder  by a smooth function. As a consequence, $[\textrm{Hess} \, h^i]^s(\tilde O)=0$ iff $[ \textrm{Hess} \,h^j]^s(\tilde O)=0$. Thus it makes sense to say that the
singular part of the measure $[ \textrm{Hess} \, h]$ vanishes on some open set  as this statement does not depend on the chosen graphing function.
\end{remark}

\begin{remark} \label{rel}
Every convex function over $\mathbb{R}$ for which the Hessian has no singular part is necessarily $C^1$ as its weak derivative has an absolutely continuous representative. This result does not generalize to functions in many variables, for instance $f(x,y)=\sqrt{x^2+y^2}$ is convex and its Hessian measure is absolutely continuous with respect to Lebesgue over Borel sets which do not contain the origin. But $Df$ is a function of bounded variation hence the singular part of its differential vanishes when evaluated on sets with vanishing $\mathcal{H}^{n-1}$-measure, $n=2$. The set consisting of just the origin is one such set. Thus the Hessian measure is non-singular since the singular part does not  charge neither the origin nor its complement. This example shows that the support of the singular part of the Hessian measure is not necessarily the sets of non-differentiability points for $h$.
It can be also observed that in this example  $Df$ is a function of bounded variation for which $J_{Df}$ is empty.
Finally, observe that $f$ is the graph function of a future light cone (a past horizon) with vertex at the origin of 2+1 Minkowski spacetime.
\end{remark}

\subsection{Propagation of singularities} \label{awn}
The previous theorems constrained the size of the non-differentiability set $\Sigma$ from above. In this section we  present some results on the propagation of singularities which follow from analogous result on the  theory of semi-convex functions \cite{cannarsa04}. Intuitively they constrain  the size of $\Sigma$ from below. The rest of the work does not depend on this section.

Since we have  at our disposal Theorem \ref{qdi} (point 1), Prop.\ \ref{kdy} can be improved as follows.

\begin{proposition}
The reachable gradient of $h$ is related to the semitangents of $H$ as follows:
\[
\p^* h (\bar{\bf x})= \{{\bf v}\colon \textrm{\em Ker}(\dd x^0-{\bf v}\cdot d {\bf x}) \textrm{ \em contains a  semitangent at } \bar p=(h(\bar{\bf x}), \bar{\bf x}) \}.
\]
Furthermore, it is precisely the set of extreme points of $\p h  (\bar{\bf x})$.
\end{proposition}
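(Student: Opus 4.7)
This refines Prop.\ \ref{kdy}, where the subdifferential was identified only with the \emph{convex hull} of the set on the right-hand side. Denote that set by $T$. The inclusion $\p^* h(\bar{\bf x})\subset T$ is essentially contained in the proof of Prop.\ \ref{kdy}: if $Dh({\bf x}_k)\to {\bf v}$ along a sequence of differentiability points ${\bf x}_k\to \bar{\bf x}$, then the null hyperplane associated to $Dh({\bf x}_k)$ contains the unique semitangent at $(h({\bf x}_k),{\bf x}_k)$, and a limit curve argument produces a semitangent at $\bar p$ lying in $\textrm{Ker}(\dd x^0-{\bf v}\cdot \dd{\bf x})$.

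For the reverse inclusion $T\subset \p^* h(\bar{\bf x})$, I would construct the approximating sequence directly from the generator. Given ${\bf v}\in T$, let $n$ be the associated future-directed semitangent at $\bar p$, normalized by $g(V,n)=-1$, and let $\gamma$ be the generator of $H$ with $\gamma(0)=\bar p$, $\gamma'(0)=n$. Pick $\bar p_k=\gamma(s_k)$ with $s_k\to 0^+$; each $\bar p_k$ is an interior point of $\gamma$, hence lies on that generator only, so by Thm.\ \ref{qdi}(1) the graphing function $h$ is differentiable at the base coordinate ${\bf x}_k$ of $\bar p_k$. The tangent $\gamma'(s_k)$, rescaled so that $g(V,\gamma'(s_k))=-1$, is the unique semitangent at $\bar p_k$ and depends continuously on $s_k$, so the covector $Dh({\bf x}_k)$ that it determines converges to ${\bf v}$. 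Hence ${\bf v}\in \p^* h(\bar{\bf x})$.

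Finally, for the extreme-point characterisation, the easy inclusion $\textrm{ext}\,\p h(\bar{\bf x})\subset \p^* h(\bar{\bf x})$ follows from Clarke's identity (\ref{ldp}) together with the compactness of $\p^* h(\bar{\bf x})$: the extreme points of the convex hull of a compact set necessarily lie in the set itself. The reverse inclusion — that every reachable gradient is an extreme point of the subdifferential — is a classical fact of semi-convex analysis that I would quote from Cannarsa--Sinestrari. Its verification combines the strong (Peano) differentiability of $h$ at the approximating points ${\bf x}_k$ with the convex support inequality (\ref{suk}) written at $\bar{\bf x}$ for any putative representation ${\bf v}=(1-\lambda){\bf v}_1+\lambda{\bf v}_2$, ${\bf v}_1\neq{\bf v}_2$, $\lambda\in(0,1)$; passing to the limit $k\to\infty$ in the resulting first-order expansions forces ${\bf v}_1={\bf v}_2$. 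I expect this extremality step to be the only delicate point of the proof, as all the Lorentzian ingredients are already in place from Prop.\ \ref{kdy} and Thm.\ \ref{qdi}, and the remainder is decoupled from the geometry.
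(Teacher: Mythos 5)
Your identification of $\p^* h(\bar{\bf x})$ with the set $T$ on the right-hand side is correct and follows the paper's own route: the inclusion $\p^* h\subset T$ is the limit-curve argument already present in the proof of Prop.~\ref{kdy}, and for $T\subset \p^* h$ you, like the paper, approximate $\bar p$ by interior points of the generator, which are differentiability points by Theorem~\ref{qdi}, and let the continuously varying unique semitangents there converge. The easy half of the extremality claim, $\textrm{ext}\,\p h(\bar{\bf x})\subset\p^* h(\bar{\bf x})$, via Milman's theorem applied to Clarke's identity (\ref{ldp}) and the closedness of the reachable gradient, is also fine.

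The gap is in the converse inclusion $\p^* h(\bar{\bf x})\subset \textrm{ext}\,\p h(\bar{\bf x})$. This is \emph{not} a classical fact of (semi-)convex analysis, and you will not find it in Cannarsa--Sinestrari, because it is false for general convex functions on $\mathbb{R}^n$ with $n\ge 3$ (the physical case): take $C=\textrm{co}\bigl(\{(x,y,0):x^2+y^2=1\}\cup\{(1,0,1),(1,0,-1)\}\bigr)$ and let $f$ be its support function, $f(x)=\sup_{c\in C}\langle c,x\rangle$; then $\p f(0)=C$, and $(1,0,0)$ is the limit of the gradients $(\cos\theta,\sin\theta,0)$ attained at nearby differentiability points $t\,(\cos\theta,\sin\theta,0)$, so it belongs to $\p^* f(0)$, yet it is the midpoint of a segment of $C$ and hence not extreme. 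Your sketched verification cannot close this either: the support inequality (\ref{suk}) written for ${\bf v}=(1-\lambda){\bf v}_1+\lambda{\bf v}_2$ is just the convex average of the inequalities for ${\bf v}_1$ and ${\bf v}_2$, and passing to the limit in first-order expansions yields no contradiction. The step genuinely requires the Lorentzian input, which is what the paper uses: by the first part of the proposition every element of $\p^* h(\bar{\bf x})$ corresponds, under the affine bijection ${\bf v}\mapsto n$ fixed by $-\dd x^0+{\bf v}\cdot\dd{\bf x}=g(\cdot,n)$ and $g(V,n)=-1$, to a point of the normalized light cone, a strictly convex ellipsoid. If some $\bar{\bf v}\in\p^* h(\bar{\bf x})$ were not extreme, the Choquet--Bishop--de~Leeuw theorem would represent it as the barycenter of a probability measure on $\p^* h(\bar{\bf x})$ not concentrated at a single point; the associated $\int n\,\dd\mu$ is then a nontrivial average of distinct lightlike vectors, hence timelike, so $\textrm{Ker}(\dd x^0-\bar{\bf v}\cdot\dd{\bf x})$ would be spacelike and could not contain a semitangent, contradicting $\bar{\bf v}\in\p^* h(\bar{\bf x})$. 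You should replace the appeal to a general theorem by this strict-convexity-of-the-light-cone argument.
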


\begin{proof}
One inclusion is already proved  (see the proof of Prop.\ \ref{kdy}). For the converse: any semitangent determines a null hyperplane which can be written as $\textrm{ Ker}(\dd x^0-{\bf v}\cdot d {\bf x})$ for some ${\bf v}$. Consider a sequence of points $q_k=(h_k, {\bf x}_k)$ on a geodesic segment generated by the semitangent. Since these points belong to the interior of a generator, the horizon  and hence $h$ is differentiable on them so the null plane orthogonal to the semitangent is there $\textrm{ Ker}(-\dd x^0+{\bf v}_k\cdot d {\bf x})$ for some ${\bf v}_k$ which by continuity converge to ${\bf v}$ as ${\bf x}_k \to {\bf x}$. Thus ${\bf v}\in \p^* h$.

It is clear that the extreme points of $\p h  (\bar{\bf x})$ must be contained in $\p^* h (\bar{\bf x})$. Now suppose that $\bar {\bf v}\in \p^* h (\bar{\bf x})$ is not an extreme point then by the Choquet-Bishop-de Leeuw theorem there is a normalized measure $\mu$ supported on $\p^* h (\bar{\bf x})$ and not supported on just a single point such that $\bar {\bf v}=\int_{\p^*h} {\bf v} d \mu$. Let us normalize the semitangents so that $g(n,V)=-1$, then the semitangent is related to ${\bf v}$ by $-\dd x^0+{\bf v} \cdot \dd {\bf x}=g(\cdot, n)$, thus integrating $-\dd x^0+\bar {\bf v} \cdot \dd {\bf x}=g(\cdot, \int_{\p^*h}  n d \mu)$, but the term on the left-hand side is a one form which annihilates the null hyperplane determined by a semitangent while the right-hand side is a one form which annihilates a spacelike hyperplane. The contradiction proves the desired result. $\square$
\end{proof}

A semitangent $n$ at $p$ is {\em reachable} if there is a sequence of semitangents $n_k$ at some differentiability points of the horizon $p_k\in \Sigma^0$ such that $n_k \to n$ on $TM$.

It is convenient to refer the convex hull of the semitangents at $p\in H$  as the {\em subdifferential to the horizon} at $p$ (we can also call in this way the family of non-timelike hyperplanes orthogonal to one of these vectors). The map $\beta\colon {\bf v} \to n$, where $n$ is uniquely determined by $-\dd x^0+{\bf  v}\cdot d {\bf x}=g(\cdot, n)$ gives an affine bijection between the subdifferential to $h$ and the subdifferential to the horizon restricted to the normalized semitangents: $g(V,n)=-1$.  In particular, the subdifferential to the horizon is a closed convex cone. This bijection sends also the reachable gradient to the set of reachable semitangents. Thus the previous proposition states that the set of reachable semitangents coincides with the set of extreme points of the subdifferential of the horizon.

\begin{theorem} Let $H$ be a past horizon.
\begin{itemize}
\item[(a)] Suppose that at $p\in \Sigma$  the subdifferential of the horizon  does not coincide with the future causal cone at $p$, then there is a Lipschitz curve $\sigma\colon [0,\rho]\to \Sigma$, $\sigma(0)=p$, such that $\sigma(s)\ne p$ for $s\in \ ] 0,\rho]$ and with respect to a complete Riemannian metric the aperture angle of the subdifferential cone over $\sigma$ is larger than a positive constant (thus the singularity is not isolated for $p\in \Sigma^k, 1\le k\le n-1$).
\item[(b)] Suppose that  $p \in  \Sigma^k, 1\le k\le n-1$, then there is a Lipschitz map $\sigma: D\to \Sigma$, $D\subset \mathbb{R}^{n-k}$, where $D$ is a $n-k$-dimensional disk centered at the origin, such that $\sigma(0)=p$, $\sigma(D)$ possesses a tangent space at $p$ and the $\mathcal{H}^{n-k}$ density of $\sigma(D)$ at $p$ is positive.
\end{itemize}
\end{theorem}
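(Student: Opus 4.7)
The plan is to reduce both statements to the classical propagation of singularities for semi-convex (equivalently semi-concave) functions via the local graph representation of $H$. Recall from Prop.\ \ref{mdp} that in a convex neighborhood of $p$ the horizon is the graph $x^0 = h(\mathbf{x})$ with $h \in BSC_{loc}$, and after the smooth change of coordinates ${x'}^0 = x^0 - u(\mathbf{x})$ used in Sect.\ \ref{dif} we may further assume $h$ to be convex, which does not affect either the stratification $\tilde\Sigma^k$ or the differentiability structure of $H$. The affine bijection $\beta$ described after Prop.\ \ref{aao} identifies $\partial h(\bar{\mathbf{x}})$ with the normalized semitangents at $\bar p$, and hence $\Sigma^k$ with $\psi(\tilde\Sigma^k)$, so every geometric statement about the subdifferential cone of the horizon translates into an equivalent convex-analytic statement about $\partial h$.

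For part (a), the hypothesis that the subdifferential of the horizon at $p$ does not fill the full future causal cone translates, via $\beta$, into the statement that $\partial h(\bar{\mathbf{x}})$ is a proper closed convex subset of the convex image of that cone; in particular $\partial h(\bar{\mathbf{x}})$ admits an exposed face of dimension at least $1$. I would then invoke the Albano-Cannarsa propagation theorem (see \cite[Theor.\ 4.2.2]{cannarsa04}) to obtain a non-trivial Lipschitz arc $\tilde\sigma\colon [0,\rho]\to\tilde\Sigma$, $\tilde\sigma(0)=\bar{\mathbf{x}}$, along which $\partial h(\tilde\sigma(s))$ contains an exposed face whose diameter is uniformly bounded below. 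Setting $\sigma := \psi\circ\tilde\sigma$ and transporting via $\beta$, the family of semitangents attached to $\sigma(s)$ spans a cone whose aperture angle with respect to any fixed complete auxiliary Riemannian metric is uniformly bounded away from zero. The parenthetical remark is then automatic: for $p\in\Sigma^k$ with $1\le k\le n-1$, Prop.\ \ref{aao} shows that the subdifferential has dimension $k+1\le n$, strictly less than $\dim($causal cone$)=n+1$, so the hypothesis of (a) holds.

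For part (b), I would directly appeal to the higher-codimension propagation result for semi-convex functions due to Alberti-Ambrosio-Cannarsa (see \cite{alberti94} and \cite[Chap.\ 4]{cannarsa04}): whenever $\dim\partial h(\bar{\mathbf{x}})=k$ with $1\le k\le n-1$, there exists a Lipschitz map $\tilde\sigma\colon D\to\tilde\Sigma^k$, defined on an $(n-k)$-disk $D\subset\mathbb{R}^{n-k}$ centered at the origin, such that $\tilde\sigma(0)=\bar{\mathbf{x}}$, the image $\tilde\sigma(D)$ admits a tangent space at $\bar{\mathbf{x}}$, and the $\mathcal{H}^{n-k}$ density of $\tilde\sigma(D)$ at $\bar{\mathbf{x}}$ is positive. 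Composing with $\psi$ (which is locally bi-Lipschitz onto $H$ by Prop.\ \ref{mdp}) transfers both the tangent-space property and the positive-density property to $\sigma=\psi\circ\tilde\sigma$ with values in $\Sigma^k\subset\Sigma$, as Hausdorff measures and tangent spaces are preserved under Lipschitz charts.

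The heavy analytic lifting is entirely contained in the cited propagation theorems, so the work here is principally one of translation. The main obstacle is twofold. First, one must verify that the Lorentzian condition in (a) corresponds precisely to the convex-analytic hypothesis of an exposed face of positive dimension in $\partial h(\bar{\mathbf{x}})$, and that the cone of subgradients produced by the theorem along the propagation arc maps under $\beta$ to a subdifferential cone of semitangents whose Riemannian aperture is bounded below; since $\beta$ depends on the base point, one must track $\beta$ along the arc and control the distortion, which is harmless because $\beta$ is a smoothly varying affine isomorphism. Second, the convexification step producing a truly convex $h$ should be performed so that the notions of exposed face and aperture are preserved under the $C^2$ perturbation; this follows because the perturbation is smooth and therefore does not alter $\partial h$ beyond a global affine translation depending smoothly on $\mathbf{x}$.
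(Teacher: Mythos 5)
Your overall strategy---translate everything to the convex graphing function $h$ via $\beta$ and then invoke the propagation theorems of Cannarsa--Sinestrari---is exactly the paper's, but there is a genuine gap in how you feed the hypotheses to those theorems. Theorem 4.2.2 of \cite{cannarsa04} does not take ``$\p h(\bar{\bf x})$ is a proper convex subset'' or ``$\p h(\bar{\bf x})$ admits an exposed face of dimension $\ge 1$'' as input; its hypothesis is that $\p h(\bar{\bf x})\setminus \p^* h(\bar{\bf x})\ne\emptyset$, i.e.\ that some subgradient is \emph{not} a reachable gradient. Likewise Theorem 4.3.2 (your part (b)) requires, in addition to $\dim \p h(\bar{\bf x})=k$, that the \emph{relative interior} of $\p h(\bar{\bf x})$ contain a point outside $\p^* h(\bar{\bf x})$; quoting it as if $\dim\p h=k$ alone sufficed is incorrect. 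Nowhere in your argument does the reachable gradient $\p^* h$ appear, so neither hypothesis is verified. Your substitute condition for (a) is also not justified as stated: a proper closed convex subset of a ball need not have an exposed face of positive dimension, and even if it did, you would still need to connect exposed faces to non-reachability.

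The missing ingredient is precisely the geometric observation the paper builds the proof on. First, one upgrades Prop.\ \ref{kdy} (using point 1 of Theorem \ref{qdi}) to the statement that $\p^* h(\bar{\bf x})$ is \emph{exactly} the set of vectors ${\bf v}$ corresponding to semitangents, and that these coincide with the extreme points of $\p h(\bar{\bf x})$. Second, all such ${\bf v}$ lie on a fixed $(n-1)$-dimensional ellipsoid, the affine image under $\beta^{-1}$ of the normalized light cone. Strict convexity of the ellipsoid then does the work: for (a), if the subdifferential cone is a proper subcone of the causal cone, the topological boundary of $\p h(\bar{\bf x})$ must contain a point interior to the ellipsoid, which therefore corresponds to no semitangent and hence lies in $\p h\setminus\p^* h$; for (b), since the affine hull of $\p h(\bar{\bf x})$ has dimension $k<n$, every relative interior point of $\p h(\bar{\bf x})$ is off the ellipsoid and hence outside $\p^* h(\bar{\bf x})$. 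Without this ellipsoid/reachable-gradient argument your proof does not get the propagation theorems off the ground. (A smaller inaccuracy: the conclusion of Theorem 4.2.2 is a uniform lower bound on $\mathrm{diam}\,\p h$ along the arc, not on the diameter of an exposed face; this is what translates into the lower bound on the aperture angle.)
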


\begin{proof}
Proof of (a). Let $p=(h({\bf x}), {\bf x})$ in the usual coordinates. Observe that the family of ${\bf v}$ such that  $n$, determined by $-\dd x^0+{\bf v}\cdot d {\bf x}=g(\cdot, n)$, is lightlike and hence normalized $g(V,n)=-1$, is a $n-1$-dimensional ellipsoid as the bijection $\beta\colon {\bf v} \to n$ is affine with affine inverse and the subset of the light cone of vectors $n$ such that $g(V,n)=-1$ is a $n-1$-dimensional ellipsoid. The set $\p h({\bf x})$ is obtained from the convex hull of the vectors ${\bf v}$ of this ellipsoid which correspond to the semitangents, thus there is a point in the boundary of  $\p h({\bf x})$  interior  to the ellipsoid if and only if not all vectors tangent to the light cone are semitangent.

From the assumption we get that the boundary of $\p h({\bf x})$ in $\mathbb{R}^n$ contains some ${\bf v}$ such that $\beta^{-1}({\bf v})$ is not a semitangent. Since $\p^* h({\bf x})$ consists of vectors which correspond to semitangents (the reachable ones) there are points in the boundary of $\p h({\bf x})$ which do not belong to $\p^* h({\bf x})$ and so the result follows from \cite[Theor.\ 4.2.2]{cannarsa04}.

Proof of (b). Recall that the {\em relative interior} of a convex set is the interior with respect to the topology of the minimal affine space containing the convex set. By Theorem \cite[Theor.\ 4.3.2, Remark 4.3.5]{cannarsa04} we need only to show that there is some point in the relative interior of $\p h({\bf x})$ which does not belong to   $\p^* h({\bf x})$, but this is obvious from the previous ellipsoid construction taking into account that $\p h({\bf x})$ has an affine hull which has dimension smaller than $n$. $\square$
\end{proof}

\subsection{Mathematical preliminaries: the divergence theorem}
Let us introduce the sets of finite perimeter \cite{giusti84,evans92,ambrosio00,hofmann07,pfeffer12}.

\begin{definition}
An $\mathcal{L}^n$-measurable subset $E\subset \mathbb{R}^n$ has {\em (locally) finite perimeter} in $U$ if $\chi_E\in BV(U)$ (resp.\ $BV_{\textrm{loc}}(U)$).
\end{definition}

An open set $E\subset \mathbb{R}^n$ has locally finite perimeter iff $\chi_E$ has locally bounded variation, namely $D\chi_E$ is a Radon measure. Following \cite{evans92} we write $\Vert \p E\Vert$ for $\Vert D\chi_E \Vert$ and call it {\em perimeter (or surface) measure}, and we write $\nu_E:=-\sigma$. Thus in a set of finite perimeter the following result holds
\[
\int_E \textrm{div} \varphi\, d x=\int_U \varphi \cdot \nu_E \, d \Vert \p E\Vert, \qquad \forall \varphi \in C^1_c(U,\mathbb{R}^n).
\]

Later we shall use the divergence theorem over domains of  finite perimeter over a horizon. Fortunately, we do not have to specify the vector field $V$ and the corresponding  smooth structure that it determines on the horizon (Sect.\ \ref{kdp}). Indeed, we observed that they are all Lipschitz equivalent and the sets of locally finite perimeter are sent to sets of locally finite perimeter  under Lipeomorphisms (locally Lipschitz homeomorphism with locally Lipschitz inverse) \cite[Sect.\ 4.7]{pfeffer12}.

The following portion of the coarea theorem  helps us to establish whether a set has finite perimeter \cite[Prop.\ 5.7.5]{pfeffer12} \cite[Sect.\ 5.5]{evans92}.

\begin{theorem} \label{kct}
Let $f\in BV_{\textrm{loc}}(U)$ then $E_t:=\{x\in U: f(x)> t\}$ has locally finite perimeter for $\mathcal{L}^1$ a.e.\ $t \in \mathbb{R}$.
\end{theorem}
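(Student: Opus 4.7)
The plan is to prove the inequality
\[
\int_{\mathbb{R}} P(E_t, V)\, dt \le \|Df\|(V'),
\]
for every pair of open sets $V \subset\subset V' \subset\subset U$, where $P(E_t, V) := \|D\chi_{E_t}\|(V)$. Once this is in hand, the right-hand side is finite since $f \in BV_{\textrm{loc}}(U)$, so $P(E_t, V) < \infty$ for $\mathcal{L}^1$-a.e.\ $t$; exhausting $U$ by a countable sequence of such $V$ then delivers the claim.

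First I would establish the smooth version of the identity (the classical coarea formula). For $g \in C^1(V')$, Sard's theorem and the standard slicing argument give
\[
\int_{V'} |\nabla g|\, dx = \int_{\mathbb{R}} \mathcal{H}^{n-1}\bigl(\{g = t\} \cap V'\bigr)\, dt = \int_{\mathbb{R}} P(\{g>t\}, V')\, dt.
\]
Next I would approximate $f$ in $BV$ by smooth functions. By the Meyers--Serrin-type density theorem in $BV$ (Anzellotti--Giaquinta), one can pick $f_k \in C^\infty(V')$ with $f_k \to f$ in $L^1(V')$ and $\int_{V'} |\nabla f_k|\, dx \to \|Df\|(V')$.

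The crucial link between the approximation and the level sets is Cavalieri's identity: for every measurable $\varphi, \psi$,
\[
\int_{V} |\varphi - \psi|\, dx = \int_{\mathbb{R}} \mathcal{L}^n\bigl(\{\varphi > t\} \triangle \{\psi > t\} \cap V\bigr)\, dt .
\]
Applied with $\varphi = f_k$ and $\psi = f$ and combined with $f_k \to f$ in $L^1(V)$, a diagonal extraction yields a subsequence (not relabelled) such that $\chi_{E_t^{f_k}} \to \chi_{E_t^{f}}$ in $L^1(V)$ for $\mathcal{L}^1$-a.e.\ $t \in \mathbb{R}$. By the lower semicontinuity of the perimeter with respect to $L^1_{\mathrm{loc}}$ convergence,
\[
P(E_t, V) \le \liminf_{k \to \infty} P(E_t^{f_k}, V) \qquad \text{for a.e.\ } t.
\]

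Now I would combine everything via Fatou's lemma and the smooth coarea identity: choosing a cutoff or simply using monotonicity of $P(\cdot, V) \le P(\cdot, V')$,
\[
\int_{\mathbb{R}} P(E_t, V)\, dt \le \liminf_{k} \int_{\mathbb{R}} P(E_t^{f_k}, V')\, dt = \liminf_{k} \int_{V'} |\nabla f_k|\, dx = \|Df\|(V') < \infty.
\]
In particular $P(E_t, V) < \infty$ for $\mathcal{L}^1$-a.e.\ $t$. I expect the main obstacle to be verifying the pointwise-in-$t$ convergence $\chi_{E_t^{f_k}} \to \chi_{E_t^{f}}$ in $L^1_{\mathrm{loc}}$ cleanly (the Cavalieri-Fubini step and the subsequence extraction), since the lower semicontinuity of the perimeter and Fatou's lemma are routine once that is secured; the smooth approximation with convergence of total variations is a standard, if non-trivial, tool.
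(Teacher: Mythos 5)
Your argument is correct and is essentially the standard proof of the ``easy half'' of the coarea formula for $BV$ functions; the paper does not prove Theorem \ref{kct} itself but cites it from Evans--Gariepy (Sect.\ 5.5) and Pfeffer, and your chain (smooth coarea identity via Sard, strict smooth approximation, Cavalieri plus a subsequence to get $L^1$ convergence of the level sets for a.e.\ $t$, lower semicontinuity of the perimeter, Fatou) is precisely the argument found there. The only step you leave implicit is the measurability of $t\mapsto P(E_t,V)$, needed to pass from finiteness of $\int_{\mathbb{R}}P(E_t,V)\,\dd t$ to finiteness of the integrand for a.e.\ $t$; this is routine, since $P(E_t,V)$ is a supremum over a countable family of test fields $\varphi$ of the functions $t\mapsto\int_{E_t}\textrm{div}\,\varphi\,\dd x$, each of which is measurable in $t$ by Fubini.
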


As immediate consequence is the following

\begin{corollary} \label{lls}
Let $H$ be a horizon and let $\tau\colon U \to \mathbb{R}$ be a locally Lipschitz time function defined on a neighborhood of $H$. Then for almost every $t$, the sets $\{x\in H: \pm \tau(x)>t\}$ have locally finite perimeter.
\end{corollary}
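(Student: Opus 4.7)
The plan is to reduce the statement to Theorem \ref{kct} applied in local coordinate charts on the smooth manifold $\tilde{H}$ introduced in Sect.\ \ref{kdp}. Once $H$ is identified with $\psi(\tilde{H})$ for the Lipschitz embedding $\psi$ coming from a global timelike vector field $V$, the function $\tilde\tau:=\tau\circ\psi\colon \tilde{H}\to\mathbb{R}$ is locally Lipschitz, since both $\tau$ and $\psi$ are. Working in a local chart $({\bf x})$ on $\tilde{H}$ induced by $V$, $\tilde\tau$ becomes a locally Lipschitz function on an open subset of $\mathbb{R}^n$, hence lies in $W^{1,\infty}_{loc}\subset BV_{loc}$.

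With $\tilde\tau$ in $BV_{loc}$, Theorem \ref{kct} applies directly: the superlevel sets $\{{\bf x}:\tilde\tau({\bf x})>t\}$ have locally finite perimeter for $\mathcal{L}^1$-a.e.\ $t\in\mathbb{R}$. To handle the sets $\{\tau<-t\}$, one applies the same theorem to $-\tilde\tau\in BV_{loc}$, which also lies in $BV_{loc}$, yielding that $\{\tilde\tau<t\}$ has locally finite perimeter for a.e.\ $t$. The union of the two exceptional null sets in $\mathbb{R}$ is still null, so for a.e.\ $t$ both $\{\pm\tilde\tau>t\}$ have locally finite perimeter in the chart.

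It remains to see that this local statement patches consistently and is independent of the auxiliary data ($V$ and the choice of charts on $\tilde H$). The horizon $H$ is covered by a locally finite family of cylinders $C_i$, and the transition maps between the associated charts on $\tilde H$ are Lipeomorphisms (as noted in Sect.\ \ref{kdp}). Since local finite perimeter is preserved under Lipeomorphisms \cite{pfeffer12}, the property that $\{\pm\tau>t\}\cap H$ has locally finite perimeter is intrinsic to $H$ and does not depend on the chart or on the choice of $V$. Thus the local conclusions for the charts $\tilde\tau|_{\text{chart}_i}$ combine into a single statement: for $\mathcal{L}^1$-a.e.\ $t$, $\{x\in H:\pm\tau(x)>t\}$ has locally finite perimeter.

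The only delicate point is the interchange of ``a.e.\ $t$'' with the countable covering by cylinders: since the covering is locally finite and in particular can be taken countable, the set of bad $t$'s is a countable union of $\mathcal{L}^1$-null sets, hence still $\mathcal{L}^1$-null. I do not expect any substantive obstacle beyond this bookkeeping, because the heavy lifting—both the $BV$ coarea statement and the Lipeomorphism invariance—has been recorded in the preceding subsections. $\square$
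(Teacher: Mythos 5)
Your proposal is correct and follows essentially the same route as the paper: in local coordinates the pulled-back time function ${\bf x}\mapsto\tau(h({\bf x}),{\bf x})$ is locally Lipschitz (composition of Lipschitz maps), hence $BV_{loc}$, and Theorem \ref{kct} gives the conclusion, the minus case being analogous. Your extra remarks on chart-independence via Lipeomorphism invariance and on the countable union of null exceptional sets are correct bookkeeping that the paper leaves implicit.
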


This result states that for almost every $t$ the intersections of the $t$-level set of a time function with a horizon is sufficiently nice for our purposes as its measure  is locally finite (it separates the horizon in open sets of finite perimeter). Observe that we cannot claim that for any spacelike hypersurface $S$, the intersection $H\cap S$ bounds a set of finite perimeter. However, we can always build a time function $\tau$ so that $S$ becomes a level set of it (e.g.\ consider volume Cauchy time functions on $H(S)$), so that $H\cap S$ is approximated by boundaries of sets of finite perimeter.

\begin{proof}
Let us consider the plus case, the minus case being analogous. Let $p\in H$ and let $C\ni p$ be a cylinder of the covering introduced in Sect.\ \ref{kdp}, with its coordinates $(x^0,{\bf x})$.
The function $\tau\vert_C$ and $h$ are locally Lipschitz. The composition of (vector-valued) locally Lipschitz functions is locally Lipschitz thus $\tau(h({\bf x}),{\bf x})$ has locally bounded variation (one can also use the fact that the composition of a  Lipschitz function and a vector-valued function of bounded variation has bounded variation \cite[Sect.\ 3.10]{ambrosio00}).
The desired conclusion follows from Theorem \ref{kct}. A different argument could use the results on level sets of Lipschitz functions contained in \cite{alberti13}. $\square$
\end{proof}

If the time function is sufficiently smooth we can say much more (but we shall not use the next result in what follows).
A closed subset of $\mathbb{R}^n$ has {\em positive reach} if it is possible to roll a ball over its boundary \cite{federer59}. These sets have come to be known under different names, e.g. Vial-weakly convex sets \cite{vial83} or proximally smooth sets \cite{clarke75}.

\begin{theorem} \label{ljs}
Let $H$ be a past horizon and let $\tau\colon U \to \mathbb{R}$ be a $C^2$  time function with timelike gradient defined on a neighborhood of $H$. Then for  every $t$, the set $H_t:=\{x\in H: \tau(x)\le t\}\subset H$ has locally positive reach and hence has locally finite perimeter.
\end{theorem}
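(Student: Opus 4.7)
My plan is to work in the local smooth structure on $H$ introduced in Sect.~\ref{kdp} and realize $H_t$ locally as the sublevel set of a semi-convex function on $\mathbb{R}^n$ whose subdifferential never contains the origin; positive reach will then follow from a uniform exterior-ball construction based on the proximal subgradient inequality. Around any $p\in H_t$ I pick a cylinder $C\ni p$ with coordinates $(x^0,{\bf x})$ as in Sect.~\ref{kdp}, oriented so that $\p_0$ is future-directed timelike; shrinking $C$ and replacing $\tau$ by $-\tau$ if necessary, I may assume $\p_0\tau>0$ on $C$, i.e., $\tau$ is strictly monotone in $x^0$. Then $H\cap C$ is the graph $x^0=h({\bf x})$ with $h\in BSC_{loc}$ by Prop.~\ref{mdp}, and $H_t\cap C$ is the graph of $h$ over the sublevel set $\tilde H_t:=\{{\bf x}:F({\bf x})\le t\}$ with $F({\bf x}):=\tau(h({\bf x}),{\bf x})$.

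The first step is to show that $F$ is itself semi-convex. The proof of Prop.~\ref{mdp} in fact provides $h$ as a max of $C^2$ functions on a small ball: the past-light-cone graph functions $f_q$ (for $q$ in the compact set $K\subset \p H_r$ constructed there) satisfy $h\ge f_q$ -- otherwise the graph point $(h({\bf x}),{\bf x})$ would lie in $I^-(q)$, contradicting the achronality of $H$ since $q\in H$ -- and for every ${\bf x}$ there is some $q({\bf x})\in K$ with $f_{q({\bf x})}({\bf x})=h({\bf x})$, so $h=\max_{q\in K} f_q$. Since $\tau$ is strictly increasing in $x^0$ the max commutes with $\tau$:
\[
F({\bf x})=\max_{q\in K}\tau(f_q({\bf x}),{\bf x}),
\]
and the argument of the max is jointly $C^2$ in $(q,{\bf x})$ with Hessian in ${\bf x}$ uniformly bounded from below. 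Hence $F$ is lower-$C^2$ with a locally bounded semi-convexity constant $\rho'>0$, i.e., $F\in BSC_{loc}$ (Prop.~\ref{alo}).

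The second step shows $0\notin\p F$ everywhere. Because $h$ and $F$ share the same differentiability points (inverting $\tau$ in its first variable is $C^2$ and monotone) and the chain rule $DF=(\p_0\tau)Dh+(\p_i\tau)_{i=1}^n$ holds at these points, passage to reachable gradients and then to convex hulls gives
\[
\p F({\bf x})=\{(\p_0\tau)(p)\,{\bf v}+(\p_i\tau(p))_{i=1}^n:{\bf v}\in\p h({\bf x})\},\qquad p=(h({\bf x}),{\bf x}).
\]
If $0\in\p F({\bf x})$, then ${\bf v}:=-(\p_i\tau)/\p_0\tau\in\p h({\bf x})$, and by Prop.~\ref{kdy} this ${\bf v}$ corresponds to a normalized semitangent $n$ via $g(\cdot,n)=-\dd x^0+v^i\dd x^i$; substituting,
\[
\dd\tau=(\p_0\tau)(\dd x^0-v^i\dd x^i)=-(\p_0\tau)\,g(\cdot,n),
\]
so $\nabla\tau\propto n$ would be lightlike, contradicting the timelike hypothesis. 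By upper semicontinuity of $\p F$ (standard for semi-convex functions) one obtains a uniform lower bound $|{\bf v}|\ge c_0>0$ along any compact portion of $\{F=t\}$.

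Finally, the $\rho'$-proximal subgradient inequality at ${\bf x}_0\in\{F=t\}$ reads $F({\bf x})\ge t+{\bf v}\cdot({\bf x}-{\bf x}_0)-\tfrac{\rho'}{2}|{\bf x}-{\bf x}_0|^2$, and a completion of squares shows that the open ball $B_{|{\bf v}|/\rho'}({\bf x}_0+{\bf v}/\rho')$ is entirely contained in $\{F>t\}$ and tangent to $\tilde H_t$ at ${\bf x}_0$. The uniform lower bound on $|{\bf v}|$ then yields a uniform exterior ball along $\p\tilde H_t$, which is the standard characterization of locally positive reach of $\tilde H_t$ in $\mathbb{R}^n$; transporting back through the bi-Lipschitz chart $\psi$ gives locally positive reach of $H_t\subset H$, and the local finite perimeter follows from Federer's classical theorem for sets of positive reach \cite{federer59}. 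The delicate point is the correct identification of $\p F$ as the affine image of $\p h$ in the non-smooth regime: one must verify that reachable gradients are preserved by the chain rule, which relies on the invertibility of the affine map and on the continuity of $\p_0\tau$ and $\p_i\tau$. Once this is in place, the monotonicity of $\tau$ in $x^0$ (turning $h=\max_q f_q$ into $F=\max_q\tau(f_q,\cdot)$) and the elementary fact that the metric dual of a normalized semitangent is never parallel to a timelike covector do all the remaining work.
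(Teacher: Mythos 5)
Your overall strategy---realizing $H_t$ locally as the sublevel set $\{F\le t\}$ of the semi-convex function $F=\tau(h(\cdot),\cdot)$ and extracting positive reach from a uniform exterior-ball estimate---is essentially the paper's second argument, transplanted from the spacelike slice $\tau^{-1}(t)$ to the quotient coordinates on $H$ (the paper's first, more robust argument instead builds the exterior balls directly in $\tau^{-1}(t)$ from the past light cones of points $q\in J^{+}(p)\cap H$, $q\ne p$). Your Step 1 ($h=\max_{q\in K}f_q$, hence $F\in BSC_{loc}$) is fine. The gap is in Step 2. The identity $\p F=(\p_0\tau)\,\p h+D_{\bf x}\tau$ is correct, but the inference ``$0\in\p F({\bf x})\Rightarrow\nabla\tau$ is lightlike'' is not: by Prop.~\ref{kdy}, $\p h({\bf x})$ is the \emph{convex hull} of the vectors ${\bf v}$ associated with semitangents, so a general ${\bf v}\in\p h({\bf x})$ corresponds under the affine bijection $\beta$ to a convex combination of normalized lightlike semitangents, i.e.\ to a future-directed \emph{causal, possibly timelike}, vector $n$; the relation $\nabla\tau\propto n$ with $n$ timelike contradicts nothing. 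Concretely, for the future light cone with vertex at the origin of Minkowski spacetime (a past horizon, cf.\ Remark~\ref{rel}) and $\tau=x^{0}$, one has $F({\bf x})=h({\bf x})=\vert{\bf x}\vert$ and $\p F(0)$ is the closed unit ball, which contains $0$ even though $\nabla\tau$ is timelike. Hence the asserted uniform bound $\inf\{\vert{\bf v}\vert:{\bf v}\in\p F({\bf x})\}\ge c_{0}>0$ on compact portions of $\{F=t\}$ is false in general, and the derivation of the uniform exterior-ball radius collapses.

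What your timelikeness argument actually shows is that every \emph{reachable} gradient of $F$ is nonzero, since by the proposition of Sect.~\ref{awn} the elements of $\p^{*}h$ (equivalently, the extreme points of $\p h$) do correspond to genuine lightlike semitangents. A repair along your lines is possible: $\p^{*}F$ is nonempty and compact-valued with closed graph, so $\min\{\vert{\bf w}\vert:{\bf w}\in\p^{*}F({\bf x})\}$ is lower semicontinuous and positive, hence bounded below on compacta, and each ${\bf w}\in\p^{*}F\subset\p F$ is a $\rho'$-proximal subgradient furnishing an exterior ball of radius $\vert{\bf w}\vert/\rho'$. But you should then also tighten the last step: a single exterior tangent ball of uniform radius at each boundary point is \emph{not} ``the standard characterization of locally positive reach''; Federer's criterion requires the quadratic estimate for \emph{every} proximal normal direction, which is exactly why Vial's proposition (the one the paper cites) hypothesizes a positive lower bound on $\Vert\xi\Vert$ over the \emph{whole} subdifferential near the boundary---a hypothesis that, as the light-cone vertex shows, is not available here. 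So either you must verify that the normal directions not covered by $\p^{*}F$ cause no harm, or you should fall back on the paper's first construction, which produces exterior balls of uniform radius in all the relevant normal directions at once.
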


\begin{proof}
Since $\tau$ is a $C^1$ time function the level set $S=\tau^{-1}(t)$ is a spacelike hypersurface. Let $p\in S\cap H$. Due to the special type of differential structure placed on $H$, which can be identified with the differential structure of a manifold locally transverse to the flow of $V$,    in a neighborhood of $p$  we have a local diffeomorphism between $H$ and $S$. Thus in order to prove the claim we have only to prove it for the set $C=S\cap J^{+}(H)$ near $p\in \p C$ where we regard $C$ as a subset of $S$. Let us prove that $C$ has positive reach. We give two proofs.

The first argument is similar to Prop.\ \ref{mdp} but worked `horizontally' instead of `vertically'. We pick some $q\in J^{+}(p)\backslash\{p\}\cap H$ and consider the intersection of  its past cone with $S$. This intersection provides a $C^2$ codimension one manifold on $S$ tangent to $C$ and intersecting $C$ just on $p$ (by achronality of $H$ near $q$). We can therefore find a small closed coordinate ball of radius $r(p)>0$ entirely contained in $S\backslash C$ but for the point $p$. By a continuity argument similar to that worked out in  Prop.\ \ref{mdp} we can find a $r>0$ independent of the point in a  compact neighborhood of $p$.

As a second argument, let ${\bf x}$ be coordinates on $S$ near $p$. Let us introduce near $p$ a spacetime coordinate system in such a way that $V=\p_0$, and let $x^0=f({\bf x})$ be the graph determined by $H$ in a neighborhood of $p$. We known that $f$ is $\rho$-lower-$C^2$, furthermore since $S$ is spacelike $H$ is not tangent to it, thus in a compact neighborhood $K$ of $p$,  $m=\textrm{inf} \{\Vert \xi\Vert, \xi \in \p f(y), y\in \p C\cap K\}$ is positive where $\p f$ is the subdifferential (cf. Sect.\ \ref{dif}). Thus, by \cite[Prop.\ 4.14(ii)]{vial83} $C$ is locally Vial-weakly convex which is equivalent to say that locally it has  positive reach \cite[Prop.\ 3.5(ii)]{vial83}.
Finally, a set of locally positive reach has locally finite perimeter \cite[Theor.\ 4.2]{colombo06}. $\square$
\end{proof}

\begin{remark} One could ask whether the intersection of the horizon with the spacelike level set of the time function contains `few'  non-differentiability  points of the horizon. The answer is negative.
It is easy to construct examples of past horizons for which for some $t$, $H\cap \tau^{-1}(t)$ consists of non-differentiability points, take for instance a circle $C$ in the plane $x^0=0$ of $2+1$ Minkowski spacetime  and define $H=\p J^+(C)$. Then defined $\tau=x^0$ we have that $H\cap \tau^{-1}(0)$ consists of non-differentiability points of the horizon. Thus the regularity of the intersection of the spacelike level set with $H$ in $H$ has little to do with the presence of non-differentiability points of $H$ on that intersection.
\end{remark}

\subsubsection{Reduced and essential boundaries}

Let $E$ be a set of locally finite perimeter in $\mathbb{R}^n$.
\begin{definition}
The {\em reduced boundary} $\mathcal{F} E$ consists of points $x\in \mathbb{R}^n$ such that
\cite[Sect.\ 5.7]{evans92}
\begin{itemize}
\item[(a)] $\Vert \p E\Vert (B(x,r))>0$ for all $r>0$,
\item[(b)] $\lim_{r \to 0} \frac{1}{\Vert \p E\Vert (B(x,r))} \int_{B(x,r)} \nu_E \, d \Vert \p E\Vert=\nu_E(x)$,
\item[(c)] $\vert \nu_E(x)\vert=1$.
\end{itemize}
\end{definition}
In short the reduced boundary consists of those points for which the average minus gradient vector of $\chi_E$ coincides with itself and is normalized \cite{ambrosio00}. According to the Lebesgue-Besicovitch differentiation theorem $\Vert \p E\Vert(\mathbb{R}^n-\mathcal{F} E)=0$. The function $\nu_E\colon  \mathcal{F} E \to S^{n-1}$ is called {\em generalized exterior normal} to $E$.

Let the {\em density} of a Borel set $E$ at $x$ be defined by
\[
d(E,x):= \limsup_{r \to 0} \frac{\mathcal{L}^n(B(x,r)\cap E)}{\mathcal{L}^n(B(x,r))} ,
\]
and let $E^\delta=\{x\colon d(E,x)=\delta \}$.
The set $E^1$ is the measure theoretic interior of $E$ and $E^0$ is the measure theoretic exterior of $E$. The {\em essential (or measure theoretic) boundary} $\p^* E$ is $\mathbb{R}^n\backslash (E^0\cup E^1)$.

The two boundaries are related by
\[
 \mathcal{F}E\subset E^{1/2} \subset \p^* E\subset \p E, \quad \mathcal{H}^{n-1}(\p^* E-\mathcal{F} E)=0.
\]
Moreover, up to a set of negligible $\mathcal{H}^{n-1}$ measure, every point belongs either to $E^1$, $E^{1/2}$ or $E^0$. Furthermore, $S_{\chi_E}=\p^* E$, $J_{\chi_E}\subset E^{1/2}$, cf.\ \cite[example 3.68]{ambrosio00}.
An important structure theorem \cite[Sect.\ 5.7.3]{evans92} by De Giorgi and Federer establishes that $\Vert \p E\Vert$ is the restriction of $\mathcal{H}^{n-1}$ to $\p^* E$ (and analogously for $\mathcal{F} E$). Moreover,  up to a $\mathcal{H}^{n-1}$ negligible set $\p^* E$ is the union of countably many compact pieces of $C^1$-hypersurfaces, that is, these boundaries are rectifiable and, moreover, over these differentiable pieces $\nu_E$ coincides with the usual normal to the $C^1$ hypersurface.

The divergence (Gauss-Green) theorem will involve an integral of the measure $\mathcal{H}^{n-1}$ over the reduced boundary. However, in the boundary term one can replace $\mathcal{F} E$ with any among $J_{\chi_E}$, $E^{1/2}$ or $\p^* E$.
In general the topological boundary cannot be used because it might have a rather pathological behavior, for instance, it can have non-vanishing $\mathcal{L}^n$ measure.

A set $E$ of finite perimeter may be altered by a set of Lebesgue measure zero and
still determine the same measure-theoretic boundary $\p^*E$. In order to remove this ambiguity, let $\textrm{cl}_* E:=\p^*E\cup E^1=\mathbb{R}^n\backslash E^0$ be the set of points of density of $E$.  A set of finite perimeter is {\em normalized} if  $\textrm{cl}_* E= E$, cf.\  \cite{pfeffer12}. It is known that $\textrm{cl}_* E \subset \textrm{cl} E$ \cite[Sect.\ 4.1]{pfeffer12} and that $E$ and $\textrm{cl}_* E$ differ by a set of vanishing $\mathcal{L}^n$ measure \cite[Theor.\ 4.4.2]{pfeffer12}, thus $\textrm{cl}_* \textrm{cl}_* E=\textrm{cl}_* E$.

\subsubsection{Lipschitz domains}

 It is worth to recall the notion of Lipschitz domain, although in our application we shall use the divergence theorem on smoother domains (for the proof of the smoothness of compact Cauchy horizon) or rougher domains (in the application to Black hole horizons).

\begin{definition}
A {\em Lipschitz domain} on a smooth manifold is an open subset $D$ whose boundary $\p D$ is locally representable  as the graph of a Lipschitz function in a local atlas-compatible chart.
\end{definition}

Lipschitz domains are quite natural because for them the topological boundary $\p D$ coincides with the {\em essential boundary} $\p^{*} D$, namely the measure theoretical notion of boundary \cite[Prop.\ 4.1.2]{pfeffer12}.
Any Lipschitz domain has locally finite perimeter \cite[Prop.\ 4.5.8]{pfeffer12}.

For Lipschitz domains  the normal $\nu_D$ can be obtained using the differentiability of the local graph map almost everywhere \cite{hofmann07}. Thus if $\p D$ is the graph of a Lipschitz function $\varphi\colon O \to \mathbb{R}$, $O\subset \mathbb{R}^{n-1}$, in some $\mathbb{R}^n$-isometric local coordinates
then the outward unit normal near $(x_0,\varphi(x_0))$ has the usual expression in terms of $\nabla \varphi$
\[
\nu_D(x,\varphi(x))= \frac{(\nabla \varphi(x),-1)}{\sqrt{1+\vert \nabla \varphi(x) \vert^2}}
\]
for a.e.\ $x$ near $x_0$, where the Euclidean area element is $\dd S=\sqrt{1+\vert \nabla \varphi(x) \vert^2} \dd x$.

\subsubsection{Divergence measure field}

For the next notion see \cite{chen01,chen05,silhavy05}.

\begin{definition}
A vector field $v\in L^1_{\textrm{loc}}(U,\mathbb{R}^n)$ is said to be a {\em divergence measure field} in $U$ if there is a Radon measure $\mu$ such that
\[
\int_U v\cdot D \varphi\,  dx=-\int_U \varphi \,d \mu, \quad \forall \varphi\in C^1_c (U)
\]
in which case we define $[\textrm{div} v]=\mu$.
\end{definition}

A vector field in $L^1_{\textrm{loc}}(U,\mathbb{R}^n)$  whose components belong to $BV_{\textrm{loc}}(U)$ is a divergence measure field (use the fact that the distributional partial derivative of $v^i$ is $[D_j v^i]$ (Theor.\ \ref{str}) in $\int_U v\cdot D \varphi\,  dx=\sum_i \int_U v^i D_i \varphi\,  dx =-\sum_i \int_U \varphi d [D_i v^i]$), see also the stronger result \cite[Prop.\ 3.4]{chen01}.

In what follows we shall be interested in divergence measure fields of bounded variation which  belong to $L^\infty_{\textrm{loc}}(U,\mathbb{R}^n)$ where $n+1$ is the spacetime dimension. As a consequence, due to the general properties of functions of bounded variation, the measure $\textrm{div} v$ will be absolutely continuous with respect to $\mathcal{H}^{n-1}$ (this fact follows from the decomposition (\ref{cff}), see also \cite[Theor.\ 3.2]{silhavy05}). These fields are {\em dominated} in {\v S}ilhav{\'y}'s terminology \cite[p.\ 24]{silhavy05}, a fact that will simplify the definition of trace that we shall give in a moment.

\subsubsection{The divergence theorem}
Let $v\in [BV_{\textrm{loc}}(U)]^n\cap L^\infty_{\textrm{loc}}(U,\mathbb{R}^n)$.
Let $a_n$ be the volume of the $n$-dimensional Euclidean unit ball. We define a function $S_v\colon \mathbb{R}^n\times S^{n-1}\to \mathbb{R}$ by
\[
S_v(x,\nu):=\lim_{r\to 0} \frac{n}{a_{n-1} r^n} \int_{B^-(x,r,\nu)} v(y) \cdot \frac{x-y}{\vert x-y\vert }\, d y,
\]
if the limit exists and is finite, and 0 otherwise.
This is a generalization of the scalar product $v(x) \cdot \nu$. Wherever  $v$ is continuous $S_v(x,\nu)=v(x) \cdot \nu$. However, more generally one should be careful because the equality $S_v(x,\nu)=-S_v(x,-\nu)$ holds only if $[\textrm{div} v]$ has no singular part on $\p^*E$ \cite[Eq.\ 4.2]{silhavy05}.

Federer has shown that the divergence theorem holds for domains with locally finite perimeter provided the vector field is Lipschitz \cite[Theor.\ 6.5.4]{pfeffer12}.
In what follows we shall use the next stronger version recently obtained by {\v S}ilhav{\'y} \cite[Theor.\ 4.4(i)]{silhavy05} and Chen, Torres and Ziemer \cite[Theor.\ 1]{chen05}, \cite[Theor.\ 5.2]{chen09}.
\begin{theorem} \label{hxo}
Let $v\in [BV_{\textrm{loc}}(U)]^n\cap L^\infty_{\textrm{loc}}(U,\mathbb{R}^n)$, let $[\textrm{div}\, v]$ be the divergence measure, and let $\varphi$ be  a locally Lipschitz function with compact support, then for every normalized set of locally finite perimeter $E$
\begin{equation} \label{ldd}
\int_E \varphi \, d [\textrm{div}\, v]+\int_E D\varphi \cdot v \, d x=\int_{\p^* E} \varphi \, S_v(x,\nu_E) \, d \mathcal{H}^{n-1}.
\end{equation}
\end{theorem}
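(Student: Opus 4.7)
The plan is to prove the identity by mollification combined with the classical De Giorgi--Federer generalized Gauss--Green formula for sets of finite perimeter. First I would fix a standard Friedrichs mollifier $\rho_\epsilon$ and set $v_\epsilon := v * \rho_\epsilon$, which is smooth and locally bounded uniformly in $\epsilon$. For each smooth $v_\epsilon$ and the normalized set $E$ of locally finite perimeter the classical divergence theorem gives
\[
\int_E \varphi \,(\textrm{div}\, v_\epsilon)\,\dd x + \int_E D\varphi \cdot v_\epsilon \, \dd x = \int_{\p^* E} \varphi \,(v_\epsilon \cdot \nu_E)\,\dd \mathcal{H}^{n-1},
\]
where the integral over $\p^*E$ equals the one over $\mathcal{F}E$ by the De Giorgi--Federer structure theorem. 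The $L^\infty_{\textrm{loc}}$ bound on $v$ together with $v_\epsilon \to v$ in $L^1_{\textrm{loc}}$ disposes of the second interior integral, and since $v\in BV_{\textrm{loc}}\cap L^\infty_{\textrm{loc}}$ the divergence measure is \emph{dominated}, i.e.\ absolutely continuous with respect to $\mathcal{H}^{n-1}$, hence $\textrm{div}\, v_\epsilon\,\dd x \rightharpoonup [\textrm{div}\,v]$ weakly-$*$ and the first interior integral passes to the limit against the compactly supported locally Lipschitz $\varphi$.

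Next I would identify the limit of the boundary term. At $\mathcal{H}^{n-1}$-a.e.\ $x\in\mathcal{F} E$ the set $E$ admits the half-space $\{y\colon (y-x)\cdot(-\nu_E(x))>0\}$ as blow-up limit. Expanding $v_\epsilon(x)\cdot\nu_E(x)$ as a weighted radial average of $v$ on such a half-ball and shrinking the scale, one recognizes precisely the integrand appearing in the definition of $S_v(x,\nu_E(x))$. Combining the rectifiability of $\p^* E$, the identification $\Vert \p E\Vert = \mathcal{H}^{n-1}\llcorner \p^*E$, and the uniform $L^\infty$ bound on $v_\epsilon$, dominated convergence on the rectifiable boundary yields
\[
\int_{\p^*E}\varphi\,(v_\epsilon\cdot\nu_E)\,\dd\mathcal{H}^{n-1}\longrightarrow \int_{\p^*E}\varphi\,S_v(x,\nu_E)\,\dd\mathcal{H}^{n-1}.
\]
This proves \eqref{ldd} for $\varphi\in C^1_c$, and the extension to a locally Lipschitz, compactly supported $\varphi$ is then routine by a further mollification in $\varphi$, using the Lipschitz bound to dominate $D\varphi$ and dominated convergence on both sides.

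The main obstacle is the rigorous construction of the normal trace $S_v$ and the proof that the half-ball averages defining it converge at $\mathcal{H}^{n-1}$-a.e.\ point of $\p^*E$ to a quantity that simultaneously represents the distributional normal trace defined abstractly by the divergence identity itself. This is the content of the Anzellotti-type pairing for bounded divergence-measure fields: one must control the singular part of $[Dv]$ on the rectifiable set $\p^*E$ using the domination property, establish a Lebesgue-type differentiation theorem adapted to $(n-1)$-rectifiable supports, and carry out the blow-up analysis in the spirit of De Giorgi. Once these fine properties of BV fields against rectifiable boundaries are in hand, the remaining pieces of the argument are standard approximation and dominated convergence.
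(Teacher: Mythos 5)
This theorem is not proved in the paper at all: it is quoted from {\v S}ilhav{\'y} \cite[Theor.\ 4.4(i)]{silhavy05} and Chen--Torres--Ziemer \cite{chen05,chen09}, so there is no in-paper argument to compare against. Judged on its own, your mollification sketch has a genuine gap, and it sits exactly where the theorem is hard. You pass to the limit in the three terms of the smoothed identity \emph{separately}, claiming (a) $\int_E\varphi\,(\textrm{div}\,v_\epsilon)\,\dd x\to\int_E\varphi\,\dd[\textrm{div}\,v]$ by weak-$*$ convergence, and (b) $\int_{\p^*E}\varphi\,(v_\epsilon\cdot\nu_E)\,\dd\mathcal H^{n-1}\to\int_{\p^*E}\varphi\,S_v(x,\nu_E)\,\dd\mathcal H^{n-1}$. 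Both claims fail in general. For (a): weak-$*$ convergence of $\textrm{div}\,v_\epsilon\,\mathcal L^n$ to $[\textrm{div}\,v]$ only lets you test against continuous functions; $\chi_E\varphi$ is discontinuous on $\p^*E$, and $[\textrm{div}\,v]$, being merely absolutely continuous with respect to $\mathcal H^{n-1}$, may charge $\p^*E$. Take $n=1$, $v=\chi_{(0,\infty)}$, $E=[0,\infty)$ (which is normalized), so $[\textrm{div}\,v]=\delta_0$ and $\int_E\varphi\,\dd[\textrm{div}\,v]=\varphi(0)$, while $\int_E\varphi\,(\textrm{div}\,v_\epsilon)\,\dd x\to\varphi(0)/2$ for a symmetric mollifier. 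For (b): $v_\epsilon(x)$ is a \emph{symmetric} full-ball average of $v$, whereas $S_v(x,\nu_E)$ is a \emph{one-sided}, radially weighted half-ball average; on the jump set of $v$ these differ, and in the same example $v_\epsilon(0)\to 1/2$ while the one-sided trace is $0$ or $1$ depending on the side. The two errors cancel in the mollified identity---which is why your scheme looks consistent---but that is precisely why you cannot identify each limit with the corresponding term of \eqref{ldd}. The distinction between the inner and outer normal traces (the paper's remark that $S_v(x,\nu)=-S_v(x,-\nu)$ only when $[\textrm{div}\,v]$ has no singular part on $\p^*E$) is exactly the phenomenon your argument erases.

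Deferring ``the rigorous construction of the normal trace'' to an unproved Anzellotti-type pairing therefore does not leave only routine work: the a.e.\ existence of the half-ball limits defining $S_v$ on $\p^*E$, and their identification with the distributional trace, \emph{is} the theorem. The published proofs avoid your difficulty by not mollifying $v$ at a fixed domain: Chen--Torres--Ziemer approximate the \emph{set}, applying the identity on the smooth superlevel sets $\{\chi_E*\rho_\epsilon>t\}$ for a.e.\ $t$ and using a product rule $[\textrm{div}(wv)]=w^\ast[\textrm{div}\,v]+\overline{v\cdot Dw}$ for $w\in BV\cap L^\infty$, while {\v S}ilhav{\'y} first defines the normal trace abstractly as the functional $\varphi\mapsto\int_E\varphi\,\dd[\textrm{div}\,v]+\int_E D\varphi\cdot v\,\dd x$ and then proves the integral representation by the averaging formula. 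If you want to salvage a proof along your lines, you would need a one-sided mollification adapted to the blow-up half-space at $\mathcal H^{n-1}$-a.e.\ point of $\mathcal F E$, together with a careful bookkeeping of how much of $[\textrm{div}\,v]\llcorner\p^*E$ each side captures---at which point you have essentially reconstructed the cited proofs.
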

The right-hand side, regarded as a functional on $Lip_c(U)$, is called {\em normal trace}.

Observe that if $E$ has compact closure then $\varphi$ need not have compact support since we can modify it just outside $\bar{E}$ to make it of compact support.

\begin{remark} \label{mmm}
The first integral on the left-hand side of Eq.\ (\ref{ldd}) can be further split into two terms thanks to the decomposition of $[\textrm{div}\, v]$ in a component absolutely continuous with respect to $\mathcal{L}^n$ and a singular component. If we are given a set $E$ which is not normalized then we can apply the divergence theorem to $cl_* E$, then the first term in the mentioned splitting is $\int_{cl_* E} \varphi \, d [\textrm{div}\, v]^a$ and we can replace $cl_* E$ by $E$ since these sets are equivalent in the  $\mathcal{L}^n$ measure.
\end{remark}

\subsection{Volume and area} \label{voa}
Let us suppose that $H$ is $C^2$ and let $n$ be a $C^1$ future-directed lightlike vector field tangent to it.
 We define the  volume over $H$ as the measure $\mu_H$ defined by
\begin{equation} \label{kka}
-g(n,V)\, \mu_H= i_V\mu_M
\end{equation}
where $\mu_M$ is the volume $n+1$-form on spacetime and $\mu_H$ is evaluated just on the tangent space to $H$. This choice of volume is independent of the transverse field $V$ but it depends on $n$, namely on the scale of $n$ over different generators. It is indeed impossible to give a unique natural notion of volume for $H$. This is not so for its smooth transverse sections which have an area measured by the form $\sigma$
\[
-g(n,V) \,\sigma= i_n i_V\mu_M
\]
which is independent of both $n$ and $V$ when the form $\sigma$  is evaluated on the tangent space to the section.

\begin{remark} \label{lll}
The section is a codimension 2 submanifold which belongs to a second local horizon $H'$ with semitangent $n'$. Since the corresponding forms on the section, $\sigma$ and $\sigma'$, do not depend no the choice of $V$ we can take $V=n+n'$, from which we obtain $\sigma=-\sigma'$.
\end{remark}

\begin{remark}
Introduce on the $C^2$ horizon $H$ a function $s$ which measures the integral parameter of the flow lines of $n$ starting from some local transverse section to $H$. Then on each flow line $n=\frac{d}{d s}$, and the volume reads $\mu_H=\dd s\wedge \sigma$.
\end{remark}

In the non-smooth case we place on $\tilde{H}$ a measure which is related to Eq.\ (\ref{kka})
\begin{equation} \label{aaz}
\tilde\mu_H:=\frac{\sqrt{-\vert g\vert (h({\bf x}), {\bf x})}}{-g(n,V)} \, \dd x^1\wedge \cdots\wedge \dd x^{n},
\end{equation}
where $\vert g\vert$ denotes the determinant of $g$.
Clearly, $\tilde \mu_H$ is absolutely continuous with respect to $\mathcal{L}^n$ and conversely. The measure $\mu_H$ is the push-forward of $\tilde\mu_H$ by $\psi$.

Let us find a local expression for $n$ on the differentiability set $\Sigma^0$. The form $g(n, \cdot )$ has the same kernel of $\dd (x^0-h(\bf{x}))$ thus they are proportional, the proportionality constant being fixed using  $V=\p_0$. Thus  in the local coordinates of the cylinder
\begin{equation} \label{enn}
n=g(n,V)\, g^{-1}(\cdot, \dd (x^0- h({\bf x}))),
 \end{equation}
 and the expression of the field in local coordinates is then
\[
n=[-g(n,V)]\{[g^{ij}(h({\bf x}), {\bf x}) \p_{j} h-g^{i0}(h({\bf x}), {\bf x}) ]\p_i+[g^{0j}(h({\bf x}), {\bf x}) \p_j h-g^{00}(h({\bf x}), {\bf x})] \p_0\}.
\]
The function $g(n,V)$ is arbitrary and serves to fix the scale of $n$.
The coefficients $g^{ij}, g^{i0}, g^{00}$ are Lipschitz because $h$ is Lipschitz. The degree of differentiability  of this field is the same as that of the partial derivatives $\p_j h$.

Since $\psi$ is strongly differentiable \cite{nijenhuis74} over $\tilde\Sigma^0$ we can pull back $n$ to this set (this is simply a projection to the quotient manifold $\tilde H$).
\begin{equation}
\tilde{n}=\psi^{-1}_* n= [-g(n,V)]\, [g^{ij}(h({\bf x}), {\bf x}) \p_{j} h-g^{i0}(h({\bf x}), {\bf x}) ]\p_i.
\end{equation}
The pull-backed generators are integral curves of this field.

However, we can say more on the vector field on $\tilde H$ defined through  the previous equation.
Since $h$ is locally Lipschitz we have \cite[p.131]{evans92}, $h\in W^{1,\infty}_{loc}(U,\mathbb{R})$, $U\subset \mathbb{R}^{n}$, thus $D h$ exist almost everywhere, coincides with the weak derivative almost everywhere \cite[p.232]{evans92} and belongs to $L^\infty_{loc}(U,\mathbb{R}^{n})$.
It has been proved in Sect.\ \ref{kdp} that $h$ is lower-$C^2$ (semi-convex), and since the gradient of a convex function is a function of locally bounded variation \cite[Sect.\ 6.3]{evans92} we conclude that $\tilde n\colon U\to \mathbb{R}^{n}$ belongs to  $L^\infty_{loc}(U,\mathbb{R}^{n}) \cap [BV_{loc}(U)]^n$. The differentiability properties of this vector field are rather weak but, fortunately, they meet exactly the requirements of the divergence theorem \ref{hxo}.

In what follows we  apply the divergence theorem to the vector field $n$ on $H$. It is sufficient to prove it for domains contained in the cylinders covering $H$, so we shall apply the divergence theorem for vector fields on $\mathbb{R}^{n}$. However, we have first to make sense of the divergence of $n$ using ingredients which live in $\tilde{H}$ rather than on spacetime.

The following result will be used as a guide to the non-smooth case.

\begin{proposition} \label{jso}
Let $V$ be a global smooth future-directed timelike vector field.
Let $H$ be a $C^2$ null hypersurface and  let $n$ be a $C^1$  lightlike future-directed field tangent to it. Let $p\in H$ and let us denote in the same way a lightlike pregeodesic extension of $n$ to a neighborhood $U\ni p$ as in Prop.\ \ref{jsp} (thus $\nabla_n n=\kappa\, n$ for some function $k$ on $U$).
Introduce on a neighborhood of $p$ local coordinates as done above using the flow of $V$, and regard $H$ as a local graph of a $(C^2)$ function $h$, then for every $C^1$ function $\varphi\colon M\to \mathbb{R}$
\begin{align}
\p_i\Big(\varphi \, \tilde{n}^i\frac{\sqrt{-\vert g\vert(h({\bf x}), {\bf x})}}{-g(n,V)}\Big)
&= [\varphi \,\theta+\p_n \varphi]  \frac{\sqrt{-\vert g\vert(h({\bf x}), {\bf x})}}{-g(n,V)} \label{mkc}
\end{align}
where
\begin{equation} \label{the}
\theta=n^{ \mu}_{;\mu} \vert_{x^0=h({\bf x})}-\kappa
\end{equation}
and where $g(n,V)$, on the divergence in the left-hand side,  is regarded as a function on $\tilde{H}$ and hence expressed as a function of ${\bf x}$.
\end{proposition}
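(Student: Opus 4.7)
The strategy is to interpret the left-hand side as the coordinate expression of the intrinsic divergence of $\varphi\tilde{n}$ on the smooth manifold $\tilde{H}$ equipped with the volume form $\tilde{\mu}_H = F\,\dd x^1\wedge\cdots\wedge\dd x^n$, where $F=\sqrt{-\vert g\vert(h(\mathbf{x}),\mathbf{x})}/(-g(n,V))$, and then to pass to $H$ via the (in this smooth case $C^2$) diffeomorphism $\psi$, which satisfies $\psi_*\tilde{n}=n|_H$ and $\psi_*\tilde{\mu}_H=\mu_H$. It thus suffices to prove the intrinsic identity $\textrm{div}_H(\varphi\,n)=\varphi\,\theta+\partial_n\varphi$ on $H$. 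By the Leibniz rule this reduces to $\textrm{div}_H\,n=\theta=n^\mu_{;\mu}|_H-\kappa$; the additional $\partial_n\varphi$ term then automatically appears from $\tilde{n}^i\partial_i(\varphi\circ\psi)=(\partial_n\varphi)|_H$, a chain-rule identity using the tangency relation $n^0|_H=n^i\partial_i h$.

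First I would exploit the identity $\alpha\,\mu_H=\iota^*(i_V\mu_M)$ with $\alpha:=-g(n,V)$, which is just a rewriting of (\ref{kka}). Because $n$ is tangent to $H$ its flow preserves $H$, so $\iota^*\mathcal{L}_n=\mathcal{L}_n^H\iota^*$ on forms. Applying $\mathcal{L}_n$ to both sides of the identity, using the derivation formula $\mathcal{L}_n(i_V\mu_M)=(\textrm{div}_M\,n)\,i_V\mu_M+i_{[n,V]}\mu_M$ together with $\mathcal{L}_n\mu_M = n^\mu_{;\mu}\mu_M$, one obtains
\[
(\partial_n\alpha)\mu_H+\alpha\,(\textrm{div}_H\,n)\,\mu_H = n^\mu_{;\mu}|_H\,\alpha\,\mu_H+\iota^*(i_{[n,V]}\mu_M).
\]
Decomposing $[n,V]=\lambda V+Y$ with $Y\in TH$, the pullback $\iota^*(i_Y\mu_M)$ vanishes (any $Y\in TH$ inserted into the top form of $M$ restricts trivially to $H$), so the right-hand side collapses to $(n^\mu_{;\mu}|_H+\lambda)\,\alpha\mu_H$.

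The main step is then pinning down the transverse coefficient $\lambda$. Contracting $[n,V]=\lambda V+Y$ with $g(\cdot,n)$, and using $TH=n^\perp$ (so $g(Y,n)=0$) together with $g(V,n)=-\alpha$, yields $g([n,V],n)=-\lambda\alpha$. On the other hand, writing $[n,V]=\nabla_n V-\nabla_V n$, the fact that the extension of $n$ provided by Prop.\ \ref{jsp} is \emph{null} on the whole neighborhood implies $g(\nabla_V n,n)=\tfrac12 V\cdot g(n,n)=0$; combined with the pregeodesic equation $\nabla_n n=\kappa n$ one obtains
\[
g(\nabla_n V,n) = \partial_n g(V,n)-g(V,\nabla_n n) = -\partial_n\alpha+\kappa\alpha.
\]
Hence $\lambda=\partial_n\ln\alpha|_H-\kappa$. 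Substituting back, the $\partial_n\ln\alpha$ contribution exactly cancels the $(\partial_n\alpha)/\alpha$ term on the left, leaving $\textrm{div}_H\,n = n^\mu_{;\mu}|_H-\kappa=\theta$.

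Finally I would unfold the coordinate statement by Leibniz: $\partial_i(\varphi\tilde{n}^iF)=\varphi\,\partial_i(\tilde{n}^iF)+F\,\tilde{n}^i\partial_i\varphi$; the first summand equals $\varphi\theta F$ by the above (since $\partial_i(\tilde{n}^iF)=F\,\textrm{div}_{\tilde H}\tilde{n}$), while $\tilde{n}^i\partial_i\varphi=(\partial_n\varphi)|_H$ by the chain rule applied to $\varphi\circ\psi$. The principal obstacle is the clean extraction of the $-\kappa$ term: once the ambient identity $\alpha\mu_H=\iota^*(i_V\mu_M)$ is in place the calculation hinges on the nullness of the pregeodesic extension of Prop.\ \ref{jsp} throughout $U$, which is precisely what makes $g(\nabla_V n,n)=0$ and thereby isolates $\kappa$ from $g([n,V],n)$.
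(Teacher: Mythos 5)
Your proof is correct, and it reaches the paper's formula by a genuinely different, more invariant route. The paper proves (\ref{mkc}) by a direct coordinate computation: it expands $\p_i[\tilde n^i F(h({\bf x}),{\bf x})]$ through the graph function, reassembles a full spacetime divergence $\p_\mu(n^\mu F)$ plus correction terms, identifies the corrections with $\tfrac{\varphi\sqrt{-\vert g\vert}}{g(n,V)^2}\,g(n,L_Vn)$, and cancels them via the identity $\p_n g(n,V)=\kappa\, g(n,V)-g(n,L_Vn)$. You instead apply $\mathcal{L}_n$ to the defining relation $-g(n,V)\,\mu_H=\iota^*(i_V\mu_M)$, use the Cartan commutator $\mathcal{L}_n i_V-i_V\mathcal{L}_n=i_{[n,V]}$, kill the tangential part of $[n,V]$ by degree counting, and extract the transverse coefficient $\lambda$ from $g([n,V],n)$. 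The algebraic kernel is the same in both arguments --- everything rests on $g(\nabla_V n,n)=\tfrac12\p_V g(n,n)=0$ (nullness of the extension from Prop.\ \ref{jsp} on all of $U$, not just on $H$) together with $\nabla_n n=\kappa n$, i.e.\ on the same evaluation of $g(n,L_Vn)$ --- but your packaging isolates the invariant statement $\textrm{div}_{\mu_H}\,n=n^\mu_{;\mu}\vert_H-\kappa=\theta$ and makes transparent why $\theta$ is the natural bulk density for the volume $\mu_H$ and $\sigma=i_n\mu_H$ the natural boundary density, whereas the paper's coordinate route is less illuminating but delivers the formula directly in the graph variables $(h,\p h)$, which is the form actually needed for the subsequent weak/BV generalization of the divergence. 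Your closing steps (Leibniz in $\varphi$, and the chain-rule identity $\tilde n^i\p_i(\varphi\circ\psi)=\p_n\varphi$ via the tangency relation $n^0=n^i\p_i h$) are also correct.
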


\begin{remark} \label{odd}
It is interesting to note the following property of $\theta$ as given by Eq.\ (\ref{the}). Rescaling $n$ as follows $\hat{n}=\Omega n$, redefines $\kappa$ as $\hat{\kappa}=\Omega \kappa+\p_n \Omega$, and finally $\hat\theta=\hat n^\mu_{; \mu}-\hat \kappa$ is related to $\theta$ by a simple rescaling: $\hat{\theta}=\Omega \theta$. In particular if $s, \hat s$ are local functions on $H$ such that over the integral curves of  $n$, $n=d/ds$, $\hat n=d/d \hat s$, then the integral elements $\theta d s=\hat \theta d \hat s$ coincide.
\end{remark}

\begin{proof}
Let $F(x^0, {\bf x})= \varphi(x^0,{\bf x}) \sqrt{-\vert g\vert(x^0, {\bf x})}/[-g(n,V)(x^0,{\bf x})]$.
We have
\begin{align}
\p_i[\tilde{n}^i F(h({\bf x}), {\bf x})]&=[\p_i h\, \p_0 (n^iF(x^0, {\bf x}))+\p_i(n^i F(x^0, {\bf x}))]\vert_{x^0=h({\bf x})} \nonumber\\
&= \{ \p_\mu(n^\mu F(x^0, {\bf x}))+ [\p_i h \,\p_0 (n^iF(x^0, {\bf x}))& \nonumber\\
&\quad -\p_0 (n^0F(x^0, {\bf x})) ]\}\vert_{x^0=h({\bf x})} \nonumber \\
&=[\p_\mu(n^\mu F(x^0, {\bf x}) )-\p_\mu(x^0-h) \,\p_0(n^\mu F(x^0, {\bf x}))]\vert_{x^0=h({\bf x})}\nonumber\\
&=[\p_\mu(n^\mu F(x^0, {\bf x}))-\p_0( F(x^0, {\bf x})\, n^\mu \p_\mu(x^0-h) )\nonumber \\
&\quad +(\p_0\p_\mu(x^0-h)) n^\mu  F(x^0, {\bf x})]\vert_{x^0=h({\bf x})}. \label{moa}
\end{align}
The last term vanishes because $h$ does not depend on $x^0$ thus  $\p_0\p_\mu(x^0-h)=0$. The penultimate term on the right-hand side can be rearranged as follows
\begin{align*}
-\p_0( F(x^0, {\bf x})\, n^\mu \p_\mu(x^0-h) ) \vert_{x^0=h({\bf x})}&= -(\p_0 F)\, \p_n(x^0-h)  \vert_{x^0=h({\bf x})}\\ & \qquad - F\, \p_V \p_n(x^0-h) \vert_{x^0=h({\bf x})}
\end{align*}
The first term vanishes because $x^0-h=0$ on $H$ and $n$ is tangent to it, so we are left with
\begin{align*}
-\p_0( F(x^0, {\bf x})\, n^\mu \p_\mu(x^0-h) ) \vert_{x^0=h({\bf x})}&=  \{- F\, [V,n] (x^0-h)  - F\,  \p_n \p_0(x^0-h) \}\vert_{x^0=h({\bf x})}.
\end{align*}
where we used $\p_0=\p_V$. The last term vanishes because $h$ depends only on ${\bf x}$. Recalling Eq.\ (\ref{enn})
\begin{align*}
-\p_0( F(x^0, {\bf x})\, n^\mu \p_\mu(x^0-h) ) \vert_{x^0=h({\bf x})}&=-Fd(x^0-h) [L_V n] \vert_{x^0=h({\bf x})}\\& =\frac{\varphi \sqrt{-\vert g\vert}}{(-g(n,V))^2} \, g(n, L_Vn)
\end{align*}
Plugging back into Eq.\ (\ref{moa}) we obtain
\begin{align*}
\p_i[\tilde{n}^i F(h({\bf x}), {\bf x})]&=\p_\mu(\varphi\, n^\mu \frac{\sqrt{-\vert g\vert}}{-g(n,V)})+\frac{\varphi \sqrt{-\vert g\vert}}{(-g(n,V))^2} \, g(n, L_Vn)\\&= \frac{\varphi }{-g(n,V)}\,\p_\mu(n^\mu  \sqrt{-\vert g\vert}) +\varphi \frac{\sqrt{-\vert g\vert}}{g(n,V)^2} \,[\p_n g(n,V)+ g(n, L_Vn)]\\& \qquad+ \p_n \varphi  \frac{\sqrt{-\vert g\vert}}{-g(n,V)} .
\end{align*}
But
\begin{align*}
\p_ng(n,V)&=g(\nabla_n n, V)+g(n,\nabla_nV)=\kappa g(n,V)- g(n,L_V n)+ g(n,\nabla_V n)\\&=\kappa g(n,V)- g(n,L_V n),
\end{align*}
where we used  $g(n,n)=0$. Finally, using $\p_\mu(n^\mu  \sqrt{-\vert g\vert})=n^\mu_{;\mu} \sqrt{-\vert g\vert}$, we obtain the desired equation. $\square$
\end{proof}

We already know from  Section \ref{ral} that the expansion is a property which depends only on the vector field $n$ over $H$ and not on its extension.
Equation (\ref{mkc}) allows us to express  $\theta$  from  quantities living in $H$. Indeed, recalling the expression for $\tilde{n}^i$ we obtain
\begin{corollary}
In local coordinates constructed in Sect.\ \ref{kdp} the expansion of a $C^2$ horizon reads
\begin{equation} \label{kks}
\theta=\frac{-g(n,V)}{\sqrt{-\vert g\vert(h({\bf x}), {\bf x})}} \,\p_i\big\{  [g^{ij}(h({\bf x}), {\bf x}) \p_{j} h-g^{i0}(h({\bf x}), {\bf x}) ] \sqrt{-\vert g\vert(h({\bf x}), {\bf x})}\,\big\}.
\end{equation}
\end{corollary}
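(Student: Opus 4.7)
The plan is to specialize Proposition~\ref{jso} to the constant test function $\varphi\equiv 1$ and then substitute the coordinate expression for $\tilde n^i$ that was recorded just before the corollary. With $\varphi\equiv 1$ the term $\p_n\varphi$ in~(\ref{mkc}) vanishes and the identity collapses to
\[
\p_i\!\left(\tilde n^{i}\,\frac{\sqrt{-\vert g\vert(h({\bf x}),{\bf x})}}{-g(n,V)}\right)=\theta\,\frac{\sqrt{-\vert g\vert(h({\bf x}),{\bf x})}}{-g(n,V)}.
\]
Solving for $\theta$ immediately yields the outer prefactor $-g(n,V)/\sqrt{-\vert g\vert}$ that appears on the right-hand side of the corollary, so the only remaining task is to simplify the argument of $\p_i$.

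For this I would insert
\[
\tilde n^{i}=[-g(n,V)]\,[g^{ij}(h({\bf x}),{\bf x})\p_{j}h-g^{i0}(h({\bf x}),{\bf x})]
\]
into the divergence. The factor $-g(n,V)$ carried by $\tilde n^{i}$ cancels exactly the $-g(n,V)$ in the denominator, leaving $[g^{ij}\p_{j}h-g^{i0}]\sqrt{-\vert g\vert}$ inside $\p_i$, which is precisely the expression stated in the claim.

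There is essentially no obstacle here, since all the real work has been done in Proposition~\ref{jso}. The one subtlety worth flagging is that, although $g(n,V)$ depends on ${\bf x}$ (both because $n$ is defined only up to rescaling and because its pregeodesic extension off $H$ is not aligned with $V$), this dependence disappears under the divergence thanks to the matching factor provided by $\tilde n^{i}$; thus the derivative in~(\ref{kks}) does not act on $g(n,V)$. As a sanity check one may verify the rescaling behaviour $n\mapsto\Omega n$: the outer factor $-g(n,V)$ supplies an extra $\Omega$ while the integrand is unchanged, reproducing $\theta\mapsto\Omega\theta$ in agreement with Remark~\ref{odd}.
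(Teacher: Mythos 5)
Your proposal is correct and is exactly the paper's route: the corollary is obtained by reading off Eq.\ (\ref{mkc}) with $\varphi\equiv 1$ and substituting the coordinate expression for $\tilde n^i$, whereupon the factor $-g(n,V)$ cancels against the denominator. Your additional remarks on why the derivative does not act on $g(n,V)$ and the consistency check against Remark \ref{odd} are accurate but not needed beyond what the paper already does implicitly.
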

This expression confirms that, apart from a normalizing factor dependent on the normalization of $n$, the expansion is independent of the extension of $n$ outside $H$ and can be entirely calculated in term of $h$ and its first and second derivatives.

Let us still suppose that $H$ is $C^2$ and let $\tilde D=\psi^{-1}(D)$ be a domain on $\tilde{H}$ with $C^1$ boundary $\p \tilde D$. Let $\varphi$ be a $C^1$ function in a neighborhood of $H$. The divergence theorem reads (we put a tilde whenever we wish to stress that the actual calculation is performed in $\tilde H$ but remove it whenever we want more readable equations)
\begin{equation} \label{aad}
\int_D [\varphi \, \theta  + \p_n \varphi ] \,\mu_H =\int_{\p D} \varphi \, \sigma,
\end{equation}
where
\begin{equation} \label{nps}
\tilde{\sigma}= \frac{\sqrt{-\vert g\vert(h({\bf x}), {\bf x})}}{-g(n,V)}\, i_{\tilde{n}} \, \dd x^1\wedge \cdots\wedge \dd x^{n}=i_{\tilde{n}} \tilde \mu_H,
\end{equation}
is the area form over $\p \tilde D$.
Equation (\ref{aad}) follows immediately once we regard $\tilde D$ as the union of domains $\tilde D_i$ with piecewise $C^1$ boundary such that $\psi(\tilde D_i)\subset C_i$, where $C_i$ is a cylinder of the locally finite covering of $H$. In fact, the divergence theorem must be proved only inside each subdomain $\tilde D_i$ and there it is reduced to the usual divergence theorem on $\mathbb{R}^{n}$ due to Prop.\ \ref{jso} and Eq.\ (\ref{aaz}).
\begin{align}
v^i&:= \big[g^{ij}(h({\bf x}), {\bf x}) \p_{j} h-g^{i0}(h({\bf x}), {\bf x}) \big] \sqrt{-\vert g\vert(h({\bf x}), {\bf x})} , \label{kwd}\\
\int_{\tilde D_i}  \!\!\p_i (\varphi\,v^i) \, \dd x \!\!& =\int_{\p \tilde D_i}\!\!  \varphi\, v\cdot  \nu_{\tilde D_i} \,\dd S  ,\label{kwp}
\end{align}
where $\dd x=\dd x^1\cdots \dd x^{n}$, $\nu_{\tilde D_i}$ is the normal to $\p \tilde D_i$  and  $\dd S$ is the  Euclidean area element  of $\p \tilde D_i$.

\subsection{General area theorem and  compact Cauchy horizons} 

In the previous section we have obtained the divergence theorem assuming that the horizon is $C^2$. In this section we wish to remove this assumption.

Since $\p_j h$ is a function of bounded variation its derivative is a signed Radon measure
\[
\mu_{ij}:=[\p_i\p_j h]
\]
denoted on Sect.\ \ref{dif} by $[\textrm{Hess} h]$. By the Lebesgue decomposition theorem \cite[Sect.\ 1.6.2]{evans92} the measure decomposes in a measure absolutely continuous with respect to  $\mathcal{L}^{n}$ (and hence $\mu_H$) and a singular measure $[\p_i\p_j h]= [\p_i\p_j h]^a+ [\p_i\p_j h]^{s}$.
We recall that since $[\p_i\p_j h]^s$ is singular, there is a Borel set $B$, such that $\mathcal{L}^n(\tilde{H}\backslash B)=0 $, and such that $[\p_i\p_j h]^s(B)=0$.
As $D h$ has bounded variation by the Calder\'on-Zygmund theorem it is approximately differentiable almost everywhere, moreover the approximate differential coincides with $[\p_i\p_j h]^a$, cf.\ \cite[Theor.\ 3.83]{ambrosio00} and coincides with the Alexandrov Hessian \cite[p.\ 242]{evans92}. Furthermore, almost everywhere $Dh$ exists and the subdifferential admits a first order expansion (Mignot's theorem) which involves again $[\p_i\p_j h]^a$, cf.\ \cite{bianchi96,rockafellar99}.

\begin{remark} \label{pos}
For $h$ convex the Hessian measures $\mu_{ij}$ are  non-negative in the sense that for every semi-positive definite metric field $g^{ij}$ the measure $\mu_{ij} g^{ij}$ is non-negative (this is a simple improvement over \cite[Theor.\ 2, Sect.\ 6.3]{evans92} obtained replacing $\varphi$ for $g^{ij}$ in the first steps of that proof, see also \cite{reshetnjak68,dudley77}). As a consequence the same is true for $\mu_{ij}^s g^{ij}$  (evaluate it on subsets of $\tilde{H}\backslash B$) and, since $\mu_{ij}^s$ does not change if we alter $h$ by a smooth function, the non-negativity of $\mu_{ij}^s g^{ij}$ is also true for $h$ semi-convex. Next suppose that $\mu_{ij}^s g^{ij}$ vanishes and that $g^{ij}$ is positive definite. Since $g^{ij}$ is positive definite, locally we can find some $\epsilon >0$ such that $g^{ij}-\epsilon \delta_{ij}$ is positive definite, thus $\mu_{ij}^s\delta^{ij}$ vanishes. But since the trace is the sum of the (non-negative) eigenvalues, $\mu^s_{ij}$ is absolutely continuous with respect to $\textrm{tr} \mu^s_{ij}$ which implies that $\mu^s_{ij}$ itself vanishes \cite[Sect.\ 9]{dudley77}.
\end{remark}

Fortunately, equation (\ref{kwp}) still holds  when it is understood in the sense of Theorem \ref{hxo}. Indeed, since $\p_i h \in BV_{\textrm{loc}}(U,\mathbb{R}^n)\cap L^\infty_{\textrm{loc}}(U,\mathbb{R}^n)$, we also have that $v^i$ as given in Eq.\ (\ref{kwd}), satisfies  $v^i \in BV_{\textrm{loc}}(U,\mathbb{R}^n)\cap L^\infty_{\textrm{loc}}(U,\mathbb{R}^n)$,
 and so $v^i$ meets the conditions for the application of the divergence theorem.

We now define $\theta\in L^1(U,\mathbb{R})$  through the absolutely continuous part of the divergence of $v^i$, so as to recover (\ref{kks}) in the $C^2$ case
\[
[\p_i v^i] \llcorner  \frac{-g(n,V)}{\sqrt{-\vert g\vert(h({\bf x}), {\bf x})}} =\mathcal{L}^n \llcorner \,\theta +\mu_{ij}^s \llcorner \,[-g(n,V) g^{ij}] ,
\]
which can be rewritten
\begin{equation} \label{mkd}
[\p_i v^i]=\tilde{\mu}_H \llcorner \,\theta + \mu_{ij}^s \llcorner (\sqrt{-\vert g\vert}\,  g^{ij})\vert_{\tilde H} ,
\end{equation}

\begin{remark}
One can ask whether the definition of $\theta$ is intrinsic to $H$, that is,  independent of the vector field $V$ and the various coordinate constructions behind its definition. The answer is affirmative because by Alexandrov's theorem $h$ is twice differentiable almost everywhere. Let $p\in H$, $p=(h(\bar {\bf x}),\bar {\bf x}) $ where $\bar {\bf x}$ is an  Alexandrov point of $h$; let $C$ be a convex neighborhood of $p$, and let $T$ be a timelike hypersurface passing through $p$. By the Alexandrov theorem the set $H\cap T$ has second order contact with a $C^2$ codimension 2 manifold $S$ near $p$, then the expansion $\theta$  coincides with that of $E^{+}(S,C)$  (which is a $C^2$ submanifold near $p$ by the properties of the exponential map) since both depend in the same way on the second order expansion of  $h$ at $\bar{\bf x}$. The expansion of the  $C^2$ hypersurface $E^{+}(S,C)$ is also obtained from the usual intrinsic definition of Sect.\ \ref{ral} and so the expansion of $H$ at the Alexandrov points is  well posed almost everywhere and so is the function $\theta\in L^1_{loc}$.

Similarly we could have given a coordinate expression for the shear, and shown that it was well defined through an analogous argument. However, such expression will not be required. In fact, some next PDE arguments will just require the coordinate expression of $\theta$.
\end{remark}

Let us consider a set of finite perimeter $D\subset H$, and a Lipschitz function on $H$. By definition the horizon (oriented) {\em area functional} is
\begin{equation} \label{njx}
A(\p^*D,\varphi) :=\int_{\p^* \tilde D} \varphi \,S_v(x,\nu_{\tilde D}) \, \dd \mathcal{H}^{n-1} .
\end{equation}
By equation (\ref{kwp}) this is the area integral $\int_{\p D} \varphi \, \sigma=0$ when this boundary is $C^1$ and the vector field $n$ is continuous, thus the previous expression  is the measure theoretic generalization of the area integral of $\varphi$. For $\varphi=1$ the oriented area functional is the oriented area. There is some abuse of notation in Eq.\ (\ref{njx}) since $A$ depends on $D$ as well since this set determines the orientation of the normal $\nu_{\tilde D}$.

We can split the essential boundary in three pieces $\p_+ D$, $\p_-D$ and $\p_0 D$ depending on the value of $S_v(x,\nu_D)$, respectively positive, negative or zero. We call $\p_+ D$ the {\em future essential boundary} and $\p_- D$ the {\em past essential boundary}. Clearly,
\begin{equation} \label{njf}
A(\p^*D,\varphi) :=\vert A(\p_+D,\varphi)\vert- \vert A(\p_-D,\varphi)\vert,
\end{equation}
where the former term on the right-hand side represents the contribution from the boundary to the future of $D$ and the latter term represents the contribution from the boundary to the past of $D$.

The following propositions simplifies the interpretation of the boundary terms in some special cases of physical interest.

\begin{proposition} \label{alp}
Let $\tau: U\to \mathbb{R}$ be a locally Lipschitz time function defined in a neighborhood of $H$. For almost every $t$, the intersection $\tau^{-1}(t)\cap H\cap \Sigma$
is a set of zero $\mathcal{H}^{n-1}$ measure, and if $D$ is bounded by $\tau^{-1}(t)\cap H$  (e.g.\ because it is the portion of $H$ inside $\tau^{-1}([a,b])$ for some $a,b$) then $v\cdot \nu_D$ can replace $S_v(x,\nu_{D})$ in Eq.\ (\ref{njx}).
\end{proposition}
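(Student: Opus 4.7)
The plan is to establish two claims separately: (i) the slicing fact that for $\mathcal{L}^1$-a.e.\ $t$ the set $\tau^{-1}(t)\cap H\cap \Sigma$ is $\mathcal{H}^{n-1}$-negligible, and (ii) given (i) at the relevant level $t$, the identity $S_v(x,\nu_D)=v\cdot \nu_D$ holds $\mathcal{H}^{n-1}$-a.e.\ on $\partial^* D$, so that the boundary integrand in Eq.~\eqref{njx} may be replaced as claimed.

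For (i) I would apply a coarea estimate. By Theorem~\ref{qdi}(4), the non-differentiability set $\Sigma$ is countably $\mathcal{H}^{n-1}$-rectifiable in $H$, hence of $\sigma$-finite $\mathcal{H}^{n-1}$ measure on each compact piece of $H$. The restriction $\tau|_H$ is locally Lipschitz, being the composition of the locally Lipschitz $\tau$ with the Lipschitz embedding $\psi$ of Sect.~\ref{kdp}. The Eilenberg--Federer coarea inequality applied to $\tau|_\Sigma$ on a compact piece then yields
\[
\int_{\mathbb{R}} \mathcal{H}^{n-2}\big(\tau^{-1}(t)\cap \Sigma\big)\,dt \;\le\; C\,(\operatorname{Lip}\tau|_H)\,\mathcal{H}^{n-1}(\Sigma)<\infty,
\]
so $\mathcal{H}^{n-2}(\tau^{-1}(t)\cap \Sigma)<\infty$, and in particular $\mathcal{H}^{n-1}(\tau^{-1}(t)\cap \Sigma)=0$, for $\mathcal{L}^1$-a.e.\ $t$. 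Covering $H$ by a locally finite family of compact pieces and taking a countable union of $\mathcal{L}^1$-null exceptional sets extends the conclusion globally.

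For (ii) I would invoke Theorem~\ref{qdi}(2): the horizon is $C^1$ on $\Sigma^0 = H\setminus \Sigma$ in the induced topology, so the graphing function $h$ is $C^1$ on $\psi^{-1}(\Sigma^0)$ in every cylinder chart, and the explicit formula~\eqref{kwd} shows that $v^i$ is then continuous throughout $\psi^{-1}(\Sigma^0)$. The definition of $S_v$ given just before Theorem~\ref{hxo} specialises to $S_v(x,\nu)=v(x)\cdot \nu$ at every continuity point of $v$. Since $\partial^* D\subset \tau^{-1}(t)\cap H$, the locus on $\partial^* D$ where $S_v(x,\nu_D)$ and $v\cdot \nu_D$ could possibly disagree lies in $\tau^{-1}(t)\cap H\cap \Sigma$, which by (i) is $\mathcal{H}^{n-1}$-null for the values of $t$ singled out there. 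The example $D=H\cap \tau^{-1}([a,b])$ is dispatched by choosing both $a$ and $b$ outside the $\mathcal{L}^1$-null exceptional set.

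The main obstacle is purely organisational and falls in step (i): one must pass rectifiability from $\tilde H$ (an abstract Lipschitz manifold) into a form to which the coarea inequality applies, by fixing an auxiliary Riemannian metric on $\tilde H$ or else working patch-by-patch in the cylinder charts of Sect.~\ref{kdp}. The coordinate-invariance of the conclusion is secured by the Lipschitz equivalence of the different smooth atlases on $\tilde H$. Step (ii), by contrast, is essentially automatic once step (i) and Theorem~\ref{qdi}(2) are in hand.
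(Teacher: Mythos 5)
Your proof is correct and shares the two-step skeleton of the paper's own argument: first slice the singular set $\Sigma$ by the level sets of $\tau$, then use the continuity of $v$ on $\Sigma^0$ (strong differentiability, Theorem \ref{qdi}(2)) to identify $S_v(x,\nu_D)$ with $v\cdot\nu_D$ off an $\mathcal{H}^{n-1}$-null subset of $\p^*D$. The difference is in the slicing lemma. The paper feeds the weaker input $\mathcal{L}^n(\Sigma)=0$ (which needs only almost-everywhere differentiability of the semi-convex graphing function) into the coarea formula for Lipschitz real-valued functions on $\mathbb{R}^n$, obtaining directly $\mathcal{H}^{n-1}(\Sigma\cap\tau^{-1}(t))=0$ for a.e.\ $t$. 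You instead feed the finer structure theorem --- $\Sigma$ countably $\mathcal{H}^{n-1}$-rectifiable, hence $\sigma$-finite for $\mathcal{H}^{n-1}$ --- into the Eilenberg inequality, which yields the stronger conclusion that the slice has $\sigma$-finite $\mathcal{H}^{n-2}$ measure and is a fortiori $\mathcal{H}^{n-1}$-null. Both routes are valid; the paper's is more economical because it does not invoke Theorem \ref{qdi}(4) at all, while yours buys a sharper dimensional bound on the bad slice that the proposition does not require. One bookkeeping caveat in your version: countable rectifiability in the paper's sense gives only $\sigma$-finiteness, not finiteness, of $\mathcal{H}^{n-1}(\Sigma)$ even on compact pieces, so the Eilenberg inequality must be applied to each finite-measure piece separately and the exceptional $t$-sets united; the resulting slice is then $\sigma$-finite for $\mathcal{H}^{n-2}$, which still forces $\mathcal{H}^{n-1}(\Sigma\cap\tau^{-1}(t))=0$, so the conclusion stands.
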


In this case since $\p^*D$ is rectifiable the area functional is the sum of contributions obtained from the classical area $\int \varphi \sigma$. Moreover, in this case $S_v(x,\nu_{D})=-S_{v}(x,\nu_{H\backslash D})$ as it follows using the expression with the scalar product. So the area integral of a surface can be calculated taking as reference the domain on one side or the complementary domain on the other side (see also \cite[Eq.\ (4.2)]{silhavy05}).

\begin{proof}
Recall that locally $\tau(h(\bf{x}), {\bf x})$ is Lipschitz.
The set of non-differentiability (or non-$C^1$) points $\Sigma$ has zero $\mathcal{L}^n$ measure.
As a consequence of the coarea formula for Lipschitz functions \cite[Sect.\ 3.4, Lemma 2]{evans98}, $\mathcal{H}^{n-1}(\Sigma\cap \tau^{-1}(t))=0$ for almost every $t$.
Thus for almost every $t$, $v$ is $C^0$ $\mathcal{H}^{n-1}$-a.e. on $\tau^{-1}(t)\cap H$, and hence $v\cdot \nu_D$ can replace $S_v(x,\nu_{D})$ on Eq.\ (\ref{njx}). $\square$
\end{proof}

Typically $D$ will bounded by two hypersurfaces transverse to the vector field $n$ on the horizon and one hypersurface tangent to it.

\begin{proposition}
Suppose that the topological boundary $\p D$ includes a $C^1$ hypersurface  $W$ such that each point of $W$ is internal to some  generator of $H$, then $W\subset \p_0 D$, that is, this portion of boundary does not  contribute to the area functional.
\end{proposition}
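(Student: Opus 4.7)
The plan is to show $S_v(x,\nu_{\tilde D})=0$ $\mathcal{H}^{n-1}$-almost everywhere on $W$ by reducing it, through continuity, to the pointwise vanishing of the ordinary scalar product $v(x)\cdot\nu_{\tilde D}(x)$. First I use that no point of $W$ is an endpoint of the horizon (since each is interior to a generator) to invoke Theorem~\ref{jpf}(3): $H$ is $C^1$ on an open neighborhood $O\supset W$, so on each cylinder chart of the covering the graphing function $h$ is $C^1$ on the corresponding domain, and therefore the field $v$ of Eq.~(\ref{kwd}) is continuous on $O$. Moreover, combining Eqs.~(\ref{enn}) and (\ref{kwd}), one reads off $v = \bigl[\sqrt{-|g|}/(-g(n,V))\bigr]\,\tilde n$, so $v$ is a strictly positive rescaling of $\tilde n$ on $O$.

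The core geometric ingredient is that $W$ is ruled by generators: at each $p\in W$ the generator $\gamma_p$ whose interior contains $p$ is locally contained in $W$. Granted this, the continuous vector field $\tilde n$ is Euclidean-tangent to the $C^1$ hypersurface $\psi^{-1}(W)$ in every chart, and since $v$ is a positive multiple of $\tilde n$ we obtain $v(x)\cdot\nu_{\tilde D}(x)=0$ for every $x\in W$. Continuity of $v$ on $O$ then lets the defining half-ball limit of $S_v(x,\nu_{\tilde D})$ collapse to this classical scalar product, so $S_v\equiv 0$ on $W$, whence $W\subset\p_0 D$ and $W$ contributes nothing to the area functional (\ref{njx}).

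The main obstacle I anticipate is justifying that the generators through $W$ are locally contained in $W$, rather than merely passing through its points. In the intended area-theorem context this is part of the reading of the hypothesis (``$W$ is a null portion of the boundary of the integration tube''), and then the step is immediate. If it has to be extracted from the bare statement, one argues as follows: near $p\in W$ the $C^1$ hypersurface $W\subset\p D$ locally separates the interior side of $D$ from its complement in $H$, so a transversal crossing of $W$ by $\gamma_p$ would force the past and future portions of $\gamma_p$ near $p$ to sit on opposite sides of $W$; combined with the fact that both portions are contained in $H$ and that $p$ is interior to $\gamma_p$, this is incompatible with $W$ being in the topological boundary of $D$ on both sides along the generator, so $\gamma_p$ is forced to remain in $W$ locally. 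Once this tangency of $\tilde n$ to $W$ is secured, the remaining two steps are routine.
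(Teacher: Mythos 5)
Your overall route is the paper's: continuity of $v$ at the points of $W$ collapses $S_v(x,\nu)$ to the classical scalar product $v\cdot\nu$, and tangency of the (positively rescaled) semitangent field to $W$ then gives $v\cdot\nu=0$, i.e.\ $\int_W\varphi\,\sigma=0$ with $\sigma=i_n\mu_H$. However, two of your steps do not hold as written. First, you cannot invoke Theorem \ref{jpf}(3) to get $C^1$ regularity on an \emph{open} set $O\supset W$: the hypothesis excludes endpoints only on $W$ itself, which has empty interior in $H$, so endpoints may accumulate on $W$ from the side. The paper instead uses Theorem \ref{qdi}(1)--(2): $W\subset\Sigma^0$, $H$ is $C^1$ on $\Sigma^0$ with the induced topology, and $\Sigma^0$ has full measure in every neighborhood of $p\in W$; continuity of $v$ relative to a full-measure set containing $W$ already suffices to identify the half-ball averages defining $S_v$ with $v(x)\cdot\nu$.

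Second, and more seriously, your attempted derivation that the generator through $p\in W$ lies locally in $W$ fails, and no such derivation can succeed from the literal hypothesis. A generator crossing $W$ transversally, with its past portion in $D$ and its future portion in $H\backslash\overline{D}$, is entirely compatible with $W\subset\p D$; there is no ``incompatibility on both sides.'' Concretely, take $H$ an endpoint-free smooth horizon (e.g.\ a future light cone away from its vertex), $D=\tau^{-1}((-\infty,t))\cap H$ and $W=\tau^{-1}(t)\cap H$ for a smooth time function $\tau$: every point of $W$ is interior to a generator, $W$ is a $C^1$ hypersurface in $\p D$, yet $W=\p_+D$ carries strictly positive area. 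So the tangency of $n$ to $W$ must be read as part of the hypothesis --- this is how the paper uses the proposition ($W$ is the lateral piece of the boundary, ``tangent to $n$,'' cf.\ the sentence preceding the statement), and its proof simply asserts that the tangent space at $q\in W$ contains $n$ rather than deriving it. With tangency granted, the rest of your argument coincides with the paper's.
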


\begin{proof}
Since the points of $W$ are internal to some generator the horizon is $C^1$ there on the topology of the differentiability set $\Sigma^0$ (Theorem \ref{qdi} or \cite{chrusciel98,beem98,chrusciel98b}), which has full measure on any neighborhood of $p\in W$.
Thus the vector field $v$ which enters the integral of $S_v$ is continuous, which implies that $S_v(x,\nu)=v\cdot \nu$ there. Furthermore, since $W$ is $C^1$ $\dd S= \dd \mathcal{H}^{n-1}$ thus the contribution of $W$ is $\int_{W}\!\!  \varphi\, v\cdot  \nu \,\dd S$ which has been shown to be equal to $\int_{W} \varphi \, \sigma$. But this integral vanishes since $\sigma=i_{n} \mu_H$ and the tangent space at $q\in W$ includes $n$. $\square$
\end{proof}

We are ready to prove:

\begin{theorem}[Area theorem] \label{are}
Let $H$ be a past horizon, $D\subset H$ an open relatively compact subset of finite perimeter,  $\varphi$  a positive Lipschitz function on $D$, $\mu_H$ the volume form on $H$ induced by a smooth future-directed timelike vector field $V$,
 $n$  a field of semitangents normalized through the definition of an arbitrary function $-g(n,V)>0$ (hence locally given by Eq.\ (\ref{kks}))
and let $\theta\in L^1_{\textrm{loc}}(H,\mu_H)$ be such that $\mu_H  \llcorner\, \theta$ is the absolutely continuous part of the expansion (measure) of the field $n$, then
\begin{equation} \label{dik}
\int_D [\varphi \, \theta  + \p_n \varphi ] \,\mu_H \le \vert A(\p_+D,\varphi)\vert- \vert A(\p_-D,\varphi)\vert
\end{equation}
with equality if and only if the horizon is $W^{2,1}$ on $D$ (i.e.\  the local graphing function $h$ is $W^{2,1}$ and $\mu^s_{ij}=0$ on $\tilde D$).
\end{theorem}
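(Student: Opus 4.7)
The plan is to apply the modern divergence theorem (Theorem \ref{hxo}) to the vector field $v=(v^i)$ of Eq.\ (\ref{kwd}) on the pulled-back domain $\tilde D=\psi^{-1}(D)$, with Lipschitz test function $\tilde\varphi({\bf x}):=\varphi(h({\bf x}),{\bf x})$. First I localize: since $D$ is relatively compact I cover $D$ by finitely many cylinder charts $C_i$ of the type built in Sect.\ \ref{kdp} (with $g^{ij}$ positive definite), and I work in each chart, later patching via a smooth partition of unity on $\tilde H$. The hypotheses of Theorem \ref{hxo} are met because $h$ is $BSC_{\mathrm{loc}}$ by Proposition \ref{mdp}, so $Dh$ lies in $[BV_{\mathrm{loc}}]^n\cap L^\infty_{\mathrm{loc}}$, and then so does $v$; moreover $\tilde D$ has finite perimeter by hypothesis and may be replaced by $\mathrm{cl}_* \tilde D$ (Remark \ref{mmm}) without changing the $\mathcal{L}^n$-integrals.

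Applying Theorem \ref{hxo} gives
\[
\int_{\tilde D} \tilde\varphi\,d[\p_i v^i]+\int_{\tilde D}\p_i\tilde\varphi\cdot v^i\,dx=\int_{\p^*\tilde D}\tilde\varphi\, S_v(x,\nu_{\tilde D})\,d\mathcal{H}^{n-1}.
\]
On the right I use the decomposition (\ref{njf}) to rewrite the boundary term as $|A(\p_+D,\varphi)|-|A(\p_-D,\varphi)|$. On the left I rewrite the divergence measure via Eq.\ (\ref{mkd}):
\[
\int_{\tilde D}\tilde\varphi\,d[\p_i v^i]=\int_D \tilde\varphi\,\theta\,d\mu_H+\int_{\tilde D}\tilde\varphi\,\sqrt{-|g|}\,g^{ij}\,d\mu^s_{ij}.
\]
For the second term I observe that $v^i\,dx=\tilde n^i\,d\tilde\mu_H$ (from the definitions of $v^i$, $\tilde n^i$ and (\ref{aaz})), and the chain rule $\p_n\varphi=\tilde n^i\p_i\tilde\varphi$ (since $n$ is tangent to $H$, so $n^0=\tilde n^i\p_i h$), giving $\int_{\tilde D}\p_i\tilde\varphi\cdot v^i\,dx=\int_D \p_n\varphi\,d\mu_H$.

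Combining and rearranging yields
\[
\int_D[\varphi\,\theta+\p_n\varphi]\,\mu_H=|A(\p_+D,\varphi)|-|A(\p_-D,\varphi)|-\int_{\tilde D}\tilde\varphi\,\sqrt{-|g|}\,g^{ij}\,d\mu^s_{ij}.
\]
The inequality (\ref{dik}) now follows once I show that the last term is non-negative. This is the key analytic step, and I handle it via Remark \ref{pos}: because $h$ is semi-convex and the coordinates on the cylinders were chosen so that $g^{ij}$ is positive definite, the signed measure $\mu^s_{ij}\,g^{ij}$ is non-negative; since $\tilde\varphi>0$ and $\sqrt{-|g|}>0$, the integral is non-negative.

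For the equality case, $\mu^s_{ij}\,g^{ij}\equiv 0$ on $\tilde D$ combined with the positive-definiteness of $g^{ij}$ forces $\mu^s_{ij}=0$ on $\tilde D$ by the trace argument of Remark \ref{pos}. Vanishing of the singular part of the Hessian measure means $[\mathrm{Hess}\,h]$ is absolutely continuous, i.e.\ $Dh$ has an $L^1_{\mathrm{loc}}$ weak derivative, which is exactly $h\in W^{2,1}$. Conversely, if $h\in W^{2,1}$ on $\tilde D$ then $\mu^s_{ij}=0$ on $\tilde D$ and equality holds. I expect the subtlest point to be the non-negativity claim for $\mu^s_{ij}g^{ij}$: the standard result (Evans--Gariepy Thm.\ 2 in Sect.\ 6.3) gives non-negativity of the full Hessian measure of a convex function against a positive-definite test, but extracting non-negativity of the purely singular part $\mu^s_{ij}g^{ij}$ requires the observation in Remark \ref{pos} that one may restrict to a Borel set $B$ with $\mathcal{L}^n(\tilde H\setminus B)=0$ and $\mu^s_{ij}(B)=0$ (combined with the fact that adding a smooth quadratic to reduce from semi-convex to convex leaves $\mu^s_{ij}$ unchanged).
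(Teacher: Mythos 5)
Your proposal is correct and rests on the same three pillars as the paper's proof: the {\v S}ilhav{\'y}--Chen--Torres--Ziemer divergence theorem (Theorem \ref{hxo}) applied to the $BV_{loc}\cap L^\infty_{loc}$ field $v$ of Eq.\ (\ref{kwd}), the splitting (\ref{mkd}) of $[\p_i v^i]$ into $\tilde\mu_H\llcorner\theta$ plus the singular Hessian term, and the non-negativity of $\mu^s_{ij}g^{ij}$ from Remark \ref{pos}, including the trace argument for the equality case. The one place where you diverge is the localization. The paper does not use a partition of unity: it cuts $D$ into pieces $D_k$ by timelike hypersurfaces generated by $V$, perturbs these cuts (via Fubini/coarea, using that $\mu^{s\,(k)}_{ij}$ and $\mathcal{L}^n$ are mutually singular) so that the singular measure does not charge the internal interfaces, and then cancels the internal boundary traces pairwise using the identity $S_v(x,\nu)=-S_v(x,-\nu)$, which is valid precisely because the singular part of $[\mathrm{div}\,v]$ does not charge those interfaces. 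Your partition-of-unity route sidesteps the normal-trace cancellation entirely (the terms $\varphi\, D\varphi_k\cdot v$ sum to zero since $\sum_k D\varphi_k=0$, and each $\varphi\varphi_k$ is an admissible Lipschitz test function supported in one chart), at the price of having to check that $v$, $\theta$ and $\mu^s_{ij}$ are consistently defined across overlapping cylinders --- which the paper's remarks on the chart-independence of $\theta$ and of the vanishing of $[\mathrm{Hess}\,h]^s$ do supply, but which you should state explicitly rather than leave implicit in the word ``patching''. Both routes are sound; the paper's is more delicate at the interfaces, yours pushes the delicacy into the chart-compatibility of the coordinate-dependent objects.
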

We shall be mostly interested on this result for $\varphi=1$, for which the terms on the right-hand side are the areas of the past and future boundaries of $D$.


\begin{proof}
Let $D\subset H$ be an open relatively compact set, and let us split it into the union of measurable sets with $D_k$ with  piecewise smooth boundary, but for the part they have in common with $\p D$, in such a way that $D_k\subset C_k$, where $C_k$ is the cylinder covering of $H$. This result can be accomplished cutting $H$ with a finite number of timelike hypersurfaces generated by $V$.
 We can slightly move these hypersurfaces and hence the internal boundaries of $\p D_k$ in such a way that\footnote{We have added an index $k$ to stress the dependence of some quantities on the subdomain $\tilde D_k$, $\tilde D=\cup_k \tilde D_k$.}  $\mu_{ij}^{ s \,(k)}(\p D_k\backslash \p D)=0$  (this fact follows from Fubini's theorem or the coarea formula and from the fact that  $\mu_{ij}^{ s \,(k)}$ and $\mathcal{L}^n$ are singular). Let us apply the divergence theorem to each set $cl_* \tilde D_k$.

Then the divergence theorem applied to each domain $cl_* \tilde D_k$ gives
\begin{align}
\sum_k\int_{cl_* \tilde D_k} \!\!\! (\p_i \varphi)\, v^{i} \,\dd x &+\sum_k\int_{cl_* \tilde D_k} \!\!\! \varphi\,\theta \, \dd \tilde{\mu}_H +  \sum_k\int_{cl_* \tilde D_k} \!\!\! \varphi(\sqrt{-\vert g\vert}\,  g^{ij})\vert_{\tilde H} \, \dd \mu^{ s \,(k)}_{ij} \nonumber \\
&=\sum_k\int_{\p^* \tilde D_k} \varphi\,S_v(x,\nu_{\tilde D_k}) \, \dd \mathcal{H}^{n-1}, \label{mkb}
\end{align}
where $\varphi$ is any positive Lipschitz function. On the internal boundaries the identity $S_v(x,\nu)=-S_v(x,-\nu)$ holds true since the singular part of the measure does not charge these sets, and hence each internal boundary term coming from the divergence theorem is canceled by the corresponding term relative to the domain on the other side. Thus the  right-hand side of Eq.\ (\ref{mkb}) is given by the area functional (\ref{njx}) which can be written as in Eq.\ (\ref{njf}).

We are almost done. Let us define for $D\subset H$
\[
P(D,\varphi)=\sum_k \int_{cl_* \tilde D_k} \varphi (\sqrt{-\vert g\vert}\,  g^{ij})\vert_{\tilde H} \,  \dd \mu^{ s \,(k)}_{ij} .
\]
By Remark \ref{pos} $P(D)\ge 0$ with equality if and only if $\mu^s_{ij}=0$ on $cl_* D$. By Remark \ref{mmm} in  the first two integrals on the left-hand side of  Eq.\ (\ref{mkb}) we can replace $cl_* \tilde D_k$ with $\tilde D_k$ which concludes the proof. $\square$
\end{proof}

The area theorem can be given a refined formulation with an interesting equality case as follows.

\begin{theorem}[Area theorem II] \label{lla}
Under the assumption of the previous theorem
\begin{equation} \label{dih}
\int_D [\varphi \, \theta  + \p_n \varphi ] \,\mu_H+2\int_{\Sigma^1(H)\cap D} \varphi \,\sigma \le \vert A(\p_+D,\varphi)\vert- \vert A(\p_-D,\varphi)\vert ,
\end{equation}
with equality if and only if the semitangent field is a local special (vector) function of bounded variation (that is,  $\p h \in [SBV_{loc}(O)]^n$ where $h$ is the local graphing function).
\end{theorem}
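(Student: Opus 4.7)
The plan is to refine Theorem \ref{are} by decomposing its nonnegative singular slack term into Cantor and jump components and then evaluating the jump contribution geometrically. In the proof of Theorem \ref{are} the excess producing the inequality (\ref{dik}) is
\[
P(D,\varphi) \;=\; \int_{\tilde D}\, \varphi\,\bigl(\sqrt{-|g|}\,g^{ij}\bigr)\big|_{\tilde H}\, d\mu^{s}_{ij} \;\ge\; 0,
\]
where $\mu_{ij}=[\partial_i\partial_j h]$. Applying the Lebesgue decomposition (\ref{cff}) componentwise to $Dh\in [BV_{loc}]^n$, I would split $\mu^s_{ij}=\mu^c_{ij}+\mu^j_{ij}$, and accordingly $P(D,\varphi)=P^c(D,\varphi)+P^j(D,\varphi)$.

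For $P^c$, since $h$ is semi-convex, $\mu_{ij}$ is (modulo a smooth background) semi-positive definite as a matrix-valued Radon measure, as used in Remark \ref{pos}; the three Lebesgue pieces inherit this property because they are mutually singular. Contracting with the positive definite $g^{ij}|_{\tilde H}$ then gives $P^c(D,\varphi)\ge 0$, with equality iff $\mu^c_{ij}\equiv 0$ on $\tilde D$, equivalently $Dh\in [SBV_{loc}]^n$. This yields both the non-strict inequality and the equality criterion in (\ref{dih}).

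For $P^j$, I would use the structure formula (\ref{mod}) to write $\mu^j_{ij}= (\partial_i h^+ - \partial_i h^-)\,\nu_j\, d\mathcal{H}^{n-1}$ on $J_{Dh}$. Symmetry of $\partial_i\partial_j h$ distributionally forces $(Dh^+-Dh^-)\otimes\nu$ to be symmetric, so $Dh^+-Dh^-=c\,\nu$ with $c\ge 0$ by semi-convexity, and thus $P^j(D,\varphi)=\int_{J_{Dh}\cap \tilde D}\varphi\,c\,\sqrt{-|g|}\,g^{ij}\nu_i\nu_j\, d\mathcal{H}^{n-1}$. By Proposition \ref{aao} together with Theorem \ref{qdi}(4), the set $J_{Dh}$ agrees up to an $\mathcal{H}^{n-1}$-null set with the lift of $\Sigma^1(H)$, since the higher singular strata $\tilde\Sigma^k$, $k\ge 2$, are $\mathcal{H}^{n-k}$-rectifiable and thus $\mathcal{H}^{n-1}$-negligible. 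At each point of $\tilde\Sigma^1$ the extreme points $Dh^\pm$ correspond via (\ref{enn}) to the two distinct semitangents $n^\pm$. A direct computation from (\ref{nps}) gives $\tilde\sigma^\pm=\sqrt{-|g|}\,(\tilde n^\pm\!\cdot\! \nu)\,dS$ at the crease, so $\tilde\sigma^+-\tilde\sigma^- = \sqrt{-|g|}\,g^{ij}\nu_i\nu_j\, c\, dS$. Invoking Remark \ref{lll} with $V$ chosen as a positive multiple of $n^+ + n^-$, the forms $\sigma^+$ and $\sigma^-$ have equal magnitude but opposite orientation on the codimension-two crease, whence $\tilde\sigma^+-\tilde\sigma^-=2\sigma$ and $P^j(D,\varphi)=2\int_{\Sigma^1\cap D}\varphi\,\sigma$.

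Substituting into the identity underlying Theorem \ref{are} then yields (\ref{dih}) with residual slack $P^c\ge 0$, vanishing exactly when $Dh\in [SBV_{loc}]^n$. The main obstacle is the identification $P^j=2\int_{\Sigma^1}\varphi\,\sigma$: one must reconcile the BV-normal $\nu$ and Euclidean area element $dS$ living in the graphing chart on $\tilde H$ with the intrinsic spacetime area form on the codimension-two crease. The independence of $\sigma$ from the auxiliary transverse vector field, combined with Remark \ref{lll}, is the key geometric input, but the factor $2$ emerges only after a careful orientation audit, since for a generic $V$ one of $\sigma^\pm$ picks up an opposite sign.
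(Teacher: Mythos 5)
Your proposal is correct and follows essentially the same route as the paper's proof: decompose the singular part of the Hessian measure into mutually singular (hence separately non-negative) Cantor and jump parts, evaluate the jump contribution via the structure formula (\ref{mod}) over $J_{Dh}$, identify $J_{Dh}$ with $\tilde\Sigma^1$ up to $\mathcal{H}^{n-1}$-null sets, obtain the factor $2\int_{\Sigma^1\cap D}\varphi\,\sigma$ from the two one-sided traces via Remark \ref{lll}, and characterize equality by the vanishing of the Cantor part, i.e.\ $Dh\in[SBV_{loc}]^n$. The extra observations you make (the jump $Dh^+-Dh^-$ being parallel to $\nu$ by distributional symmetry, and the explicit orientation audit) are consistent refinements of the same argument.
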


Observe that for $\varphi=1$ the second term on the left-hand side $2\sigma(\Sigma^1(H)\cap D)$ is twice the (non-negative) area of the set of non-differentiability points $\Sigma^1$.

\begin{proof}
We know that on each open set $D^{(k)}$,  $\mu_{ij}^{ s \,(k)}=[\p_i\p_j h^{(k)}]^s=[\p_i\p_j h^{(k)}]^j+[\p_i\p_j h ^{(k)}]^c$ where the measures on the right-hand side are non-negative since they are mutually singular. Thus the proof goes as before where this time we include the jump term in Eq.\ (\ref{mkb}) as given by Eq.\ (\ref{mod}). We have
\begin{align*}
\int_{cl_* \tilde D_k} \!\!\!\!\!\! \!\!\!\varphi(\sqrt{-\vert g\vert}\,  g^{ij})\vert_{\tilde H} \, \dd \mu^{ j \,(k)}_{ij}&=\int_{cl_* \tilde D_k\cap J_{Dh}^{(k)}} \!\!\!  \!\!\!\!\!\!\!\!\!\!\!\!\!\!\! \varphi(\sqrt{-\vert g\vert}\,  g^{ij})\vert_{\tilde H} \, (Dh^+-Dh^-)_i (\nu_{Dh^{(k)}})_j \, d \mathcal{H}^{n-1}\\
&=\int_{cl_* \tilde D_k\cap \tilde \Sigma^1} \!\!\!\!\!\!\!\!\! \varphi \vert_{\tilde H} \, (v^+-v^-)^i (\nu_{Dh^{(k)}})_i \, d \mathcal{H}^{n-1} ,
\end{align*}
where we used Theorem \ref{qdi} to express the domain in terms of $\tilde \Sigma^1$.  Recall that $\tilde \Sigma^1$ is a countably $C^2$  $\mathcal{H}^{n-1}$-rectifiable \cite{alberti92,alberti94}, thus we can replace $d \mathcal{H}^{n-1}$ with the usual area of the rectifying hypersurface, and the last term of the previously displayed equation can be recast, as done for Eqs.\ (\ref{nps}) and (\ref{kwp}) as twice the integral of $\varphi$ in the area of $\Sigma^1$ (the two contributions from $v^+$ and $v^-$ are the same due to Remark \ref{lll}). Finally, by construction the singular measure does not charge $\p^*\tilde D_k\cap \tilde D$, thus $cl_* \tilde D_k$ can be replaced with $\tilde D_k$. $\square$
\end{proof}

\begin{example}
Let us consider a 2+1 Minkowski spacetime of coordinates $(t,x,y)$ and metric $g=-\dd t^2+\dd x^2+\dd y^2$, and let us define $M$ removing from it the circle of radius $1$ on the plane $t=2$ with center $(2,0,0)$. Let us consider the  circle $C$ of radius $3$ in the plane $t=0$ with center at the origin, and let $H=E^+(C)$.
We have $\Sigma^1=C$. Let $\varphi=1$, $D=t^{-1}((-1,1))\cap H$, $V=\p_0$, then $\sigma(\Sigma^1\cap D)=6\pi$, $\vert A(\p_+D,1)\vert=12\pi$, $\vert A(\p_-D,1)\vert=0$, $\theta=\pm (x^2+y^2)^{-1/2}$ with the plus (minus) sign on the portion of horizon outside (resp.\ inside) $C$. Then Eq.\ (\ref{dik}) holds with the equality sign and indeed it is clear that the distributional Hessian of $h(x,y)= \vert(x^2+y^2)^{1/2}-3\vert$ has no Cantor part since $h$ is $C^2$ outside $\tilde \Sigma^1$.
\end{example}

Of course,  it is important to establish when $\theta\ge 0$.
The next result has been proved in \cite[Lemma 4.2]{galloway00} \cite{chrusciel01}. Since our assumptions are slightly different we provide a proof.

\begin{theorem} \label{pot}
Suppose that the null convergence condition holds. Let $H$ be an achronal past horizon whose generators are future complete, then $\theta\ge 0$, $\mu_H$-(and $\mathcal{L}^n$-)almost everywhere.
\end{theorem}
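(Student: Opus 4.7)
The plan is to work pointwise at Alexandrov points. By Proposition~\ref{mdp} the local graphing function $h$ is lower-$C^2$ and hence semiconvex, so Alexandrov's theorem gives that $h$ is twice differentiable at $\mathcal{L}^n$-a.e.\ $\bar{\bf x}$; since $\mu_H$ is absolutely continuous with respect to $\mathcal{L}^n$ by~(\ref{aaz}), it suffices to show $\theta(p)\ge 0$ at every such Alexandrov point $p=(h(\bar{\bf x}),\bar{\bf x})$, where $\theta$ is computed from the pointwise Hessian $[\p_i\p_j h]^a(\bar{\bf x})$ via~(\ref{kks}). Fix such a $p$, let $\eta\colon[0,\infty)\to M$ denote the future-complete generator of $H$ with $\eta(0)=p$, and assume for contradiction that $\theta(p)<0$.

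The core of the plan is to construct a $C^2$ null hypersurface $N$ that provides a second-order null \emph{upper} support to $H$ along $\eta$ at $p$, still with strictly negative expansion. Let $T$ be a timelike hyperplane transverse to $\eta$ at $p$ and let $P({\bf x})$ be the second-order Taylor polynomial of $h$ at $\bar{\bf x}$ given by Alexandrov's theorem. For small $\varepsilon>0$ set $\tilde h({\bf x}):=P({\bf x})+\varepsilon|{\bf x}-\bar{\bf x}|^2$; since $h=P+o(|{\bf x}-\bar{\bf x}|^2)$ one has $\tilde h>h$ in a punctured neighborhood of $\bar{\bf x}$. Define $S$ as the intersection of the graph of $\tilde h$ with $T$: this is a $C^2$ spacelike codimension-2 submanifold with $p\in S$, $S\setminus\{p\}\subset I^+(H)$, and the same tangent space at $p$ as $H\cap T$. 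Let $N$ be the $C^2$ null hypersurface obtained by flowing future-directed null normals to $S$ in the direction of $n=\eta'(0)$; then $\eta\subset N$ locally. Since the null second fundamental form of $N$ at $p$ is affine in that of $S$, which in turn differs from the Alexandrov jet of $H$ by the perturbation $2\varepsilon I$, a direct computation from~(\ref{kks}) gives $\theta_N(p)=\theta(p)+O(\varepsilon)$, so choosing $\varepsilon$ small yields $\theta_N(p)<0$.

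The classical Hawking--Penrose focal-point argument then closes the proof on $N$. Raychaudhuri~(\ref{ray}) combined with the null convergence condition and $\sigma^2\ge 0$ gives $\dd\theta_N/\dd\lambda\le -\theta_N^2/(n-1)$ in affine parametrization, so $\theta_N\to -\infty$ at some $\lambda_0\le (n-1)/|\theta_N(p)|$. Future completeness of the generators of $H$ (the hypothesis of the theorem) guarantees that $\eta(\lambda_0)$ exists as an event of $M$, so a focal point of $S$ along $\eta$ genuinely develops. The standard focal-point theorem then yields $\eta(\lambda)\in I^+(S)$ for every $\lambda>\lambda_0$ close enough to $\lambda_0$: pick such a $\lambda$ and $s\in S$ with $s\ll\eta(\lambda)$. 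Since $s\in I^+(H)\cup\{p\}$, either $s=p$ (giving directly $p\ll\eta(\lambda)$) or the vertical $V$-flow from $h^*:=(h({\bf x}_s),{\bf x}_s)\in H$ to $s$ is timelike, in which case $h^*\ll s\ll\eta(\lambda)$. In both cases we have two points of $H$ in chronological relation, contradicting achronality.

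The main obstacle is the construction in the second paragraph: matching the purely measure-theoretic Alexandrov Hessian of the semiconvex graph function $h$ to the second fundamental form of an honest $C^2$ codimension-2 submanifold in the curved ambient spacetime, while simultaneously preserving the one-sided comparison $S\subset J^+(H)\cup\{p\}$ and the sign $\theta_N(p)<0$, is the delicate technical step. Alexandrov's theorem supplies only an $o(|{\bf x}-\bar{\bf x}|^2)$ error with no uniform control, so the perturbation parameter $\varepsilon$ must be balanced against the size of the neighborhood on which the inequality $\tilde h\ge h$ actually holds.
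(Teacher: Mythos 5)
Your proposal is correct and follows essentially the same route as the paper's proof: work at an Alexandrov point where $\theta<0$, build a $C^2$ quadratic upper-support section $S$ lying in $J^{+}(H)$ and tangent to $H$ at $p$, pass to the null hypersurface $N=\p J^{+}(S)$ with $\theta_N(p)<0$, and use Raychaudhuri together with future completeness of the generator to force a focal point that violates achronality. The only difference is expository: you spell out the $\varepsilon$-perturbation and the final chronology contradiction in more detail than the paper does.
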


Actually, we shall need achronality of $H$ on just a neighborhood of it, still this local achronality property is slightly stronger that that  included in the definition of $C^0$ null hypersurface.

\begin{proof}
Almost every point of $\tilde{H}$ is an Alexandrov point for the lower-$C^2$ function $h$. Let ${\bf x}\in \tilde{H}$ be an Alexandrov point for which $\theta$ as given in Eq.\ (\ref{kks}) is negative.
 Let $S\subset \tilde{H}$ be a hypersurface  transverse to $\tilde{n}$ and passing through ${\bf x}$. Since $h$ has second order expansion at ${\bf x}$ we can define a quadratic function $\tilde{h}$ on $S$ whose graph is tangent to that of $h$ at ${\bf x}$ and stays above $h$ (quadratic upper support) since it has larger Hessian. On spacetime the graph of $\tilde{h}$ and the boundary of its causal future define a null hypersurface $N$  which is tangent to $H$ at $p=(h({\bf x}),{\bf x})$, is $C^2$ near $p$, has in common with $H$ the lightlike generator $\gamma$ passing through $p$ and, if  $\tilde{h}$ is chosen sufficiently close to $h$, has an expansion which is negative at $p$ (because it depends on the linear and quadratic terms in the Taylor expansion of $\tilde{h}$, see Eq.\ (\ref{kks}) and they are chosen to approximate those of $h$). Thus by a standard argument which uses the completeness of $\gamma$, $N$ develops a focusing point to the future of $p$ on $\gamma$. But since $N$ stays to the future of $H$ near $p$ it would follow that $H$ is not achronal, a contradiction. $\square$
\end{proof}

The following very nice result will not be used but is really worth to mention. The proof can be found in \cite[Theor.\ 5.1]{chrusciel01} so we just sketch its main idea.

\begin{theorem} \label{abp}
Let $p$ be an Alexandrov point for the past horizon $H$, and let $\gamma\colon [0, a] \to H$, $t \mapsto \gamma(t)$, $\gamma(0)=p$ be a (segment of) generator. Then any point in $\gamma$ is an Alexandrov point. Moreover, the Weingarten map $b$ is continuously differentiable over $\gamma$ and satisfies the optical equation (\ref{nkv}). In particular, the Raychaudhuri equation (\ref{ray}) and the evolution equation for the shear hold on $\gamma$.
\end{theorem}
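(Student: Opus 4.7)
The plan is to propagate the Alexandrov second-order expansion from $p$ along $\gamma$ by constructing two families of $C^{2}$ null hypersurfaces that sandwich $H$, share $\gamma$ as a common generator, and carry smooth Weingarten maps satisfying the Riccati equation, and then to let the sandwich close as the perturbation parameter tends to zero. I work in a cylinder chart about $p=(h(\bar{\mathbf{x}}),\bar{\mathbf{x}})$, where $h$ is lower-$C^{2}$ (Proposition~\ref{mdp}). The Alexandrov hypothesis supplies an expansion $h(\mathbf{x})=P(\mathbf{x}-\bar{\mathbf{x}})+o(|\mathbf{x}-\bar{\mathbf{x}}|^{2})$ for a quadratic polynomial $P$, so for each $\epsilon>0$ the perturbations $P^{\pm}_{\epsilon}(\mathbf{x}):=P(\mathbf{x}-\bar{\mathbf{x}})\pm\epsilon|\mathbf{x}-\bar{\mathbf{x}}|^{2}$ are genuine $C^{2}$ upper/lower supports of $h$ in a neighborhood of $\bar{\mathbf{x}}$, matching $h$ to first order at $\bar{\mathbf{x}}$. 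Associating to each $P^{\pm}_{\epsilon}$ its graph $S^{\pm}_{\epsilon}$ in the transverse slice and exponentiating its past-directed null conormal yields a $C^{2}$ future null hypersurface $N^{\pm}_{\epsilon}$, in the spirit of Proposition~\ref{jsp} and the construction used in the proof of Theorem~\ref{pot}. Since $P^{\pm}_{\epsilon}$ and $h$ share first-order data at $\bar{\mathbf{x}}$, the semitangent of $N^{\pm}_{\epsilon}$ at $p$ coincides with $n(p)$, so by uniqueness of the lightlike geodesic with initial datum $(p,n(p))$, $\gamma$ is a common generator of $H$, $N^{-}_{\epsilon}$ and $N^{+}_{\epsilon}$; in the local graph picture near $p$, $N^{-}_{\epsilon}$ lies below and $N^{+}_{\epsilon}$ above $H$.

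Along $\gamma$ each $N^{\pm}_{\epsilon}$ is $C^{2}$, so it carries a $C^{1}$ Weingarten map $b^{\pm}_{\epsilon}(t)$ solving~\eqref{nkv} with the shared ambient $\overline{R}$ and $\kappa$. By the formula~\eqref{kks} and its extension to the full Weingarten map, $b^{\pm}_{\epsilon}(0)$ differ from the Alexandrov Weingarten map of $h$ at $\bar{\mathbf{x}}$ by $O(\epsilon)$, and as symmetric endomorphisms with respect to $h$ one has $b^{-}_{\epsilon}(0)\le b^{+}_{\epsilon}(0)$; the matrix Riccati equation preserves this ordering on any interval of common existence. Existence on the full interval $[0,a]$ I would obtain by a focal-point exclusion: a blow-up of $b^{-}_{\epsilon}$ at some $t_{*}\in[0,a)$ would force neighboring generators of $N^{-}_{\epsilon}$ to meet $\gamma$ at $\gamma(t_{*})$, and propagating this crossing backwards along the congruence would cause $N^{-}_{\epsilon}$ to rise above $H$ near $p$, contradicting the lower-support property; the case of $b^{+}_{\epsilon}$ is symmetric. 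Continuous dependence of~\eqref{nkv} on initial data then gives a common $C^{1}$ limit $b^{\pm}_{\epsilon}(t)\to b(t)$ solving~\eqref{nkv} throughout $[0,a]$.

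At every $t_{0}\in[0,a]$ the hypersurfaces $N^{\pm}_{\epsilon}$ remain $C^{2}$ one-sided supports of $H$ in a neighborhood of $\gamma(t_{0})$, because a crossing would violate achronality of $H$ near the common generator $\gamma$. Expressing $H$ as a local graph $h_{t_{0}}$ over a transverse slice at $\gamma(t_{0})$, the graphing functions of $N^{\pm}_{\epsilon}$ provide $C^{2}$ quadratic supports of $h_{t_{0}}$ whose Hessians are determined by $b^{\pm}_{\epsilon}(t_{0})$; as $\epsilon\to 0$ these squeeze to a common Alexandrov second-order expansion of $h_{t_{0}}$ at $\gamma(t_{0})$, so $\gamma(t_{0})$ is an Alexandrov point with Weingarten map $b(t_{0})$. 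The Raychaudhuri equation~\eqref{ray} and the shear evolution equation~\eqref{she} follow by taking trace and trace-free parts of~\eqref{nkv}. The hard part is the uniform focal-point exclusion in the second step: ruling out blow-up of the matrix Riccati solutions $b^{\pm}_{\epsilon}$ on the entire interval $[0,a]$ for all sufficiently small $\epsilon$ is where the hypothesis that $\gamma$ lies inside the horizon, rather than merely starting at $p$, becomes essential, since it lets one combine the local one-sided supports near $p$ with the global achronality of $H$ along $\gamma$.
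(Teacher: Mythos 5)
Your proposal follows essentially the same route as the paper's own (sketched) proof: sandwich $H$ between two $C^2$ null hypersurfaces generated by quadratic upper and lower supports at the Alexandrov point $p$, evolve their Weingarten maps along $\gamma$ by the Riccati equation (\ref{nkv}), and squeeze as the perturbation parameter tends to zero using continuous dependence on initial data. The paper itself only sketches this argument and defers the one genuinely delicate step --- ensuring the \emph{lower} barrier remains focal-point-free and a one-sided support on all of $[0,a]$, which your backward-propagation argument does not quite establish (the paper handles it by asserting $\sigma^\pm\cap I^-(\gamma(a))=\emptyset$) --- to the full proof in \cite[Theor.\ 5.1]{chrusciel01}, so your sketch is at the same level of completeness as the paper's.
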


\begin{proof}[Sketch of proof]
As in the proof of Theorem \ref{pot} let $S\subset \tilde{H}$ be be a hypersurface  transverse to $\tilde{n}$ and passing through ${\bf x}$. Since $h$ has second order expansion at ${\bf x}$ we can define {\em two} quadratic functions $h^\pm$ on $S$ whose graphs are tangent to that of $h$ at ${\bf x}$ and stay above $h$ in the plus case (quadratic upper support, larger Hessian) or below it (quadratic lower support, smaller Hessian). On spacetime the graphs of ${h}^\pm$ define two local condimension 2 submanifolds $\sigma^\pm$ passing through $p=(h({\bf x}),{\bf x})$ which do not intersect $I^{-}(\gamma(a))$.

The boundary of $J^{+}(\sigma^{\pm})$  defines a null hypersurface $N^\pm$  which is tangent to $H$ at $p$, is $C^3$ near $p$ and,  has in common with $H$  the segment of generator $\gamma$.
Since $N^\pm$ are $C^3$ they satisfy the optical equation (\ref{nkv}). Thus the sections of $N^\pm$ have quadratic approximation on $\gamma$ determined by the Weingarten map $b^\pm$.
We can take a succession $h^\pm_n \to h$ at $p$ and so obtain hypersurfaces $H^\pm_n$ and through evolution a sequence of maps $b^\pm_n$ defined on $[0,a]$. Since the optical equation depends continuously on the initial condition the maps $b^\pm_n$ converge, at any point of $\gamma$ to the evolution $b_N(t)$ of $b_N(0)$ as calculated using the Alexandrov Hessian at $p$. But as locally $N\subset J^+(N^-_n)\cap J^{-}(N^+_n)$, this quadratic support bound implies that  $N$ admits quadratic Taylor expansion and that its Weingarten map is indeed $b_N(t)$. $\square$
\end{proof}

The next result
proves that the conditions $\mu^s_{ij}=0$ and $\theta=0$ force the horizon to be smooth. Observe that if the horizon is $C^1_{loc}$ then by Theor.\ \ref{pdg} it is $C^{1,1}_{loc}$ which implies that $\mu^s_{ij}=0$, however the converse does not hold:  $\mu^s_{ij}=0$ does not imply that the horizon is $C^1_{loc}$, see Remark \ref{rel}.

\begin{theorem} \label{app}
 Suppose that a past horizon has local graphing function $h$ with vanishing singular Hessian part, i.e.\ $\mu^s_{ij}=0$ (for instance the horizon is $C^1_{loc}$). If  the non-singular Hessian part
 satisfies $\theta=0$ on an open set $O$, then the horizon has at least the same regularity as the metric, thus smooth if the metric is smooth, and analytic if the metric is analytic.
 More generally, if $\theta({\bf x})$ does not necessarily vanish but is locally bounded   the horizon is $C^{1,1}_{loc}$ in $O$.
\end{theorem}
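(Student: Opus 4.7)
The plan is to interpret the coordinate expression (\ref{kks}) for $\theta$ as a quasi-linear second-order elliptic PDE for the local graphing function $h$ and then invoke classical regularity theory. Setting $v^i := [g^{ij}(h,{\bf x})\p_j h - g^{i0}(h,{\bf x})]\sqrt{-\vert g\vert(h,{\bf x})}$ as in (\ref{kwd}), the identity (\ref{mkd}) together with the hypothesis $\mu^s_{ij}=0$ says that on $O$ the Lipschitz function $h\in W^{1,\infty}_{\textrm{loc}}$ is a weak (Sobolev) solution of
\[
\p_i v^i = \frac{\sqrt{-\vert g\vert}}{-g(n,V)}\,\theta,
\]
with no extra singular distributional contribution. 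By the choice of chart in Sect.~\ref{kdp} the principal part $a^{ij}(h,{\bf x}):=g^{ij}(h,{\bf x})\sqrt{-\vert g\vert}$ is positive definite, so the equation is uniformly elliptic on compact subsets of $O$; since $h$ is Lipschitz and $g$ is $C^3$, the coefficients $a^{ij}(h({\bf x}),{\bf x})$ and $b^i(h({\bf x}),{\bf x}):=-g^{i0}(h,{\bf x})\sqrt{-\vert g\vert}$ are Lipschitz in ${\bf x}$.

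Given this setup I would bootstrap in the standard way. First, De Giorgi--Nash--Moser/Ladyzhenskaya--Ural'tseva regularity for divergence-form equations with Lipschitz principal part and bounded right-hand side yields $h\in C^{1,\alpha}_{\textrm{loc}}(O)$ for some $\alpha\in(0,1)$. With $\p h$ now H\"older continuous one may rewrite the equation in non-divergence form; when $\theta$ is only locally bounded, Calder\'on--Zygmund $W^{2,p}$-theory for every $p<\infty$, combined with the quasi-linear estimate for bounded data, gives $h\in C^{1,1}_{\textrm{loc}}(O)$, which is the last assertion of the theorem. When $\theta\equiv 0$ on $O$ the right-hand side vanishes and one iterates Schauder estimates: at each step the coefficients inherit the current regularity of $h$ (up to the regularity of $g$), and one additional order of differentiability of $h$ is gained. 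If $g$ is smooth the iteration runs indefinitely and yields $h$ smooth; if $g$ is real-analytic, Morrey's analyticity theorem for solutions of analytic elliptic equations gives $h$ analytic.

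The main obstacle I expect is the first H\"older bootstrap. One must verify that the hypothesis $\mu^s_{ij}=0$ really produces a genuine Sobolev weak equation, so that standard divergence-form regularity applies without the estimates having to carry a distributional singular Hessian term, and one must cite a reference covering divergence-form quasi-linear equations whose coefficients depend on the solution through a merely Lipschitz composition $a^{ij}(h({\bf x}),{\bf x})$. Once the argument is localized to a compact subdomain the usual structural hypotheses (uniform ellipticity, boundedness of coefficients, growth) are routine to verify, because $h$ is already Lipschitz and $Dh$ is uniformly bounded there.
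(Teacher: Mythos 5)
Your setup and the bulk of your argument coincide with the paper's: the identity (\ref{mkd}) together with $\mu^s_{ij}=0$ turns $\theta$ into the right-hand side of the divergence-form quasi-linear elliptic equation (\ref{mld}) for the Lipschitz weak solution $h$, uniform ellipticity comes from the choice of chart in Sect.\ \ref{kdp}, and for $\theta=0$ the Lady\v{z}enskaja--Ural$'$tseva regularity theorem (the paper cites \cite[Theor.\ 6.4, Chap.\ 4]{ladyzhenskaya68}) plus Morrey's analyticity theorem give the $C^{l,\alpha}$, smooth and analytic conclusions; your Schauder bootstrap is essentially an unrolled version of that citation.

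The genuine gap is in the last assertion. From a merely locally bounded right-hand side, Calder\'on--Zygmund theory yields $h\in W^{2,p}_{loc}$ for every $p<\infty$, hence $C^{1,\alpha}_{loc}$ for every $\alpha<1$, but \emph{not} $C^{1,1}_{loc}$: membership in $W^{2,p}$ for all finite $p$ does not imply $W^{2,\infty}$, and the standard counterexamples (e.g.\ $u=(x_1^2-x_2^2)\log\vert x\vert$ near the origin, for which $\Delta u$ is bounded while $D^2u$ is not) show that no purely elliptic estimate with $L^\infty$ data can produce a bounded Hessian. The paper closes this step with a geometric input that your proposal omits: Tolksdorf's theorem is invoked only to conclude that $h$ is $C^1_{loc}$, i.e.\ that the horizon has no endpoints in $O$, and then Theorem \ref{pdg} applies --- a horizon differentiable on an open set is locally both a past and a future horizon, so its graphing function is simultaneously semi-convex and semi-concave with locally bounded constants, which forces $Dh$ to be locally Lipschitz. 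Some such two-sided lower-$C^2$ argument, exploiting the special structure of horizons rather than the PDE alone, is needed to reach $C^{1,1}_{loc}$.
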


It is worth to recall that every $C^1$ manifold admits a unique smooth compatible structure (Whitney) and any smooth manifold admits a unique compatible analytic structure (Grauert and Morrey). Thus there is no ambiguity on the smooth or analytic structure placed on the horizon.

\begin{proof}

Let $p\in O$ and let $k$ be such that $p\in D_k=\psi(\tilde D_k)$. Let $O_k:=D_k\cap O$.
By assumption $h$ has first weak derivative in $L^\infty_{loc}(\tilde O_k)$ ($h$ is Lipschitz) and  second order weak derivatives in $L^1_{loc}(\tilde O_k)$ (because $\mu_{ij}^s=0$). In particular $h\in W^{1,2}_{loc}(\tilde O_k)$.

As $\mu_{ij}^s=0$  Eq.\ (\ref{mkd}) implies that locally $[\p_i v^i]=\tilde{\mu}_H \llcorner \,\theta$, thus  by the divergence theorem for every $\varphi \in C^\infty_c(\tilde O_k, \mathbb{R})$
\[
\int_{\tilde O_k} v^i \p_i \varphi \,d x=- \int_{\tilde O_k}  \varphi \,\theta \,\tilde{\mu}_H.
\]
By Eq.\ (\ref{kwd}) this is a quasi-linear elliptic differential equation (recall that $g^{ij}$ is positive definite in the coordinate cylinder) in divergence form for which   $h$ is a weak solution
\begin{equation} \label{mld}
\p_i (G^{ij}({\bf x},h) \p_{j} h-f^i({\bf x},h) )= \frac{\sqrt{-\vert g\vert(h({\bf x}), {\bf x})}}{-g(n,V)} \,\theta({\bf x})=0 ,
\end{equation}
where
\begin{align*}
G^{ij}({\bf x},h)&= g^{ij}(h, {\bf x})  \sqrt{-\vert g\vert(h, {\bf x})},\\
f^i({\bf x},h)&= g^{i0}(h, {\bf x}) \sqrt{-\vert g\vert(h, {\bf x})},
\end{align*}
have the same degree of differentiability of the metric. As the metric is $C^l$, $l\ge 3$,
they are $C^{l-1,\alpha}$ for some  $\alpha\in (0,1]$. Observe that $h$ is locally bounded and $G^{ij}$ satisfies a uniform ellipticity condition (i.e.\ its eigenvalues are locally bounded by positive constants from above and from below). By a well known result by Lady{\v{z}}enskaja and Ural$'$tseva  which generalizes De Giorgi regularity theorem \cite[Theor.\ 6.4, Chap.\ 4]{ladyzhenskaya68} (apply it with $m=2$, $a=0$) the weak solution $h$ is actually $C^{l, \alpha}$, thus smooth if the metric is smooth. The more general statement with  $\theta$  locally bounded  follows observing that this condition implies that the right-hand side of Eq.\ (\ref{mld})  $\frac{\sqrt{-\vert g\vert(h({\bf x}), {\bf x})}}{-g(n,V)} \,\theta({\bf x})$ is a locally bounded function of ${\bf x}$ (recall that $-g(n,V)>0$ can be  chosen arbitrarily, Sect.\ \ref{voa}). Thus by a general result by Tolksdorf \cite{tolksdorf84} on quasi-linear PDEs,  the weak solution $h$ is $C^1_{loc}$, and so the horizon in $C^{1,1}_{loc}$ by Theor.\ \ref{pdg} (for the sake of comparison with the literature we stress that if $h$ were vector valued then its regularity could be assured only up to a set of measure zero as first observed by De Giorgi \cite{giaquinta83}).
A similar result by Petrowsky and Morrey \cite{morrey58} proves that the solution is analytic if the coefficients of the quasi-linear equation are analytic.  $\square$
\end{proof}

We are ready to state our main theorem which establishes that under a rather weak positive energy condition the compact Cauchy horizons are smooth. The proof  uses Theorem \ref{one} on the completeness of generators of compactly generated Cauchy horizons.

\begin{theorem} \label{mai}
Let  $S$ be a connected partial Cauchy hypersurface and suppose that the null convergence condition holds. If $H$  is a compactly
generated component of $H^{-}(S)$ then it coincides with $H^{-}(S)$, it  is compact\footnote{The result that the compactly generated horizons are actually compact has been first  obtained in \cite{hawking92} and  \cite[Theor. 12]{budzynski01} under smoothness assumptions on the horizon.} and  $C^{3}$. Actually smooth if the metric is smooth, and analytic if the metric is analytic. Moreover,  $S$ is  compact with zero Euler characteristic, $H$  is generated by future complete lightlike lines and on $H$
\[\theta=\sigma^2=Ric(n,n)=0,\qquad b=\overline{\sigma}=\overline{R}=\overline{C}=0.\]
In other words, for every $X\in TH$, $\nabla_X n \propto n$ and $R(X, n)n\propto n$, that is, the second fundamental form vanishes on $H$ and the null genericity condition is violated everywhere on $H$.
\end{theorem}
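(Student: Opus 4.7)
The plan is to combine Theorem~\ref{one} (future completeness of generators) with the area theorem to force $\theta$ and the singular Hessian to vanish, then promote $H$ to the regularity of the metric via Theorem~\ref{app}, and finally exploit the resulting pointwise Raychaudhuri/Riccati equations on the now-$C^2$ horizon. First I would establish that $H$ is compact: compact generation pins every generator, when traced into the past, inside a fixed compact set $K$, while Theorem~\ref{one} provides future completeness; a limit-curve argument (parallel to the smooth-case treatments of \cite{hawking92,budzynski01}, but using only continuity of the semitangent structure) then shows that $H$ is a closed subset of a compact set, hence compact. Next, I apply Theorem~\ref{are} with $D=H$ and $\varphi\equiv 1$: since $H$ is compact and edgeless, its essential boundary in the smooth atlas on $\tilde H$ is empty, so the right-hand side of (\ref{dik}) vanishes. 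By Theorem~\ref{pot} the absolutely continuous expansion satisfies $\theta\ge 0$ $\mu_H$-a.e., and by Remark~\ref{pos} the singular contribution $P(H,1)$ is non-negative. Both must therefore vanish identically, giving $\theta=0$ $\mu_H$-a.e.\ and $\mu^s_{ij}=0$. Theorem~\ref{app} then forces $H$ to inherit the regularity of the metric, i.e.\ at least $C^3$, smooth if $g$ is smooth, analytic if $g$ is analytic.

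Once $H$ is $C^2$, the Raychaudhuri equation (\ref{ray}) and the Riccati equation (\ref{nkv}) hold pointwise along every generator. From $\theta\equiv 0$ we have $d\theta/ds=0$, so (\ref{ray}) collapses to $\textrm{Ric}(n,n)+\sigma^2=0$; the null convergence condition together with $\sigma^2\ge 0$ forces both summands to vanish separately, so $\overline{\sigma}=0$ and hence $b=\overline{\sigma}+\frac{1}{n-1}\theta\,\textrm{Id}=0$. Substituting $b\equiv 0$ into (\ref{nkv}) yields $\overline{R}=0$, and therefore $\overline{C}=\overline{R}-\frac{1}{n-1}\textrm{tr}\,\overline{R}\,\textrm{Id}=0$. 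The generators are future-complete achronal null geodesics, hence lightlike lines. The nowhere-vanishing $C^2$ tangent field $n$ on the compact manifold $H$ yields $\chi(H)=0$ by Poincaré--Hopf. For the topology of $S$, define a partial map $\pi\colon S\to H^-(S)$ by flowing backward along $V$ from each point of $S$ until the first intersection with the horizon; the preimage $\pi^{-1}(H)\subseteq S$ is open (since the component $H$ is open in $H^-(S)$) and closed (by compactness of $H$ and continuity of the flow), so by connectedness of $S$ it equals $S$. Hence $H=H^-(S)$, $\pi$ restricts to a homeomorphism $S\simeq H$, and $S$ is compact with $\chi(S)=\chi(H)=0$.

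The main obstacle I expect is the first step, namely obtaining compactness of $H$ purely from the compact-generation hypothesis in the non-differentiable setting, since the existing arguments in \cite{hawking92,budzynski01} use differentiability of the generator flow; one needs a limit-curve version that works with the BV semitangent field constructed in Sect.~\ref{kdp}. A secondary delicate point is the justification of applying Theorem~\ref{are} with $D=H$: one must decompose $\tilde H$ into finitely many bounded open subsets sitting inside coordinate cylinders, check that each has finite perimeter in its cylinder chart, and observe that the internal boundaries pairwise cancel (exactly the bookkeeping internal to the proof of Theorem~\ref{are}) while the outer boundary is empty because $H$ has no edge.
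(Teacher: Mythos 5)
Your overall architecture (area theorem forces $\theta=0$ and $\mu^s_{ij}=0$, Theorem~\ref{app} upgrades regularity, then Raychaudhuri/Riccati kill $\sigma$, $\overline{R}$, $\overline{C}$, $b$) matches the paper, and the endgame — Poincar\'e--Hopf for $\chi(H)=0$, the flow of $V$ identifying $H$ with $S$ — is essentially the paper's. But there is a genuine gap at the step you yourself flag as the main obstacle, and the fallback you propose does not close it. Compact generation of the \emph{past} horizon $H^{-}(S)$ means the future-inextendible generators are eventually \emph{future} imprisoned in a compact set $K$ (your phrase ``traced into the past'' has the time direction reversed); it says nothing about where the points of $H$ near the past endpoints of generators sit. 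A generator can wander arbitrarily far outside $K$ before being captured, so $H$ need not lie in any compact set a priori, and no limit-curve argument ``with the BV semitangent field'' will show that it does: limit curves give closedness, which $H$ already has, not boundedness. Since your application of Theorem~\ref{are} with $D=H$ presupposes that $H$ is relatively compact, the whole chain stalls here.

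The paper's proof resolves this by reversing the logical order: it applies the area theorem not to $H$ but to $D=H\cap K$, after fattening $K$ to a finite union of closed coordinate cylinders so that $\tilde D$ has piecewise $C^1$ boundary. Because every generator is future imprisoned in $K$, nothing exits through the future boundary, so $\vert A(\p_+D,1)\vert=0$ and the right-hand side of (\ref{dik}) is $-\vert A(\p_-D,1)\vert\le 0$; the left-hand side is $\ge 0$ by Theorem~\ref{pot}. Equality then yields three things at once: $\theta=0$ and $\mu^s_{ij}=0$ on $D$ (whence smoothness via Theorem~\ref{app}), and $\vert A(\p_-D,1)\vert=0$, which rules out any generator entering $K$ from outside and therefore forces $H\subset K$, i.e.\ compactness. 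So compactness is a \emph{consequence} of the area identity, not a prerequisite for it. If you restructure your argument to run the area theorem on $H\cap K$ first, the rest of your proposal goes through essentially as written.
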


Some comments are in order. Any closed manifold of odd dimension has zero Euler characteristic, so for the physical four dimensional spacetime case $(n=3)$ there is no need to write ``with zero Euler characteristic'' in the above statement.
Without the connectedness condition on $S$ we cannot infer that $H$ coincides with $H^{-}(S)$, so it can be removed if it is known that  the whole $H^{-}(S)$ is compactly generated.
The null convergence condition is necessary for without this assumption Budzy\'nski, Kondracki and Kr\'olak have been able to construct an example of compact  Cauchy horizon  which has no edge and is not differentiable \cite{budzynski03}.

\begin{proof} By Theorem \ref{one} the generators are future complete, and by Theorem \ref{pot} $\theta\ge 0$ $\mathcal{L}^n$-almost everywhere.
Let $K$ be the compact set in which all generators of $H$ are future imprisoned.
Since $H\cap K$ is compact $K$ is covered by a finite number of coordinate cylinders. We can replace $K$ by the union of the closure of these cylinders, thus $K$ can  be chosen such that $\tilde{D}:=\psi^{-1}(H\cap K)\subset \tilde{H}$ has piecewise $C^1$ boundary and hence has finite perimeter.
Since $H\cap K$ is compact its $\mu_H$-measure is finite. Let $D=H\cap K$
 then the right-hand side of Eq. (\ref{dik}) is non-positive since $\vert A(\p_+D,\varphi)\vert=0$ (no generator escapes $D$ so $S_v<0$). But the left-hand side of (\ref{dik}) is non-negative thus both sides are zero.
 As we have equality $\theta=0$ and $\mu_{ij}^s=0$
  by Theorem \ref{are},  hence the horizon is smooth by Theorem \ref{app}, and generated by lightlike lines (Theorem \ref{jpf}).
Moreover, $H$ must be entirely contained in $K$ for otherwise the generators entering $K$ would imply $ \vert A(\p_-D, 1)\vert>0$ and hence a negative right-hand side. As a consequence, $H$ is compact. Any global timelike past-directed vector field, when suitably normalized, has a 1-flow map which establishes a homeomorphism between $H$ and a subset $W$ of $S$. But $H$ has no edge thus $W$ cannot have boundary on $S$, so as $S$ is connected, $W=S$ and hence $H=H^{-}(S)$.
As $H$ is compact with zero Euler characteristic (it admits a $C^0$ field of semitangents), $S$ has the same properties.
The Raychaudhuri equation (\ref{ray})
and the null energy condition imply $\sigma^2=Ric(n,n)=\textrm{tr} \overline{R}=0$ and hence $\overline{\sigma}=0$ on $H$. The evolution equation for the shear (\ref{she}) implies $\overline{C}=0$ and hence $\overline{R}=0$. Since $\theta=0$ and $\overline{\sigma}=0$ we have also $b=0$. $\square$
\end{proof}

\section{Applications}

In this section we explore some applications of the area theorem and the smoothness of compact Cauchy horizons.

\subsection{Time machines}
Hawking's classical theorem on chronology protection \cite{hawking92} is:
\begin{theorem}
Let $(M,g)$ be a spacetime which satisfies the null convergence condition. Let $S$ be a non-compact connected partial Cauchy hypersurface, then $H^+(S)$ cannot be compactly generated.
\end{theorem}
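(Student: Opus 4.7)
The plan is to derive the theorem as a direct corollary of the time-dual of Theorem \ref{mai}, which was set up precisely to subsume Hawking's chronology protection result under a unified smoothness statement.

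First I would argue by contradiction: suppose $H^+(S)$ is compactly generated. Since $S$ is connected and a partial Cauchy hypersurface, I can pick the (unique) connected component of $H^+(S)$ containing the compactly generating set and apply the time-dual of Theorem \ref{mai} to it. That theorem asserts that such a compactly generated component must in fact coincide with all of $H^+(S)$, must be compact, smooth, and crucially that the partial Cauchy hypersurface $S$ itself must be compact (with vanishing Euler characteristic). This directly contradicts the hypothesis that $S$ is non-compact, closing the argument.

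The remaining point is to confirm that Theorem \ref{mai}, stated for past horizons $H^-(S)$, applies time-dually to $H^+(S)$. This is immediate from the general time-dual symmetry of the setup: reversing time orientation turns a future null hypersurface into a past one, turns future completeness of generators (Theorem \ref{one}) into past completeness, and leaves the null convergence condition invariant since $\mathrm{Ric}(n,n)\ge 0$ is insensitive to the choice of time orientation of $n$. The compactness conclusion on $S$ also carries over unchanged because $S$ is defined acausally and is indifferent to time orientation.

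Thus the only real work is conceptual: recognizing that once the compactness of $S$ is forced by Theorem \ref{mai}, Hawking's theorem follows without any further analysis. There is no serious obstacle here — all the heavy lifting (the area theorem, the quasilinear elliptic regularity, the completeness of generators) has already been absorbed into Theorem \ref{mai}. The proof is just a two-line contrapositive application.
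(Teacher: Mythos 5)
Your proposal is correct and is essentially identical to the paper's own argument: the paper simply observes that this theorem is contained in the time-dual of Theorem \ref{mai}, since that theorem forces $S$ to be compact whenever a component of the Cauchy horizon is compactly generated, contradicting the non-compactness hypothesis. Your additional remarks on why the time-dualization is legitimate are sound but add nothing beyond what the paper leaves implicit.
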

This theorem is contained in
 Theorem \ref{mai} (in the time dual version).

The original proof by Hawking, based on an area argument similar to  that employed in \cite[Eq.\ (8.4)]{hawking73}, was incomplete as he assumed a $C^2$ horizon at two steps: in order to claim that the generators are complete and in order to apply his flow argument to the horizon. Our Theorem \ref{mai} jointly with Theorem \ref{one} solves the problems of Hawking's original argument and proves, furthermore, that every compact Cauchy horizon is smooth.

According to Hawking  this theorem implies that regions of chronology violation (time machines) cannot form starting from nice initial conditions. The argument is as follows: on a spacetime admitting a non-compact Cauchy hypersurface $S$  the construction of a time machine (e.g.\ a region of chronology violation) by some advanced  civilization  would necessarily imply the formation of   a horizon $H^{+}(S)$ which, being originated by the actions of that civilization on a limited spacetime region, i.e.\ a compact set, would have its past generators entering that region.  In other words, $H^+(S)$ would have to be compactly generated.  The theorem proves that there is a contradiction, hence the time machine cannot form.

A more precise result forbidding the formation of time machines will be obtained in Theorem \ref{laz}.

\subsection{Topology change}

The question of topology change in general relativity has attracted considerable interest \cite{geroch67,tipler77,chrusciel93,borde94b,borde97}. Reinhart  \cite{reinhart63,geroch67} proved that in four spacetime dimensions any two spacelike 3-manifolds can be connected by a Lorentzian cobordism, namely the Lorentzian condition on the metric does not restrict the possibilities of topology change.
 Geroch was able to prove that any topology change which takes place over a compact region implies the formation of closed timelike curves, so he showed that  chronology forbids topology change (of course there could  be  topology change in non-compact regions as the breaking of the spacetime continuum might lead to any sort of phenomena \cite{yodzis73,krasnikov95}).

Tipler  gave two theorems in which he removed the chronology condition by imposing the weak energy condition and the null genericity condition \cite[Theor.\ 4,6]{tipler77}. These theorems, which relied on the usual differentiability assumptions on horizons, have been considered by many people the last word on the subject.
Unfortunately, physically speaking,  the genericity condition might even less justified than  chronology. Indeed, the genericity condition makes physical sense only over lightlike geodesics that are not imprisoned in a compact set, namely only in those cases in which it is known that they probe a {\em non-finite} region of spacetime.

With Theorem \ref{mai} we have seen that the very presence of compact Cauchy horizons implies a violation of the null genericity condition. Under tacit differentiability  assumptions on the horizon this result could be found in \cite{borde84}.
However, we have no physical reasons to be as confident in the validity of the null genericity condition on compact subsets as to exclude the formation of compact Cauchy horizons. Thus we cannot assume the null genericity condition in the study of topology change as done by Tipler \cite{tipler77} and Borde \cite[Theor.\ 3]{borde94b}.

However, thanks to the result on the smoothness of horizons we can solve the problem of topology change in a satisfactorily way by imposing just the null convergence condition.

\begin{theorem} \label{laz}
Let $(M,g)$ be a spacetime which satisfies the null convergence condition.
Suppose that two disjoint $C^1$ spacelike hypersurfaces are Lorentz cobordant (cf.\ \cite{yodzis72}) in the sense that (a) there is an open connected set $O$ such that $\p O=S_1\cup S_2$, and (b) there is a smooth future-directed  timelike vector field $V$ which  points to the interior of $\bar{O}$ on $S_1$ and to the exterior of $\bar{O}$ on $S_2$.

Moreover, suppose that
there is an open set $G\subset O$ such that the  1-flow map of $V$ establishes a diffeomorphism between $S_1\backslash K$ and $S_2\backslash K$, $K=\bar{G}$ (see Fig.\ \ref{kxs}; we do not assume that $K$ is compact).
Furthermore, assume that $K\cap S_1$ is compact and
\begin{itemize}
\item[(*)] there is a set $C$ generated by integral segments of $V$ such that the boundary of $G$ is a disjoint union $\p G= (K\cap S_1)\cup (K\cap S_2)\cup C$ and $C\subset \textrm{Int} D^+(S_1)$.
 \end{itemize}
Then one of the following possibilities holds:
\begin{itemize}
\item[(i)] $K\subset D^+(S_1)$, every integral curve of $V$ starting from $S_1$ reaches $S_2$, $\bar{O}=S_1\times [0,1]$, $K$ is compact, $S_1$ and $S_2$ are diffeomorphic.
\item[(ii)] $K$ is non compact and contains an integral curve of $V$ which escapes every compact set in some direction and intersects at most one set among  $S_1\cap K$ and $S_2\cap K$.
\item[(iii)] $H^+(S_1)\subset G$ is compact and non-empty, $S_1$ is compact and has zero Euler characteristic and: $S_1$ is diffeomorphic to $S_2$ or there is a closed timelike curve inside $K$.
\end{itemize}
\end{theorem}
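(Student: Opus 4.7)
The plan is to carry out a trichotomy based on the future flow of $V$ starting from $S_1$. Let $\Phi_t$ denote this flow and define
\[
A=\{p\in S_1: \text{the integral curve of } V \text{ from } p \text{ stays in } \bar O \text{ and meets } S_2\}.
\]
Since $V$ is transverse to both $S_1$ and $S_2$, $A$ is open in $S_1$, and by the cobordism hypothesis $A\supset S_1\setminus K$. Thus $S_1\setminus A\subset S_1\cap K$, which is compact. The first case is $A=S_1$: here $\Phi$ assembles into a continuous bijection $\bar O\cong S_1\times [0,1]$, and the transversality of $V$ to $S_1,S_2$ upgrades this to a diffeomorphism, giving case (i). In particular $K$ is then compact (it is the image under $\Phi$ of the compact set $K\cap S_1$ in finite flow-time) and $K\subset D^+(S_1)$ because every future-inextendible causal curve from a point of $K$ must exit via $S_2$ through the diffeomorphic flow picture plus condition (*) on the lateral boundary $C$.

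If $A\neq S_1$, I pick $p\in\partial A\cap S_1\subset S_1\cap K$. The integral curve $\gamma_p$ of $V$ from $p$ does not meet $S_2$, and by the cobordism hypothesis it cannot leave $K$ into $O\setminus K$ (once outside, it would be driven to $S_2$ by the product structure). If some such $\gamma_p$ escapes every compact neighborhood of $K$ in $M$, we record case (ii); otherwise every $\gamma_p$ is future-imprisoned in the compact set $K$.

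In the imprisonment subcase, I claim $H^+(S_1)$ is nonempty, contained in $G$, and compactly generated. Nonemptiness: $\gamma_p$ is a future-inextendible timelike curve from $p$ that never reaches $S_2$, so $p\notin D^+(S_1)^{\circ}$ in the interior sense expected of a Cauchy evolution, forcing $H^+(S_1)\neq\emptyset$. Localisation in $G$: outside $G$ but inside $\bar O$ the $V$-flow places every point in $\mathrm{Int}\,D^+(S_1)$, while by hypothesis (*) the lateral boundary $C$ lies in $\mathrm{Int}\,D^+(S_1)$ as well, so $H^+(S_1)\cap\bar O\subset G$. Compact generation: a limit curve argument applied to a maximising past-directed sequence approaching $H^+(S_1)$ yields null generators imprisoned in the compact set $K$; this is exactly the compactly generated situation of Theorem \ref{one}. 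The time-dual of the main Theorem \ref{mai} now applies and forces $H^+(S_1)$ to be a smooth compact Cauchy horizon with $S_1$ compact and of zero Euler characteristic; this gives the first half of case (iii).

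To finish case (iii), I would invoke the chronology dichotomy. If there is a closed timelike curve in $K$ we are done. Otherwise chronology holds on $\bar O$, and since $H^+(S_1)\subset G$ is smooth compact with lightlike tangent vector field $n$, one can pick a smooth future-directed timelike vector field $V'$ on $\bar O$ that agrees with $V$ outside a neighbourhood of $K$ and is transverse to $H^+(S_1)$; integrating $V'$ from $S_1$ and continuing past $H^+(S_1)$ (using chronology to avoid re-entering $K$) reassembles the $V$-flow diffeomorphism between $S_1\setminus K$ and $S_2\setminus K$ with a transverse passage through the smooth horizon, yielding a global diffeomorphism $S_1\cong S_2$. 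The hardest technical step is verifying compact generation of $H^+(S_1)$ cleanly from the imprisonment of the timelike curve $\gamma_p$: the null generators of the horizon are distinct from $\gamma_p$, and one must run the limit curve theorem on causal curves shadowing $\gamma_p$ to extract a null generator trapped in $K$, then use condition (*) to rule out generators escaping through the lateral boundary $C$.
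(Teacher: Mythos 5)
Your trichotomy is keyed to the wrong condition, and this produces a genuine gap in the first branch. You split on whether every integral curve of $V$ issuing from $S_1$ reaches $S_2$, and in the affirmative case you conclude (i), asserting $K\subset D^+(S_1)$ ``because every future-inextendible causal curve from a point of $K$ must exit via $S_2$ through the diffeomorphic flow picture''. This does not follow: membership in $D^+(S_1)$ is a statement about \emph{all} past-inextendible causal curves through a point, not about the integral curves of the particular field $V$, and the product structure $\bar{O}\cong S_1\times[0,1]$ induced by the $V$-flow gives no control over arbitrary causal curves. One can have a pocket of chronology violation (or merely a non-globally-hyperbolic region) inside $G$ through which the integral curves of $V$ pass unimpeded from $S_1$ to $S_2$; then your set $A$ equals $S_1$ but $H^+(S_1)\cap K\ne\emptyset$, so the correct alternative is (iii), not (i). The paper instead organizes the case analysis around whether $K\cap H^+(S_1)=\emptyset$ (which it shows, using connectedness of $O$ and condition (*), to be equivalent to $K\subset D^+(S_1)$), and only afterwards derives the flow statements of (i) as consequences; it also disposes at the outset of the possibility that $D^+(S_1)\cap K$ is non-compact, reducing it to case (ii) by a limit-curve argument. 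Your proposal omits this reduction, which is needed both to land in (ii) correctly and to obtain compactness of the horizon later.

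A second gap is in the closing dichotomy of case (iii). Your plan to build a modified field $V'$ transverse to the smooth horizon and ``continue past $H^+(S_1)$ using chronology to avoid re-entering $K$'' is not an argument: nothing guarantees that the $V'$-flow from $S_1$ reaches $S_2$, which is exactly what is at issue. The paper's mechanism is different and essential: if some integral curve of $V$ in $\bar{O}$ fails to join $S_1$ to $S_2$ and does not escape every compact set, it is future imprisoned and therefore accumulates on a point $p$, hence (being an integral curve of a fixed smooth field) on a short integral segment of $V$ through $p$; this forces $p\in I^+(p)\cap I^-(p)$ and yields the closed timelike curve in $K$. That accumulation-on-its-own-flow-segment trick is the missing idea. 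Relatedly, compact generation of $H^+(S_1)$ is obtained in the paper not by ``shadowing'' the timelike curve $\gamma_p$, but by noting that $H^+(S_1)\cap K$ lies in the compact set obtained from $D^+(S_1)\cap K$, that its past-inextendible generators remain in it, and that (*) prevents them from meeting $C$.
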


\begin{figure}[ht!]
\centering
\includegraphics[width=9.5cm]{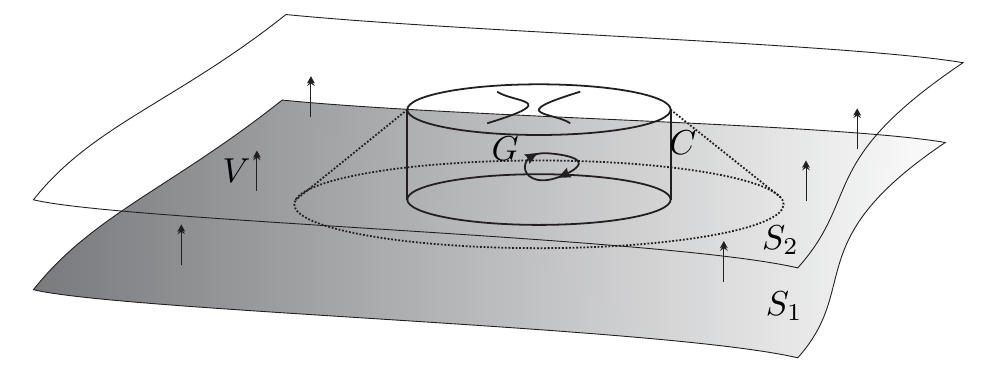}
\caption{The geometrical elements of Theorem \ref{laz}. The set $O$ is the open set between $S_1$ and $S_2$. The pathological behavior is inside $G$. The boundary $C$ is such that $J^{-}(C)\cap S_1$ is compact.} \label{kxs}
\end{figure}

Recall that by definition the empty set is compact.
The condition (*)  is weaker than the condition used by Geroch and Tipler for a similar purpose \cite{tipler77}, namely $\bar{O}\backslash K\subset D^{+}(S_1)$. It is trivially satisfied if $S_1$ is compact as in Georch's theorem \cite[Theor.\ 2]{geroch67}, just take $G=O$ so that $C=\emptyset$.
It is also satisfied if $M$ is  weakly asymptotically simple and empty (WASE) and asymptotically predictable  \cite{hawking73}. It serves to separate the nasty spacetime behavior on $G$ from the nasty spacetime behavior that may happen at spacelike infinity (in this sense Geroch and Tipler's condition demands that there is no nasty behavior at spacelike infinity). This condition makes sense since we  want to focus on the possibility of topology change or chronology violation caused by the evolution of spacetime and not on pathologies already present at spacelike infinity.

The theorem proves that if the partial Cauchy hypersurface  $S_1$ is non-compact or compact with Euler characteristic different from zero, so that (iii) does not apply, then the evolution does not involve neither chronology violation nor topology change (i.e.\ (i) applies) unless the spacetime continuum is broken (i.e.\ (ii) holds).
Since in three space dimensions any closed  manifold $S_1$ has  zero Euler characteristic we can conclude that physically,
without breaking the spacetime continuum it is impossible, in classical general relativity, to locally create time machines or to locally change the topology of space. If these pathologies take place they involve the whole Universe (which must have compact section).

If the spacetime continuum is not broken we are in the  compact cobordism case, which is  obtained as a corollary setting $O=G$ and $C=\emptyset$.

\begin{theorem}
Let $(M,g)$ be a spacetime which satisfies the null convergence condition.
Suppose that two disjoint $C^1$ spacelike hypersurfaces are compactly Lorentz cobordant  in the sense that (a) there is an open connected relatively compact set $O$ such that $\p O=S_1\cup S_2$, and (b) there is a smooth future-directed  timelike vector field $V$ which  points to the interior of $\bar{O}$ on $S_1$ and to the exterior of $\bar{O}$ on $S_2$.
Then one of the following mutually excluding possibilities holds.
\begin{enumerate}
\item $\bar{O}\subset D^+(S_1)$, every integral curve of $V$ starting from $S_1$ reaches $S_2$ and conversely, $\bar{O}=S_1\times [0,1]$, $S_1$ and $S_2$ are diffeomorphic.
\item $H^+(S_1)\subset O$ is compact and non-empty (so all the special properties of Theorem \ref{mai} apply), $S_1$  has zero Euler characteristic and: $S_1$ is diffeomorphic to $S_2$ or there is a closed timelike curve inside $\bar{O}$.
\end{enumerate}
\end{theorem}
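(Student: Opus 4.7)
The plan is to deduce this corollary directly from Theorem \ref{laz} by taking the ``pathological region'' to be all of $O$. I would set $G:=O$ (so that $K:=\bar G=\bar O$) and $C:=\emptyset$, and check that the hypotheses of Theorem \ref{laz} are all met. The relative compactness of $O$ gives that $\bar O$ and a fortiori $K\cap S_1=S_1$ are compact. Since $S_1,S_2\subset\bar O=K$, the sets $S_1\setminus K$ and $S_2\setminus K$ are both empty, so the flow of $V$ trivially provides a diffeomorphism between them. Likewise, condition (*) becomes $\partial G=\partial O=S_1\cup S_2=(K\cap S_1)\cup(K\cap S_2)\cup\emptyset$, which is true by assumption, and the requirement $C\subset\mathrm{Int}\,D^+(S_1)$ is vacuous.

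Next, I would translate the three alternatives of Theorem \ref{laz} to the present setting. Alternative (ii) requires $K$ to be non-compact, which is excluded here because $K=\bar O$ is compact; so (ii) cannot occur. Alternative (i) then reads exactly as item 1 of the corollary: $\bar O\subset D^+(S_1)$, every integral curve of $V$ from $S_1$ reaches $S_2$, and $\bar O$ is diffeomorphic to a cylinder $S_1\times[0,1]$, whence $S_1\cong S_2$. The ``conversely'' clause (that every past-directed integral curve from $S_2$ reaches $S_1$) is automatic from the product decomposition $\bar O=S_1\times[0,1]$, since the flow of $V$ is transverse to every slice. Alternative (iii) reads as item 2: $H^+(S_1)\subset G=O$ is compact and non-empty, $S_1$ is compact with vanishing Euler characteristic, and either $S_1\cong S_2$ or a closed timelike curve lies in $K=\bar O$; the full conclusions of Theorem \ref{mai} then apply to $H^+(S_1)$.

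Finally, I would check that the two alternatives are genuinely mutually exclusive: in case 1 one has $\bar O\subset D^+(S_1)$, which forces $H^+(S_1)\cap O=\emptyset$, incompatible with the non-empty compact horizon asserted in case 2.

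There is no real analytical obstacle here, as this is a pure specialization of Theorem \ref{laz}: the only thing to verify carefully is the bookkeeping of the hypotheses (in particular (*)) and the ruling out of alternative (ii) via compactness of $\bar O$. The substantive content---smoothness and the structural statements about $H^+(S_1)$---is entirely inherited from Theorems \ref{laz} and \ref{mai}.
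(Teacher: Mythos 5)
Your proposal is correct and is exactly the paper's route: the text derives this statement as a corollary of Theorem \ref{laz} by setting $G=O$ and $C=\emptyset$, with alternative (ii) excluded by the compactness of $\bar{O}$. Your additional bookkeeping (verifying (*) trivially, the ``conversely'' clause from the product structure, and mutual exclusivity) is sound and consistent with the paper's intent.
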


\begin{proof}[of Theor.\ \ref{laz}]
First let us show that if  $D^+(S_1)\cap K$ is non-compact then (ii) applies. Take a sequence $q_n \in   D^+(S_1)\cap K$  escaping every compact set, and consider the integral curves of $V$ which start from some point $p_n\in S_1\cap K$ and end at $q_n$. Passing to the limit we find $p\in S_1\cap K$ and a limit curve starting from $p$ (cf.\ \cite{beem96,minguzzi07c}), which is an integral curve of $V$, which escapes every compact set. Thus we can assume without loss of generality that $D^+(S_1)\cap K$ is compact.

Suppose that $K\subset D^+(S_1)$ and hence that $K$ is compact. Since $S_2$ is spacelike and $C\subset \textrm{Int} D^+(S_1)$ we have also $K\backslash S_1\subset \textrm{Int} D^+(S_1)$. Every integral curve of $V$ which passes through some point $q\in K\cap S_2$, once extended to the past, must intersect $S_1$. Conversely, every integral curve of $V$ which passes through some point $p\in K\cap S_1$ once extended to the future must intersect $K\cap S_2$, for otherwise it would be imprisoned in a compact set $K$ where strong causality holds (because $K\backslash S_1\subset \textrm{Int} D^+(S_1)$), a contradiction. Thus the 1-flow map of a suitable vector field obtained rescaling $V$ provides a diffeomorphism between $S_1$ and $S_2$. The same flow provides a homotopy which shows that $\bar{O}=S_1\times [0,1]$ and $K=(S_1\cap K)\times [0,1]$. Thus (i) applies.

Observe that $K\subset D^+(S_1)$ if and only if $K\cap H^+(S_1)=\emptyset$.
One direction has been already shown, for the other direction, since $O$ is connected it is path connected. For any $q\in K$ there is a curve $c: [0,1]\to K\backslash S_1$ such that $c(0)\in C\cup (S_1\cap K)$, $c(1)=q$. But this curve cannot cross $H^+(S_1)$ so the whole curve remains in $\textrm{Int} D^+(S_1) \cup (S_1\cap K)$, hence $K\subset D^+(S_1)$.

It remains to consider the case in which $K\cap H^+(S_1)$ is non-empty and $K  \cap D^+(S_1)$ (and $K\cap H^+(S_1)$) is compact. Since $S_2$ is spacelike $H^+(S_1)\cap G$ is non-empty.
No generator of $H^+(S_1)$ can intersect $C$ otherwise by (*) it would be forced to reach $S_1$ which is impossible since $S_1$ is edgeless.
Let $U=I^{-}(O)$ then on the spacetime $(U,g\vert_U)$, $H^+_U(S_1)\cap G\ne \emptyset$, and $H=H^+_U(S_1)\cap K$ is a compactly generated component of $H_U^+(S_1)$ contained in $G$.  Since $H$ is edgeless and does not intersect $C$ the vector field $V$ induces a homeomorphism between $H$ and a connected compact component $A$ of $S_1\cap K$ which does not intersect $C$. But $H^+(A)=H$, and since $O$ is connected any curve connecting $C$ to $A$ must first escape $\textrm{Int} D^+_U(S_1)$ without entering $D^+(A)$ which is impossible since $H=H^+_U(S_1)\cap K=H^+(A)$. This argument and Theorem \ref{mai} prove that $C=\emptyset$, $S_1$ is compact, $H=H_U^+(S_1)=H^+(S_1)$, it is $C^2$, and homeomorphic with $S_1$.
As $H$ admits a global $C^1$ lightlike vector field, it has vanishing Euler characteristic and so $S_1$ has vanishing Euler characteristic.

If every integral curve of $V$ which intersects $S_1$ intersects $S_2$  and conversely then $S_1$ and $S_2$ are diffeomorphic by the usual 1-flow map of a vector field obtained rescaling $V$. If, on the contrary, there is an integral curve $\gamma \subset \bar{O}$ which does not intersect both $S_1$ and $S_2$ then either it escapes every compact set in some direction,
so (ii) applies and we have finished, or it is future imprisoned in a compact set
and hence accumulates over a point $p\in K$. Since $\gamma$ is  an integral curve of $V$, it  accumulates over a small integral open segment of $V$ passing through $p$. As a consequence, up to an initial segment of $\gamma$, the curve $\gamma$ is contained in both $I^+(p)$ and $I^{-}(p)$ which proves that $p\ll p$, that is there is a closed timelike curve $\eta$ in $K$ passing through $p$ (this curve  cannot pass through $S_1$ or $S_2$ for, as they are spacelike hypersurfaces, it would not be able to return to $p$). (this last argument is similar to that used by Geroch in \cite[Theor.\ 2]{geroch67}). $\square$
\end{proof}

Another theorem on topology change and compact Cauchy horizons can be found in a paper by Chru\'sciel and Isenberg \cite{chrusciel93} where they relied on differentiability conditions on the horizon. If the null convergence condition is added so as to assure the $C^2$ differentiability of the horizon, as proved in this work, their theorem gets included in Theorem \ref{mai}.

Theorem \ref{laz} does not contain causality assumption as it contemplates the possibility of formation of closed timelike curves. The assumption of global hyperbolicity removes altogether the possibility of topology change as any two Cauchy hypersurfaces are diffeomorphic. Still one can obtain relevant result of topological nature. Under global hyperbolicity and asymptotic flatness
 Gannon's singularity theorem \cite{gannon75} establishes that such spacetime would develop singularities, while the so called topological censorship  theorem  \cite{friedman93,galloway95} establishes that those singularities could not be probed by an observer free (to have come from and)
 to go to null infinity.

\subsection{Black holes}
We say that $(\bar{M},\bar{g})$ is a conformal completion of $(M,g)$ if there is a Lorentzian manifold with boundary $\bar{M}$ with interior $M$,  and a $C^3$ function $\Omega\colon \bar{M}\to[0,+\infty)$, such that $\p M=\bar{M}\backslash M=\{p\in \bar{M}: \Omega(p)=0\}$,  $\dd \Omega\ne 0$ on $\p M$, and $\Omega^2 g=\tilde{g}$ on $M$. The condition $\dd \Omega\ne 0$ tells us that $\p M$ is an embedded manifold on any local extension of $\bar{M}$ (regular level set theorem  \cite{lee03}).

We stress that  $\bar{M}$ is not necessarily compact, thus there can  be causal curves on $M$ escaping every compact set which do not have endpoint at $\bar{M}$.
We do not assume that a neighborhood of $M$ at infinity  is Ricci flat (empty). This condition would imply that $\p M$ is lightlike \cite{hawking73} a condition which we also do not assume. We are adopting the broad framework introduced in \cite[Sect.\ 4]{chrusciel01}. Let $\mathscr{I}^+=I^+(M;\bar{M})\cap \p M$.

\begin{definition}
The {\em event horizon} is the set $H=\p I^{-}(\mathscr{I}^+;\bar{M})\cap M$.
\end{definition}

The set $H$ is the boundary of a past set in $\bar{M}$, and as such it is locally Lipschitz, achronal and generated by future inextendible lightlike geodesics, hence it is a past horizon according to our definition.

Let $\gamma$ be a lightlike geodesic with future endpoint in $\mathscr{I}^+$. The affine parameters of $\gamma$ with respect to $g$ and $\tilde{g}$ on $M$ are related by $\dd v=\Omega^{-2} \dd \tilde v$.
Since $\Omega=0$, $\dd \Omega\ne 0$ at $\p M$, it is easy to show, Taylor expanding $\Omega$ at $\gamma\cap \p M$,  that $\gamma$ is future complete.

Thus if we could show that every future generator of $H$ reaches $\mathscr{I}^+$
then we would have by Theorem \ref{pot} that $\theta\ge 0$ on $H$, which would imply that the area  of the horizon section increases according to Theorem \ref{are}. Unfortunately, no generator  $\gamma$ of $H$ starting from $p\in H$ can have future endpoint at $\mathscr{I}^+$, indeed, regardless of the causal type of this hypersurface, it would be possible to deform $\gamma$ into a timelike curve connecting $p$ to a some point of  $\mathscr{I}^+$ in contradiction with the definition of $H$.

The correct argument as already conceived by Hawking \cite[Lemma 9.2.2]{hawking73} is slightly more complex and involves a lightlike geodesic running near the horizon rather than on the horizon. Here one has to impose a condition which assures that the lightlike geodesic will eventually reach $\p M$ so as to take advantage of its future completeness. This condition can be \cite{chrusciel01}
\begin{itemize}
\item[$\star$] There is a neighborhood $O$ of $H$ such that for every compact set $C\subset O$, $C\cap I^{-}(\mathscr{I}^+,\bar{M})\ne \emptyset$, there is a future inextendible (in $M$) geodesic $\eta \subset \p J^+(C,{M})$ with future endpoint at $\mathscr{I}^+$.
\end{itemize}
Since $\eta$ is the generator of a future set it is an achronal lightlike geodesic.
 The reader will not find this assumption in  Hawking's work since he deduces it from stronger but  physically motivated conditions on the asymptotic structure, and in particular from the assumption of {\em asymptotic predictability
} (weak cosmic censorship): $\mathscr{I}^+\subset \overline{D^+(S)}$ where $S$ is a partial Cauchy hypersurface and the closure is in the topology of $\bar{M}$. The reader is referred to \cite{hawking73,chrusciel01} for a discussion of the reasonability of $\star$.

The fact that Hawking's argument on the positivity of $\theta$ can be adapted to the non-smooth case is  non-trivial and has been proved in \cite[Theor.\ 4.1]{chrusciel01}. For completeness and for the reader convenience we sketch the proof.

\begin{theorem} \label{pdi}
Let $H$ be the event horizon in a spacetime $(M,g)$ which satisfies the null convergence condition. Suppose that $\star$ holds true, then $\theta\ge 0$ on $H$.
\end{theorem}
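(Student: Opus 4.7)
The plan is to mirror the proof of Theorem \ref{pot}, using hypothesis $\star$ to supply a future-complete auxiliary null geodesic in place of the completeness of the generators of $H$. Suppose for contradiction that at some Alexandrov point $p_0\in H$ one has $\theta(p_0)<0$, and let $\bar{\mathbf{x}}=\psi^{-1}(p_0)$. Using the second-order Alexandrov expansion of the local graphing function $h$, pick a codimension-one slice $S\subset\tilde H$ transverse to $\tilde n$ at $\bar{\mathbf{x}}$ and a quadratic function $\tilde h$ on $S$ whose Hessian differs from the Alexandrov Hessian of $h|_S$ by $-\delta I$ and which is translated downward by a small $\epsilon>0$. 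For $\delta,\epsilon>0$ sufficiently small the graph $\tilde\Sigma\subset M$ of $\tilde h$ is a $C^2$ spacelike codimension-2 submanifold lying strictly in $I^{-}(H)$ near $p_0$; moreover, applying (\ref{kks}) to $\tilde h$, the null expansion of $\tilde\Sigma$ along its outgoing future null normal $n_+$ (close in direction to the semitangent of $H$ at $p_0$) remains strictly negative at a point $p_\epsilon\in\tilde\Sigma$ near $p_0$.

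Choose a compact neighborhood $C\subset\tilde\Sigma$ of $p_\epsilon$ small enough to lie inside the neighborhood $O$ of $H$ furnished by $\star$. Since $H\subset\overline{I^{-}(\mathscr{I}^+,\bar M)}$ and $\tilde\Sigma\subset I^{-}(H)$, we have $C\subset I^{-}(\mathscr{I}^+,\bar M)\cap O$, so $\star$ produces a future-inextendible null geodesic $\eta\subset\partial J^+(C,M)$ with future endpoint on $\mathscr{I}^+$. Because $\Omega$ vanishes to first order on $\partial M$, the conformal-rescaling argument sketched just before the statement shows that $\eta$ is future complete in $(M,g)$. The past endpoint $q$ of $\eta$ lies in $C$, and $\eta$ leaves $\tilde\Sigma$ orthogonally along one of the two future-directed null normals. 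The ingoing null normal $n_{-}$ would force $\eta$ to cross $H$ into $M\setminus I^{-}(\mathscr{I}^+,\bar M)$, from which $\mathscr{I}^+$ cannot be reached; hence $\eta$ must depart along $n_+$, with initial expansion arbitrarily close to $\theta(p_0)<0$.

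Finally, the Raychaudhuri equation (\ref{ray}) along $\eta$, with strictly negative initial expansion and the null convergence condition, forces $\theta\to-\infty$ at some finite affine parameter $s_0$, producing a focal point of $\tilde\Sigma$ on $\eta$; future completeness of $\eta$ ensures $s_0$ is actually attained. Past $s_0$, the classical first-variation argument produces a timelike curve from $\tilde\Sigma$, and hence from $C$, to a point of $\eta$ past $s_0$, contradicting $\eta\subset\partial J^+(C,M)$. Thus $\theta(p_0)\ge 0$ at every Alexandrov point $p_0\in H$, and since Alexandrov points form a set of full $\mathcal{L}^n$-measure in $\tilde H$, we conclude $\theta\ge 0$ $\mu_H$-almost everywhere on $H$. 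The main technical subtlety is the identification of $n_+$ as the unique admissible initial direction for $\eta$, which uses the definition of $H$ as the boundary of $I^{-}(\mathscr{I}^+,\bar M)$ in $M$ together with the continuity of null-normal directions on spacelike codimension-2 submanifolds near $H$.
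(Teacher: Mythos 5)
Your overall strategy is the same as the paper's (which itself sketches \cite[Theor.\ 4.1]{chrusciel01}): perturb the horizon near an Alexandrov point with $\theta<0$ into a smooth spacelike codimension-2 surface meeting $I^{-}(\mathscr{I}^+)$ with negative outgoing expansion, invoke $\star$ to get a complete null geodesic to $\mathscr{I}^+$, and derive a contradiction by focusing. However, your construction of the comparison surface has a genuine gap. You lower the Hessian and translate downward, so that the \emph{entire} compact set $C\subset\tilde\Sigma$ lies in $I^{-}(H)\subset I^{-}(\mathscr{I}^+)$. Condition $\star$ then only guarantees \emph{some} generator $\eta\subset\p J^+(C,M)$ reaching $\mathscr{I}^+$; you have no control over where its past endpoint $q$ sits in $C$. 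In particular $q$ may lie on the relative boundary (edge) of the codimension-2 surface $C$, and generators of $\p J^+(C)$ emanating from the edge form a one-parameter family of null directions at each edge point, none of which need be orthogonal to $\tilde\Sigma$. For such an edge generator neither the negative-expansion estimate nor the focal-point argument applies, and the contradiction evaporates. This is not a removable technicality: since every point of your $C$, edge included, can send causal curves to $\mathscr{I}^+$, the geodesic supplied by $\star$ can perfectly well be an edge generator (think of a small disk just outside a Schwarzschild horizon).

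The paper's construction is designed precisely to close this hole: the perturbed surface $S_n$ is a ``lens'' whose Hessian is \emph{increased} by $\frac{1}{n}I$ while its center is displaced to the past by $1/n^2$, so that $S_n$ dips below $H$ only near its center (which makes $S_n\cap I^{-}(\mathscr{I}^+)\ne\emptyset$, so $\star$ applies) while $\p S_n\subset I^{+}(S)$. Since $I^{+}(H)\cap J^{-}(\mathscr{I}^+)=\emptyset$ (a point of $H=\p I^{-}(\mathscr{I}^+)$ cannot lie in the open set $I^{-}(\mathscr{I}^+)$), no causal curve from $\p S_n$ reaches $\mathscr{I}^+$, which forces $q$ into the interior of $S_n$, where $\eta$ must leave orthogonally and the outgoing expansion is negative. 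Two secondary points: you establish $\theta_+<0$ only at the single point $p_\epsilon$ rather than on all of $C$ (fixable by continuity of the expansion of the $C^2$ surface, as the paper does by ``reducing $S_n$''), and your exclusion of the ingoing normal via ``$\eta$ would cross $H$'' is asserted rather than proved — you would need to show that the ingoing null geodesic from a point of $I^{-}(H)$ actually enters $I^{+}(H)$, which does not follow merely from $n_-$ being transverse to the horizon.
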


\begin{proof}[Sketch]
By contradiction , suppose that there is an Alexandrov point $p\in H$  for which $\theta(p)<0$. Let $W$ be a local timelike hypersurface passing through $p$ generated by the smooth timelike vector field $V$ which we used in the local description of Sect.\ \ref{kdp}. In the coordinate statement given below we use the coordinates introduced there. The point $p\in H$ is also an Alexandrov point for $S:=H\cap W$. Let $S_n$ be a sequence of smooth codimension 2 manifold on $W$ approximating $S$. We denote with $p_n$ the unique point of $S_n$ such that the flow of $V$ sends $p_n$ to $p$. The sequence $S_n$ is built in such a way that (a) $x^0(p)-x^0(p_n)=1/n^2$, (b) the Hessian $a_n(p)$ of the graph function of $S_n$ at $p$ satisfies $a_n(p)=a+\frac{1}{n} I$. Since $\theta(p)<0$, for sufficiently large $n$ we have $\theta_n(p)<0$ where $\theta_n$ is the expansion of the lightlike congruence contained in $\p J^+(S_n)$.
Reducing the manifold $S_n$ if necessary we can assume that $\theta_n<0$ on $S_n$.
Let $O$ be the open subset defined by $\star$. We have for sufficiently large $n$,  $S_n\subset O$ and $\p S_n\subset I^+(S)$. Let $n$ be one such large value. Let us consider the compact set $\bar{S}_n$, since from (a) $S_n\cap I^{-}(\mathscr{I}^+, \bar{M})\ne \emptyset$ we have by $\star$ that there is a lightlike geodesic $\eta$ connecting some $q\in \bar{S}_n$ to $\mathscr{I}^+$. Now, $q$ cannot belong to $\p S_n$ otherwise $\p S_n\subset I^+(S)$ would give that $S$ and hence $H$ has some point in the chronological past of $\mathscr{I}^+$. Thus $q\in S_n$ and since $\theta(q)<0$ and $\eta$ is complete we get a contradiction. $\square$
\end{proof}

From Theorem \ref{are} we have

\begin{theorem}(Area theorem for event horizons) \label{kdd}
Let $H$ be an event horizon in a spacetime $(M,g)$ which satisfies the null convergence condition. Let $\tau\colon M \to \mathbb{R}$ be a Lipschitz time function on $M$.  Let $D(t)=\{\tau^{-1}((-\infty,t))\}\cap H$ and let $A(t):=\vert A(\p_+D(t),1)\vert$ be the area of the horizon, which is well defined for almost every $t$.
If $\star$ holds then for almost every $t_1, t_2\in \tau(M)$, $t_1< t_2$, we have
\[
A(t_1)\le A(t_2),
\]
 where if the equality holds then $\tau^{-1}((t_1,t_2))\cap H$ has at least the same regularity as the metric ($C^3$). In particular, if the topologies of $\tau^{-1}(t_1)\cap H$ and $\tau^{-1}(t_2)\cap H$ differ, for instance  if they have a different number of components, then the inequality is strict.
\end{theorem}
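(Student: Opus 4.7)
The plan is to apply the general area theorem (Theorem~\ref{are}) with $\varphi=1$ to the horizon slab $D:=\tau^{-1}((t_1,t_2))\cap H$, combining it with the non-negativity of the expansion supplied by Theorem~\ref{pdi}. For almost every pair $t_1<t_2$ the level sets $\tau^{-1}(t_i)\cap H$ have locally finite perimeter in $H$ by Corollary~\ref{lls}, so $D$ has locally finite perimeter. Since $\tau$ is a Lipschitz time function, $\dd\tau$ is almost everywhere future-directed timelike, and any future-directed lightlike semitangent $n$ therefore satisfies $n\cdot\dd\tau>0$. Invoking Proposition~\ref{alp}, on $\tau^{-1}(t_2)\cap H$ the outward normal to $D$ pairs positively with the vector field $v$ of Eq.~(\ref{kwd}), so this slice belongs to $\p_+D$, while on $\tau^{-1}(t_1)\cap H$ the pairing is negative and the slice belongs to $\p_-D$. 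Thus $|A(\p_+D,1)|=A(t_2)$ and $|A(\p_-D,1)|=A(t_1)$.

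Theorem~\ref{pdi}, applied under the null convergence condition and $\star$, gives $\theta\ge 0$ almost everywhere on $H$. Substituting into (\ref{dik}) yields
\[
0\le \int_D\theta\,\mu_H \le A(t_2)-A(t_1),
\]
which is the desired monotonicity. If equality $A(t_1)=A(t_2)$ occurs, then $\int_D\theta\,\mu_H=0$ together with $\theta\ge 0$ forces $\theta=0$ almost everywhere on $D$, and the equality case of Theorem~\ref{are} forces $\mu^s_{ij}=0$ on $\tilde D$. These two conditions are precisely the hypotheses of Theorem~\ref{app}, which then yields that $H$ has (on $D$) the regularity of the metric, hence at least $C^3$.

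For the topology clause, suppose the slice topologies of $\tau^{-1}(t_1)\cap H$ and $\tau^{-1}(t_2)\cap H$ differ yet $A(t_1)=A(t_2)$: then $H$ is $C^3$ on a neighborhood of $\overline{D}$, and any smooth future-directed timelike vector field $V$ on $M$ restricts to a $C^2$ vector field on $H$ satisfying $V\cdot\dd\tau>0$, whose flow carries one slice diffeomorphically onto the other. The resulting diffeomorphism contradicts the assumed topological difference, so the inequality must be strict in that case.

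The principal technical obstacle is that $H$, and hence $D$, may be non-compact, whereas Theorem~\ref{are} is stated for relatively compact domains. My remedy is to choose a locally Lipschitz proper exhaustion function on $H$ and apply the coarea theorem to extract an exhaustion $\{W_m\}$ of $H$ by relatively compact open sets whose side boundaries $\p W_m\cap D$ have finite perimeter. Applying Theorem~\ref{are} to each $D\cap W_m$ and passing to the limit $m\to\infty$ by monotone convergence for $\int_{D\cap W_m}\theta\,\mu_H$ and for the area measures on $\p_\pm D\cap W_m$ delivers the inequality for the full slab, provided the $W_m$ are chosen so that the side-boundary contribution enters with a controlled sign and drops out in the limit.
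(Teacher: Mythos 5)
Your derivation of the monotonicity and of the equality/regularity clause follows the paper's route: Theorem \ref{are} with $\varphi=1$ on the slab $D=\tau^{-1}((t_1,t_2))\cap H$, identification of $\p_\pm D$ with the two slices via Prop.\ \ref{alp} (which is indeed the source of the ``almost every $t_1,t_2$'' restriction), $\theta\ge 0$ from Theorem \ref{pdi}, and Theorem \ref{app} in the equality case. That part is sound. The topology clause, however, contains a genuine error. You propose to carry one slice onto the other by the flow of a smooth future-directed timelike field $V$ ``restricted to $H$''. A timelike field does not restrict to a vector field on a null hypersurface: $T_pH$ is a null hyperplane containing no timelike vectors, so the flow of $V$ leaves $H$ immediately and cannot map $\tau^{-1}(t_1)\cap H$ into $H$ at all. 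Worse, your argument makes no use of the $C^3$ regularity just established, so as written it would show that any two slices of any event horizon are diffeomorphic, which is false (horizon components can merge). The correct argument, which is the one in the paper, flows along the semitangent field $n$: once the slab is $C^2$ it contains no generator endpoints (Theorem \ref{jpf}), so $n$ is a continuous \emph{tangent} field on the slab whose integral curves --- the generators --- run from the $t_1$-slice to the $t_2$-slice and provide a homeomorphism between them; a change of slice topology therefore forces the slab not to be $C^2$, whence strict inequality by the equality case of Theorem \ref{are}.

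A secondary point: your exhaustion device for non-compact $D$ addresses a real issue (Theorem \ref{are} is stated for relatively compact domains, and the paper is silent on this), but the step on which everything hinges --- that the side-boundary contributions of $\p W_m\cap D$ ``enter with a controlled sign and drop out in the limit'' --- is precisely what needs proof and is not automatic: generators may cross the lateral boundary of an arbitrary exhaustion in either direction, so $S_v(x,\nu)$ has no a priori sign there, and nothing forces these terms to vanish as $m\to\infty$. One must build the exhaustion adapted to the horizon (for instance cutting along hypersurfaces ruled by the flow of $V$ as in the proof of Theorem \ref{are}, and comparing a compact piece of the $t_1$-slice with its image under the generator flow) before the limit can be taken.
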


\begin{remark}
By Theorem \ref{ljs} one could apply the divergence theorem for every choice of $t_1<t_2$. The `almost every' restriction is due to the fact that
the divergence theorem involves $A(\p_- (H\backslash D(t_1)),1)\vert$ while we want to use $A(t_1)$ in its place, and they are equal only for almost every $t_1<t_2$.
\end{remark}

\begin{proof}
The inequality  follows from the just given argument and from the comment after Prop.\ \ref{alp}. Suppose that the equality holds, then by the area theorem \ref{are} and the inequality $\theta\ge 0$ (Theor.\ \ref{pdi}) we have that locally $\mu_{ij}^s=0$ and $\theta=0$ which implies by Theor.\ \ref{app} that the horizons has the same regularity as the metric.

The last statement follows from this observation: if the horizon is $C^2$ on an open set then the generators have no endpoint there, thus the flow of $n$ is well defined between times $t_1$ and $t_2$ and provides a homeomorphism between $H\cap \tau^{-1}(t_1)$ and $H\cap \tau^{-1}(t_2)$ thus they have the same topology. As a consequence, if the topologies of these slices differ then $\tau^{-1}((t_1,t_2))\cap H$ cannot be $C^2$ which by Theorem \ref{are} implies that the inequality is strict. $\square$
\end{proof}

This result was obtained by Hawking under tacit differentiability assumptions on the horizon \cite[Prop.\ 9.2.7]{hawking73}, and then generalized to the non-differentiable case in \cite{chrusciel01}. It is also  generically referred as the {\em area theorem}.
By Theorem \ref{are} the existence of a time function is inessential and serves only to identify the event horizon slices.

The following result has been regarded as a simple corollary of the area theorem, but can in fact be proved without imposing the null convergence condition (compare with \cite[Theor.\ 4.11]{chrusciel08}).

\begin{theorem}
Let $H$ be a horizon generated by a   lightlike Killing field which is nowhere vanishing  on $H$, then the horizon has the same regularity as the metric.
\end{theorem}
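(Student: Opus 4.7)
The plan is to upgrade the regularity of $H$ in stages, using the smoothness of $K$ to eliminate endpoints and then running a quasi-linear elliptic bootstrap driven by the Killing identity $\nabla_\mu K^\mu\equiv 0$.

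First, since $K$ is smooth and nowhere zero on $H$, every integral curve of $K$ is an open smooth arc contained, by hypothesis, in $H$ as a generator. Hence every point of $H$ belongs to the interior of a generator, so $H$ has no endpoints. Theorem~\ref{jpf}(3) then yields that $H$ is $C^1$, and Theorem~\ref{pdg} strengthens this to $C^{1,1}_{\rm loc}$; in particular the local graphing function $h$ of Sect.~\ref{kdp} is $C^{1,1}$ and its singular Hessian measure $\mu^s_{ij}$ vanishes.

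Next I would choose $n=K$ in Eq.~(\ref{kks}) and exploit Killing's equations. Since $\nabla_\mu K^\mu\equiv 0$ identically on $M$ and $\nabla_K K=\kappa K$ on $H$ (defining the surface gravity $\kappa$), Proposition~\ref{jso}, applied at the Alexandrov points of $h$ (a full-measure set, since $h\in C^{1,1}$), gives $\theta=K^\mu_{;\mu}\vert_H-\kappa=-\kappa$ a.e. Moreover $\kappa$ extends smoothly to a neighborhood of $H$ via $\tilde\kappa:=g(\nabla_K K,T)/g(K,T)$ for any smooth timelike $T$, the denominator being nonzero because $K$ is null and $T$ timelike. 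The right-hand side of the PDE~(\ref{mld}) therefore takes the form $F({\bf x},h)$ with $F=\sqrt{-\vert g\vert}\,\tilde\kappa/g(K,V)$ smooth (respectively analytic) in $(x^0,{\bf x})$.

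Finally I would run an elliptic bootstrap on $\partial_i(G^{ij}({\bf x},h)\partial_j h-f^i({\bf x},h))=F({\bf x},h)$. Starting from $h\in C^{1,\alpha}$ for every $\alpha\in(0,1)$, Schauder-type theory for quasi-linear uniformly elliptic equations, of the kind already invoked in the proof of Theorem~\ref{app}, upgrades $h$ to $C^{2,\alpha}$; iterating, once $h\in C^{k,\alpha}$ the composed coefficients and $F$ are $C^{k,\alpha}$ and elliptic regularity yields $h\in C^{k+1,\alpha}$, until one reaches the regularity of the metric, with the analytic case following from the Petrowsky--Morrey theorem. The main obstacle is the pointwise identification $\theta=-\kappa$ a.e., since Proposition~\ref{jso} is stated for $C^2$ horizons whereas at this stage only $h\in C^{1,1}$ is known; the resolution is that the computation behind Prop.~\ref{jso} is pointwise in the values of $K$ and $\nabla K$ and so applies verbatim at each Alexandrov point, or equivalently one can identify $\theta$ distributionally from Eq.~(\ref{mkd}) using $\mu^s_{ij}=0$.
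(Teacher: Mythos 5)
Your opening step (every point of $H$ is interior to a generator, hence no endpoints, hence $C^1$ by Theorem \ref{jpf}(3) and $C^{1,1}_{loc}$ by Theorem \ref{pdg}, hence $\mu^s_{ij}=0$) is exactly the paper's route. The genuine gap is the identification $\theta=-\kappa$. The formula $\theta=n^\mu_{;\mu}\vert_H-\kappa$ of Eq.\ (\ref{the}) is valid only for a \emph{lightlike pregeodesic extension} of $n$ off the horizon, as constructed in Prop.\ \ref{jsp}; a Killing field is generically non-null away from $H$ (timelike on one side, spacelike on the other, as for Schwarzschild), so you cannot substitute $K^\mu_{;\mu}=0$ into that formula. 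In the projection formula $\theta=h^{\alpha\beta}n_{\alpha;\beta}$ with $h^{\alpha\beta}=g^{\alpha\beta}+n^\alpha m^\beta+m^\alpha n^\beta$, the term $K^\alpha m^\beta K_{\alpha;\beta}=\tfrac{1}{2}\,m^\beta\p_\beta\bigl(g(K,K)\bigr)$, which vanishes for a lightlike extension, equals $+\kappa$ for a Killing field (this is the usual characterization of surface gravity, $\nabla^\beta(K^\alpha K_\alpha)=-2\kappa K^\beta$ on $H$), and it cancels the $-\kappa$. The correct value follows at once from the intrinsic definition of Sect.\ \ref{ral}: $\theta=\mathrm{tr}\, b=\sum_i g(e_i,\nabla_{e_i}K)=\sum_i K_{(\alpha;\beta)}e_i^\alpha e_i^\beta=0$ by the Killing equation — Killing horizons are non-expanding. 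This is what the paper uses, and it lets Theorem \ref{app} apply verbatim with vanishing right-hand side; your version instead feeds the elliptic bootstrap an inhomogeneity $-\kappa\neq 0$ that $h$ does not actually satisfy, so the regularity conclusion does not follow as written.

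A secondary point: even had $\theta$ been a nonzero smooth function of $({\bf x},h)$, Theorem \ref{app} as stated gives only $C^{1,1}_{loc}$ for locally bounded $\theta$; the Schauder bootstrap to full metric regularity that you sketch is plausible but would have to be supplied, since the paper proves the $C^{l,\alpha}$ conclusion only for $\theta=0$. Once $\theta=0$ is in hand, neither of these extra steps is needed and the proof closes in one line.
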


\begin{proof}
By assumption $H$ is sent into itself by the local flow of $k$, which means that through each point of $H$ passes an integral curve of $k$, necessarily an achronal lightlike geodesic, hence $H$ is $C^1$. Observe that $k$ is a semitangent field on $H$, that is we can set $n:=k$, and that by the Killing condition of $k$ on $H$ the expansion $\theta$ vanishes. Thus by Theorem \ref{app} the horizon is as regular as the metric. $\square$
\end{proof}

\subsection{Cosmic censorship and horizon rigidity}
It is expected that generically the maximal globally hyperbolic Cauchy development of matter and gravitational fields starting from appropriate Cauchy data on a spacelike hypersurface should  lead to a spacetime which cannot be further extended. A precise definition of this hypothesis, termed {\em strong cosmic censorship conjecture}, will not be particularly important for our purposes.  Since the conjecture asks to prove that generically horizons do not form one could try to prove, to start with, that generically compact Cauchy horizons do not form.

We are now going to prove this result which we state in the past version, although the physical  interesting case is the dual  future version. In our terminology the Einstein equations might or might not include a cosmological constant.

We recall that the {\em dominant energy condition} states that at each event $p$ the endomorphism of $T_pM$, $u^\alpha \to -T^\alpha_{\ \beta} u^\beta$, sends the future non-spacelike cone into itself. The {\em stable dominant energy condition} states that the endomorphism sends the future-directed causal cone into the future-directed timelike cone. It excludes forms of matter that are on the verge of violating the dominant energy condition and, in particular, some aligned pure radiation stress-energy tensor (Type II, \cite{hawking73}).

However, as the energy condition is a condition on the nature of the source, it is reasonable to  demand the stable dominant energy condition only if some source is present, i.e.\ $T\ne 0$.
This observation leads us to the  {\em weakened stable dominant energy condition} which states that the stable dominant energy condition holds  wherever $T\ne 0$. It requires that the above  endomorphism sends the future-directed causal cone into the future-directed timelike cone plus the zero vector \cite{minguzzi14e}. In the physical four dimensional case it  allows diagonal stress energy tensors in which the energy density is larger than the absolute value of the principal pressures.

\begin{theorem}  \label{aoa}
Suppose that the Einstein equations hold on $(M,g)$.
Let  $S$ be a $C^1$ connected  partial Cauchy hypersurface and suppose that (i) the weakened stable dominant energy condition holds, and (ii) $T\ne 0$ somewhere on $S$, then all the components of $H^{-}(S)$ are neither compact nor compactly generated.
\end{theorem}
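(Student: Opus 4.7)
The plan is to argue by contradiction: use Theorem~\ref{mai} to force the stress-energy tensor to vanish on any compactly generated component of $H^{-}(S)$, and then transfer this vanishing back onto $S$ so as to contradict hypothesis (ii).

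First I would check that (i), combined with the Einstein equation $R_{\mu\nu}-\tfrac12 R g_{\mu\nu}+\Lambda g_{\mu\nu}=8\pi T_{\mu\nu}$, implies the null convergence condition. Contracting with $n^\mu n^\nu$ for a future null $n$ kills the cosmological and scalar-curvature terms because $g(n,n)=0$, leaving $\mathrm{Ric}(n,n)=8\pi T(n,n)$. The weakened stable DEC yields $T(n,n)\ge 0$, because at each point either $T=0$, or $-T(n,\cdot)^\sharp$ is strictly future-timelike and its pairing with the future null $n$ gives $T(n,n)>0$. Now assume for contradiction that some component $H$ of $H^{-}(S)$ is compact or, more generally, compactly generated (compactness is a special case). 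Since $S$ is connected and the null convergence condition holds, Theorem~\ref{mai} applies and yields $H=H^{-}(S)$ with $H$ and $S$ compact, $H$ at least $C^{3}$, and $\mathrm{Ric}(n,n)=0$ along the semitangent field on $H$. Combining this with the identity above gives $T(n,n)=0$ pointwise on $H$, and the weakened stable DEC then upgrades this to $T\equiv 0$ on $H$: if $T(p)\ne 0$ at some $p\in H$, the strict DEC at $p$ would force $T(n,n)>0$, contradicting $T(n,n)=0$.

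The main obstacle is the final step: leveraging $T\equiv 0$ on $H$ against hypothesis (ii). The plan is to use the compactness of $\overline{D^{-}(S)}$, whose past boundary is the compact null hypersurface $H$ and whose future boundary is the compact $S$, together with the strong rigidity on $H$ granted by Theorem~\ref{mai} (vanishing of $b$, $\bar\sigma$, $\bar R$, $\bar C$, so that $R_{\mu\nu}=\Lambda g_{\mu\nu}$ on $H$), to propagate the vanishing of $T$ from $H$ onto $S$. Concretely, pick a future-timelike vector field $V$ on a neighborhood of $\overline{D^{-}(S)}$ and form the current $J^\nu=V^\mu T_{\mu}{}^{\nu}$; the conservation identity $\nabla_\mu T^{\mu\nu}=0$ gives $\nabla_\nu J^\nu=T^{\mu\nu}\nabla_{(\mu} V_{\nu)}$, and Stokes' theorem over $\overline{D^{-}(S)}$ produces a bulk-versus-boundary identity in which the contribution on $H$ drops out because $T=0$ there. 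If $V$ is chosen so that the bulk integrand either vanishes or has a single sign — for instance by exploiting a Killing-type extension of the null generator field off the compact smooth horizon, in the spirit of the machinery developed for the weakened stable DEC in~\cite{minguzzi14e} — the weakened stable DEC pins down the sign of the $S$-boundary integrand and forces $T\equiv 0$ almost everywhere on $S$, contradicting~(ii). $\square$
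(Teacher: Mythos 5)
Your overall strategy is the same as the paper's, just run in the direct rather than the contrapositive direction: the paper's proof is three lines long and observes that Theorem \ref{mai} forces $R(n,n)=0$ on the compact horizon, while the weakened stable dominant energy condition together with the conservation theorem of \cite[Prop.~3.5]{minguzzi14e} turns hypothesis (ii) into $T(n,n)=R(n,n)\ne 0$ somewhere on $H^{-}(S)$, a contradiction. Your first half is correct and in places more careful than the paper: you verify explicitly that the weakened stable dominant energy condition implies the null convergence condition (so that Theorem \ref{mai} really applies), and your deduction that $T(n,n)=0$ plus the weakened stable dominant energy condition forces $T\equiv 0$ on $H$ is sound, since at a point where $T\ne 0$ the endomorphism sends the future null $n$ to a future timelike vector, whence $T(n,n)>0$.

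The gap is in the step you yourself flag as the main obstacle: transferring $T\equiv 0$ from $H$ to $S$. The paper does not prove this step; it cites the conservation theorem \cite[Prop.~3.5]{minguzzi14e}, i.e.\ Hawking's conservation theorem adapted to the weakened stable dominant energy condition. Your proposed replacement --- choosing $V$ so that the bulk term $T^{\mu\nu}\nabla_{(\mu}V_{\nu)}$ vanishes or has a definite sign --- would require $V$ to be (conformal) Killing on all of $\overline{D^{-}(S)}$, and nothing in Theorem \ref{mai} supplies that: the rigidity statement only gives a lightlike field on $H$ with $\nabla_X n\propto n$ for $X$ tangent to $H$, not a timelike Killing field on a neighborhood of $D^{-}(S)$; note also that $\overline{R}=0$ does not give $R_{\mu\nu}=\Lambda g_{\mu\nu}$ on $H$, since it only controls $R(X,n)n$ for $X\in TH$. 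The correct mechanism is not a signed bulk term but an energy estimate: foliate $\overline{D^{-}(S)}$ by level sets of a time function, use the dominant energy condition to bound all components of $T$ by the energy density $T(V,V)$, so that the bulk term is bounded by a constant times the flux $E(t)$ through the slices, and conclude $E\equiv 0$ from its vanishing on the horizon by a Gronwall argument. If you either carry out that estimate or simply cite the conservation theorem as the paper does, your proof closes.
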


Condition (ii) states that there is some form of energy content on spacetime, that is, spacetime is not empty. It can also be regarded as a kind of genericity condition.

\begin{proof}
Suppose that $H^{-}(S)$ has a compactly generated component, then by Theorem \ref{mai} $H^{-}(S)$ has just one compact $C^2$ component and $S$ is compact. By the weakened stable dominant energy condition and by the conservation theorem as clarified and improved in \cite[Prop.\ 3.5]{minguzzi14e} $T(n,n)\ne 0$ somewhere on $H^{-}(S)$ where $n$ is the semitangent to the horizon, which is impossible because $T(n,n)=R(n,n)$ and by Theorem \ref{mai}, $R(n,n)=0$ on the horizon. $\square$
\end{proof}

\begin{corollary}
Let $(M,g)$ be a spacetime which satisfies the Einstein equations and such that the weakened stable dominant energy condition holds.
Let  $S$ be a connected  partial Cauchy hypersurface such that $H^{-}(S)$ is compact, then the stress-energy tensor vanishes on $\overline{D(S)}$, so the vacuum Einstein equations hold on it, and $S$ is compact and with zero Euler characteristic.
\end{corollary}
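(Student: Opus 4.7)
The plan is to reduce everything to Theorem~\ref{aoa} and its underlying conservation mechanism, and then propagate the vanishing of $T$ from $H^{-}(S)$ to the whole of $\overline{D(S)}$. First I would note that compactness of $H^{-}(S)$ automatically implies that it is compactly generated: its generators, being future-inextendible causal curves trapped in the compact set $H^{-}(S)$, are future imprisoned there. Applying Theorem~\ref{mai} to the connected partial Cauchy hypersurface $S$ then yields simultaneously several of the required conclusions: $H^{-}(S)$ is a single compact $C^{3}$ component, $S$ is itself compact with zero Euler characteristic, and $Ric(n,n)=0$ identically on $H^{-}(S)$ for any lightlike semitangent $n$.

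Next I would invoke the Einstein equations with cosmological constant $\Lambda$, namely $R_{\mu\nu}-\tfrac{1}{2}g_{\mu\nu}R+\Lambda g_{\mu\nu}=8\pi T_{\mu\nu}$, and contract with $n^{\mu}n^{\nu}$. Since $g(n,n)=0$ the $g_{\mu\nu}$ terms drop out and one obtains $Ric(n,n)=8\pi T(n,n)$, so $T(n,n)\equiv 0$ on $H^{-}(S)$. The weakened stable dominant energy condition then forces $T\equiv 0$ pointwise on $H^{-}(S)$: at any point where $T\ne 0$ the endomorphism $u^{\alpha}\to -T^{\alpha}_{\ \beta}u^{\beta}$ would send the future null cone strictly into the future timelike cone, so $T(n,n)>0$ would hold for every lightlike $n$ there, contradicting the identity just obtained.

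The remaining and main step is propagation. I would first run Prop.~3.5 of \cite{minguzzi14e} in its contrapositive form, exactly as in the proof of Theorem~\ref{aoa}: vanishing of $T(n,n)$ along all of $H^{-}(S)$ rules out the existence of any point on $S$ where $T\ne 0$, hence $T\equiv 0$ on $S$. Then, since $D(S)$ is globally hyperbolic with $S$ as a Cauchy hypersurface, since $\nabla^{\mu}T_{\mu\nu}=0$ as a consequence of the twice-contracted Bianchi identity, and since the dominant energy condition holds, a classical propagation-of-support argument in the spirit of \cite[Prop.~4.3.1]{hawking73} yields $T\equiv 0$ throughout $D(S)$. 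Concretely, for a smooth future-directed timelike $V$ the current $J^{\mu}=T^{\mu\nu}V_{\nu}$ is causal by DEC and satisfies $\nabla_{\mu}J^{\mu}=T^{\mu\nu}\nabla_{\mu}V_{\nu}$; integrating by parts on lens-shaped regions bounded by $S$ and level sets of a Cauchy time function on $D(S)$, and applying a Gronwall-type estimate to the non-negative fluxes through these level sets, turns vanishing flux on $S$ into pointwise vanishing of $T$ everywhere in $D(S)$. Continuity of $T$ on $M$ finally extends the vanishing to the closure $\overline{D(S)}$, whereupon the Einstein equations collapse there to the vacuum equations $R_{\mu\nu}=\Lambda g_{\mu\nu}$. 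The only genuinely new effort beyond Theorem~\ref{aoa} lies in this last propagation step, whose formulation has to be carefully stated so that the contrapositive of Prop.~3.5 and the Hawking-Ellis propagation lemma both apply without extra differentiability assumptions on the matter content.
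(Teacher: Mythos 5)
Your proposal is correct and follows essentially the same route as the paper, whose one-line proof cites the contrapositive of Theorem \ref{aoa} to conclude $T=0$ on $S$, then invokes Hawking's conservation theorem as clarified in \cite{minguzzi14e} to propagate the vanishing to $\overline{D(S)}$, and gets the topological claims from Theorem \ref{mai}; you merely unpack those citations. The only quibble is your intermediate claim that the weakened stable dominant energy condition forces $T(n,n)>0$ strictly wherever $T\ne 0$ (the weakened condition allows the endomorphism to send a causal vector to the zero vector), but this step is redundant since you then apply Prop.\ 3.5 of \cite{minguzzi14e} in contrapositive form exactly as the paper does.
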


\begin{proof}
 By Theorem \ref{aoa} and the  energy condition the stress energy tensor vanishes on $S$, thus by Hawking's conservation theorem  \cite{hawking73} as improved and clarified in \cite{minguzzi14e} the stress energy tensor vanishes on $\overline{D(S)}$. The last statement follows from Theorem \ref{mai}. $\square$
\end{proof}

In the empty case there is still the possibility that a compact Cauchy horizon could form. However, one would expect that this could occur only in very special (non generic) cases as in the highly symmetric Taub-NUT solution. A very interesting result in this direction is due to Moncrief and Isenberg who showed that any analytic compact Cauchy horizon generated by closed lightlike geodesics is actually generated by a lightlike Killing field \cite{moncrief83,isenberg85}. The analyticity condition was subsequently improved to smoothness by Friedrich, R\'acz and Wald \cite{friedrich99}. Joining their main theorem with our smoothness result we obtain

\begin{theorem}
Let $S$ be a compact Cauchy hypersurface in an electro-vacuum smooth spacetime. Then $H^{-}(S)$ if non-empty is smooth, and if its generators are past incomplete and closed lightlike geodesics then there is a neighborhood $U$ of the horizon such that on $J^{+}(H)\cap U$ there is a smooth Killing field which is normal to $H$.
\end{theorem}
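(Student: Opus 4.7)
The strategy is to combine our Theorem \ref{mai} with the theorem of Friedrich, R\'acz and Wald \cite{friedrich99}: the former will give smoothness of $H^{-}(S)$, and the latter, which takes a \emph{smooth} compact Cauchy horizon generated by closed null geodesics as its standing hypothesis, will then provide the Killing field verbatim on a one-sided neighborhood $J^{+}(H)\cap U$.

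First I would verify the null convergence condition. In an electro-vacuum spacetime the stress-energy tensor is that of a free electromagnetic field, and for any null vector $n$ one has $T(n,n)=(4\pi)^{-1}\,\vert \iota_n F\vert^2\ge 0$. The Einstein equations (with or without cosmological constant, whose contribution cancels under null contraction) then give $\textrm{Ric}(n,n)\ge 0$, i.e.\ the null convergence condition holds.

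Second I would show that every non-empty component of $H^{-}(S)$ is compactly generated, so that Theorem \ref{mai} can be applied. Since $S$ is partial Cauchy one has $\textrm{edge}(H^{-}(S))=\textrm{edge}(S)=\emptyset$, hence the generators of $H^{-}(S)$ are future-inextendible lightlike lines which never reach $S$. Because $S$ is compact, a standard limit-curve argument (parallel to those used in the proof of Theorem \ref{one} and in \cite[Lemma 8.5.5]{hawking73}) forbids these future-directed generators from escaping every compact neighborhood of $S$: any such escape would by compactness of $S$ and achronality of $H^{-}(S)$ yield a limit lightlike line that is simultaneously forced into and repelled from a neighborhood of $S$. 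Hence each generator is future-imprisoned in a compact set, $H^{-}(S)$ is compactly generated, and Theorem \ref{mai} yields that $H^{-}(S)$ coincides with a single compact component, is $C^{\infty}$-smooth, generated by complete lightlike lines, with vanishing expansion, shear, and null Weingarten map.

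Third, with smoothness of $H$ at our disposal, the added assumption that its generators are closed lightlike geodesics and past incomplete puts us exactly in the hypotheses of \cite{friedrich99}, whose main theorem constructs the desired smooth Killing field $K$ on a one-sided neighborhood $J^{+}(H)\cap U$, with $K$ lightlike and normal to $H$ along $H$. The chief obstacle in this plan is the second step: while Theorem \ref{mai} does the heavy lifting once compact generation is known, justifying compact generation directly from compactness of $S$ is somewhat delicate because we have no a priori control on completeness or regularity of the generators, and the limit-curve argument has to be carried out in the purely topological category. All other ingredients (the energy condition, the invocation of \cite{friedrich99}) are essentially routine once this step is in place.
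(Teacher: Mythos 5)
Your overall plan is the one the paper intends: the statement is given there with no proof beyond the remark that it follows by ``joining'' Theorem \ref{mai} with the main theorem of \cite{friedrich99}. Your first step is correct and is indeed the implicit reason the null convergence condition holds: writing $\omega_\alpha=F_{\mu\alpha}n^\mu$ one has $\omega_\alpha n^\alpha=0$, so $\omega^\sharp$ lies in the degenerate hyperplane $n^\perp$ and $T(n,n)=\frac{1}{4\pi}\,g(\omega^\sharp,\omega^\sharp)\ge 0$, while a cosmological term drops out under null contraction. Your third step is likewise exactly the intended use of \cite{friedrich99} once smoothness and compactness of the horizon are available.

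The genuine gap is the one you flagged yourself, and it cannot be closed the way you propose. Compactness of $S$ alone does not imply that $H^{-}(S)$ is compactly generated, so no limit-curve argument ``in the purely topological category'' can deliver step 2. Concretely, take a flat spacetime with compact Cauchy surface $S$ (e.g.\ $\mathbb{R}\times T^{3}$ with $S=\{t=0\}$) and delete a point $p_0$ from $I^{-}(S)$: the spacetime is smooth and (electro-)vacuum, $S$ is still a compact acausal edgeless hypersurface, but $H^{-}(S)$ is essentially the punctured past achronal boundary of $p_0$ --- non-compact, with future-inextendible generators that converge toward the deleted vertex and hence eventually leave every compact set, so they are not future imprisoned. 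Your sketch (``a limit lightlike line simultaneously forced into and repelled from a neighborhood of $S$'') has no contradiction to exploit in such a situation; also note that the generators live near $\overline{J^{-}(S)}$, not near $S$, so ``escaping a compact neighborhood of $S$'' is not the relevant dichotomy. The mechanism that actually controls compactness, already used in the proof of Theorem \ref{mai}, is the flow of a complete timelike field, which maps $H^{-}(S)$ homeomorphically onto an open subset $W\subset S$; $H^{-}(S)$ is compact iff $W$ is closed in $S$, and the only obstruction to closedness is precisely that hitting times blow up, i.e.\ that flow lines or generators escape to infinity as above. Hence compact generation (equivalently, here, compactness of the horizon, which is what \cite{friedrich99} themselves assume) must either be taken as a hypothesis, as in Theorem \ref{mai}, or be secured by an additional assumption excluding such holes; it is not a consequence of compactness of $S$. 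With that hypothesis in place, your steps 1 and 3 complete the argument exactly as the paper intends.
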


Under analyticity a similar result holds true, but the Killing field exists all over $U$.
The problem of removing the condition on the closure of the geodesics remains open.

\section{Conclusions}

We have obtained and improved some known results on  the differentiability of horizons giving new and simple proofs based on just its semi-convexity properties.
Then we have reviewed and improved the area theorem offering a novel approach based on the divergence theorem for divergence measure fields. The new version can be applied to a wider family of domains and relates the area increase with the integral of the divergence. The equality case has been studied in detail showing that it corresponds to the vanishing of the singular part of the divergence (or of the Hessian of the horizon graphing function).

The application of some regularity results on quasi-linear elliptic PDEs has lead us to the proof that under the null energy condition every compactly generated Cauchy horizon is smooth and compact, thus solving a  known  open problem in mathematical relativity.

Finally, these results have been applied to different more specific issues: (1) we obtained the first complete proof of Hawking's theorem on the (classical) non-existence of time machines, (2) we obtained some other theorems which showed that an advanced civilization cannot create regions of topology change without breaking the spacetime continuum. These theorems do not use the genericity condition and show that the formation of closed timelike curves do not spoil the conclusion.  (3) We showed how to apply our version of the  area theorem to obtain some classical results on the smoothness of event horizons, and on the increase of the black hole area under merging, (4) we showed that under the weakened stable dominant energy condition and for universes with some energy content, compact Cauchy horizons do not form, a result which supports the strong cosmic censorship. Further, our smoothness result allows us to remove a relevant assumption in the classical theorem by Moncrief and Isenberg on the Killing properties of compact Cauchy horizons.

\section*{Acknowledgments}   This work has been
partially supported by GNFM of INDAM.

\subsection*{A comment on a similar work}
$\empty$\\
This work, without Sect.\ \ref{awn} and Theor.\ \ref{lla}, was posted on the Archive (arXiv: 1406.5919) as the last of a series of three papers (the others being \cite{minguzzi14,minguzzi14e}). The very next day a related work by E.\ Larsson    (arXiv:1406.6194 recently published in \cite{larsson14}), reaching similar conclusions on smoothness of Cauchy horizons and topology change, was also posted (\cite{larsson14} mentions that it also appeared some days before on a public web repository of theses of the KTH Institute, Stockholm). This work and Larsson's follow
quite different lines of proof. This one uses some results from geometric measure theory to prove and strengthen the area theorem, while his develops a strategy based on a flow over the horizon and relies on work initiated in \cite{chrusciel01}.

\def\cprime{$'$}

\end{document}